\crefname{defi}{Definition}{Definitions} 
\crefname{rem}{Remark}{Remarks}
\crefname{exa}{Example}{Examples}
\crefname{thm}{Theorem}{Theorems}
\crefname{cor}{Corollary}{Corollaries}
\crefname{lem}{Lemma}{Lemmas}
\crefname{thmC}{Theorem}{Theorems}
\crefname{lemC}{Lemma}{Lemmas}
\newcommand{\shortlong}[2]{\ifthenelse{\boolean{shortver}}{#1}{#2}}
\newcommand{\N}{\mathbb N}
\newcommand{\trans}[1]{\stackrel{#1}{\rightarrow}}
\definecolor{evegreen}{rgb}{0.3, .95, 0.4}
\definecolor{adamred}{rgb}{1, 0.4, 0.3}
\newcommand{\A}{\mathcal A}
\newcommand{\B}{\mathcal B}
\newcommand{\cl}[1]{#1^\uparrow}
\newcommand{\Ast}{A^*}
\newcommand{\first}{\mathit{first}}
\newcommand{\last}{\mathit{last}}
\newcommand{\dom}{\mathit{dom}}
\newcommand{\FV}{\mathrm{FV}}
\newcommand{\FOp}{\mathrm{FO}^+}
\newcommand{\sem}[1]{\llbracket#1\rrbracket}
\newcommand{\semphi}{\sem{\varphi}}
\newcommand{\sempsi}{\sem{\psi}}
\newcommand{\qr}{\mathrm{qr}}
\newcommand{\psitot}{\psi_{\mathit{tot}}}
\newcommand{\philoc}{\varphi_{\mathit{loc}}}
\newcommand{\phitrans}{\varphi_C}
\newcommand{\psim}{\psi_-}
\newcommand{\psip}{\psi_+}
\newcommand{\phiK}{\varphi_K}
\newcommand{\phiKG}{(\varphi_K)^G}
\newcommand{\Gw}{\mathcal{G}_w}
\newcommand{\GK}{\mathcal{G}_K}
\newcommand\Sq{\square}
\newcommand\Ci{\bigcirc}
\newcommand\ME{\mathit{M}}
\newcommand\C{\mathcal C}
\newcommand\ES{E_\circ}
\newcommand\diam{\lozenge}
\newcommand{\EF}{\mathrm{EF}^+}
\newcommand{\EFn}{\EF_n}
\newcommand{\Eve}[1]{\tikz\node [circle,draw=black, fill=evegreen!60, inner sep=0,minimum size=.34cm] {{\small #1}};}
\newcommand{\Adam}[1]{\tikz\node [draw,fill=adamred!60,inner sep=0, minimum width=.3cm,minimum height=.3cm] {{\small #1}};}
\newcommand\duo[2]{{{#1} \choose {#2}}}
\newcommand\x{\duo ab}
\newcommand\y{\duo bc}
\newcommand\z{\duo ca}
\newcommand\dl\duo
\newcommand{\Dir}{\{\leftarrow,\rightarrow\}}
\newcommand{\type}{\mathit{type}}
\newcommand{\Abase}{A_\mathit{base}}
\newcommand{\Lbase}{L_\mathit{base}}
\newcommand{\LM}{L_M}
\newcommand{\Aamb}{A_\mathit{amb}}
\newcommand{\add}{a_\delta^{\delta'}}
\newcommand{\Sigmabase}{\Sigma_\mathit{base}}
\newcommand{\Sigmaamb}{\Sigma_\mathit{amb}}
\begin{document}

\title{Positive First-order Logic on Words and Graphs\rsuper*}
\author[D.~Kuperberg]{Denis Kuperberg}	
\address{CNRS, LIP, ENS Lyon}	
\email{denis.kuperberg@ens-lyon.fr}  

\titlecomment{{\lsuper*}This paper is an extended version of \cite{Kup21}, the main new content is the section about graphs.}

\begin{abstract}
We study $\FOp$, a fragment of first-order logic on finite words, where monadic predicates can only appear positively. We show that there is an FO-definable language that is monotone in monadic predicates but not definable in $\FOp$. This provides a simple proof that Lyndon's preservation theorem fails on finite structures. We lift this example language to finite graphs, thereby providing a new result of independent interest for FO-definable graph classes: negation might be needed even when the class is closed under addition of edges. We finally show that the problem of whether a given regular language of finite words is definable in $\FOp$ is undecidable.
\end{abstract}

\maketitle

\section{Introduction}
Preservation theorems in first-order logic (FO) establish a link between semantic and syntactic properties \cite{AlGur97,Rossman08}.
We will be particularly interested here in Lyndon's theorem \cite{Lyndon59}, which states that if a first-order formula is monotone in a predicate $P$ (semantic property), then it is equivalent to a formula that is positive in $P$ (syntactic property). Recall that ``monotone in $P$'' means that the formula stays true when tuples are added to $P$, while ``positive in $P$'' means that $P$ does not appear under a negation in the formula.

As it is often the case with preservation theorems, Lyndon's Theorem may not hold when restricting the class of structures considered. Whether Lyndon's Theorem is true when restricted to finite structures was an open problem for 28 years. It was finally shown to fail on finite structures in \cite{AjtaiGurevich87} with a very difficult proof,  using a large array of techniques from different fields of mathematics such as probability theory, topology, lattice theory, and analytic number theory.
A simpler but still quite intricate proof of this fact was later given by \cite{Stol95}, using Ehrenfeucht-Fra\"iss\'e games on grid-like structures equipped with two binary predicates. This construction was slightly modified in \cite{RosenPHD} to treat a signature monotone in every relation symbol.
\medskip

The goal of this paper is to further restrict the class of structures under consideration, starting by allowing only finite words. This will allow us to obtain in turn a better understanding of the problem for finite graphs and general finite structures. We will therefore work in most of this paper with the particular signature associated with finite words: one binary predicate (the total order on positions in the word), and a finite set of monadic predicates (encoding the alphabet). We will call $\FOp$ the fragment of first-order logic on these models, that is syntactically positive in the monadic predicates. Or more simply: $\FOp$ is the negation-free fragment of FO on finite words.\medskip

Our purpose is twofold:
\begin{itemize}

\item Find out whether Lyndon's Theorem holds on finite words, and investigate the relation of this framework with the more general case of finite structures, as well as finite graphs. 
\item From the point of view of language theory: study the natural fragment of $\FOp$-definable languages, in particular given a regular language, can we decide whether it is $\FOp$-definable?
\end{itemize}


Recall that FO on words is a well-studied logic defining a proper fragment of regular languages. This fragment has many equivalent characterizations: definable by star-free expressions, aperiodic monoids, LTL formulas,... \cite{DG08,Schutz,Kamp,McNaughtonPapert}. In particular, given a regular language, it is decidable whether it is FO-definable.

\subsection*{Contributions}
We define a semantic notion of monotone language on alphabets equipped with a partial order: the language is required to be closed under replacement of a letter by a bigger one. This generalizes the monotonicity condition on monadic predicates in the sense of Lyndon. The negation-free logic $\FOp$ can only define monotone languages, and can be seen as a fragment of the standard FO logic on words, in the context of ordered alphabets.

Answering our first objective, we show that Lyndon's Theorem fails on finite words, by building a regular language that is monotone and FO-definable, but not $\FOp$-definable. This proof uses a variant of Ehrenfeucht-Fra\"iss\'e games that characterizes $\FOp$-definability, introduced in \cite{Stol95}, and instantiated here on finite words. As a corollary, using suitable axiomatizations of finite words, we obtain the failure of Lyndon's theorem on finite structures, in a much simpler way than in \cite{AjtaiGurevich87,Stol95}. We show that this can be done regardless of the precise version of Lyndon's preservation theorem that is considered: either monotonicity is with respect to one predicate, or an arbitrary subset of them, or all of them. These different versions are discussed in \cite{Lyndon59,AjtaiGurevich87,Stol95}. The version where all predicates are monotone, can be reformulated as closure under surjective homomorphisms \cite{Lyndon59}.

We also show that our counter-example language can be lifted to finite graphs, both directed and undirected, using a suitable encoding. We thereby show that there exists an FO-definable class of graphs, closed under edge addition, but such that any FO formula defining this class must use some edge predicate under a negation. To our knowledge, this is a new result, and we believe it is of independent interest, as it provides a surprising answer to a natural question about graphs.

Finally, answering our second objective, we show that $\FOp$-definability is undecidable for regular languages. This result is obtained using a reduction from the Turing Machine Mortality problem \cite{Hooper66}. To our knowledge, this is the first example of a natural\footnote{The concept of ``natural'' class is of course a bit informal here, but for instance we can think of it as classes inductively defined via a syntax. More generally, any class of regular languages not purposely defined to have an undecidable membership problem could be considered natural in this context.} class of regular languages for which membership is undecidable.

Although we work in a specialized framework requiring a partial order on the alphabet, we believe that this is not an artificial construct, as it occurs naturally in several settings. On one hand, powerset alphabets -- i.e. letters are sets of atomic predicates, and are naturally ordered by inclusion -- are standard in verification and model theory. It is of interest to remark that all the results of this paper are true in the particular case of powerset alphabets, and that this allows to obtain the failure of Lyndon's theorem on general finite structures.  On the other hand, ordered letters can be used as an abstraction for factors of the input word that are naturally equipped with a partial order, for instance in quantitative generalizations of regular languages such as regular cost functions \cite{CostFun}.

\subsection*{Related works}~\\
\noindent\textbf{Monotone complexity:}

Positive fragments of first-order logic play a prominent role in complexity theory. Indeed, an active research program consists in studying positive fragments of complexity classes. This includes for instance trying to lift equivalent characterizations of a class to their positive versions, or investigating whether a semantic and a syntactic definition of the positive variant of a class are equivalent. See \cite{GrigniSipser} for an introduction to monotone complexity, and \cite{PosP,PosNP} for examples of characterizations of the positive versions of the classes \textsc{P} and \textsc{NP}, in particular through extensions of first-order logic. The aforementioned paper \cite{AjtaiGurevich87}, which was the first to show the failure of Lyndon's theorem on finite structures, does so by reproving in particular an important result on monotone circuit complexity first proved in \cite{Circuits84}: $\text{Monotone-AC}^0\neq\text{Monotone}\cap\mathrm{AC}^0$.
\medskip

\noindent\textbf{Membership in subclasses of regular languages:} 

Related to our undecidability result, we can mention that there are syntactically defined classes of regular languages for which decidability of membership is an open problem. Such classes, also related to FO fragments, are the ones defined via quantifier-alternation: given a regular language, is it definable with an FO formula having at most $k$ quantifier alternations? Recent works obtained decidability results for this question, but only for the first $3$ levels of the quantifier alternation hierarchy \cite{PlaceZeitoun}. For higher levels, the problem remains open. Let us also mention the generalized star-height problem \cite{genstar}: can a given regular language be defined in an extended regular expression (with complement allowed) with no nesting of Kleene star? In this case it is not even known whether all regular languages can be defined in this way.
\medskip

\noindent\textbf{Quantitative extensions:}

First-order logic on words has been extended to quantitative settings, which naturally yields a negation-free syntax, because complementation becomes problematic in those settings.
This is the case in the theory of regular cost functions \cite{CostFun,KV12}, and in other quantitative extensions concerned with boundedness properties, such as MSO+U \cite{MSOU} or Magnitude MSO \cite{Magnitude}.
We hope that the present work can shed a light on these extensions as well.
\medskip

\subsection*{Notations and prerequisites}

If $i,j\in\N$, we note $[i,j]$ the set $\{i,i+1,\dots,j\}$. If $X$ is a set, we note $|X|$ its cardinal, and $\mathcal{P}(X)$ its powerset, i.e. the set of subsets of $X$.
We will note $A$ a finite alphabet throughout the paper. The set of finite words on $A$ is $A^*$.
The length of $u\in A^*$ is denoted $|u|$. If $L\subseteq A^*$ is a language, we will note $\overline{L}$ its complement.
We will note $\dom(u)=[0,|u|-1]$ the set of positions of a word $u$.
If $u$ is a word and $i\in\dom(u)$, we will note $u[i]$ the letter at position $i$, and $u[..i]$ the prefix of $u$ up to position $i$ included.
Similarly, $u[i..j]$ is the infix of $u$ from position $i$ to $j$ included and $u[i..]$ is the suffix of $u$ starting in position $i$ included. 
\medskip

We will assume that the reader is familiar with the notion of regular languages of finite words, and with some ways to define such languages: finite automata (DFA for deterministic and NFA for non-deterministic), finite monoids, and first-order logic.
See e.g. \cite{DG08} for an introduction to all the needed material.
\section{Monotonicity on words}

\subsection{Ordered alphabet}

In this paper we will consider that the finite alphabet $A$ is equipped with a partial order $\leq_A$.
This partial order is naturally extended to words componentwise: $a_1a_2\dots a_n\leq_A b_1b_2\dots b_m$ if $n=m$ and for all $i\in[1,n]$ we have $a_i\leq_A b_i$.

A special case that will be of interest here is when the alphabet is built as the powerset of a set $P$ of \emph{predicates}, i.e. $A=\mathcal P(P)$, and the order $\leq_A$ is inclusion. We will call this a \emph{powerset alphabet}.

Taking $A=\mathcal P(P)$ is standard in settings such as verification and model theory, where several predicates can be considered independently of each other in some position.

Powerset alphabets constitute a particular case of ordered alphabets. The results obtained in this paper are valid for both the powerset case and the general case. Due to the nature of the results (existence of a counter-example and undecidability result), it is enough to show them in the particular case of powerset alphabets to cover both cases. Moreover, the powerset alphabet case allows us to directly establish a link with Lyndon's theorem, which is stated in the framework of model theory. For these reasons, we will keep the more general notion of ordered alphabet for generic definitions, but we will prove our main results on powerset alphabets in order to directly obtain the stronger version of these results.

\subsection{Monotone languages}\label{subsec:clos}

We fix $A$ a finite ordered alphabet.

\begin{defi}
We say that a language $L\subseteq A^*$ is \emph{monotone} if for all $u\leq_A v$, if $u\in L$ then $v\in L$.
\end{defi}

\begin{exa}\label{ex:mon} Let $A=\{a,b\}$ with $a\leq_A b$.
Then $A^*bA^*$ is monotone but its complement $a^*$ is not monotone.
\end{exa}

\begin{defi}
Let $L\subseteq A^*$, the \emph{monotone closure} of $L$ is the language $\cl L=\{v\in A^*\mid\exists u\in L, u\leq_A v\}$. It is the smallest monotone language containing $L$.
\end{defi}

In particular, if $a\in A$, we will note $\cl a$ the set $\{b\in A \mid a\leq_A b\}$.
 
\begin{lem}\label{lem:clA} 
Given an NFA $\B$, we can compute in time $O(|\B|\cdot|A|)$ an NFA $\cl \B$ for the monotone closure of $L(\B)$.
\end{lem}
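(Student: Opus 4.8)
The plan is to construct $\cl\B$ from $\B$ by keeping the same state set, initial states, and final states, and only enlarging the transition relation: for every transition $(q, a, q')$ of $\B$ and every letter $b \in A$ with $a \leq_A b$, we add the transition $(q, b, q')$ to $\cl\B$. Formally, if $\B = (Q, A, \delta, I, F)$, we set $\cl\B = (Q, A, \delta', I, F)$ where $\delta' = \{(q, b, q') \mid \exists a \leq_A b,\ (q, a, q') \in \delta\}$.

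The correctness argument has two inclusions. For $L(\cl\B) \supseteq \cl{L(\B)}$: if $u \in L(\B)$ via an accepting run $q_0 \trans{u[0]} q_1 \trans{u[1]} \cdots \trans{u[n-1]} q_n$, and $u \leq_A v$, then $|u| = |v|$ and $u[i] \leq_A v[i]$ for each $i$, so each transition $(q_i, u[i], q_{i+1})$ of $\B$ gives rise to a transition $(q_i, v[i], q_{i+1})$ of $\cl\B$; hence the same sequence of states is an accepting run of $\cl\B$ on $v$, so $v \in L(\cl\B)$. Since $L(\cl\B)$ contains $\cl{L(\B)}$ and every word of $L(\cl\B)$ is of this form (see below), and $\cl{L(\B)}$ is by definition the smallest monotone language containing $L(\B)$, it suffices to observe that $L(\cl\B)$ is monotone and contains $L(\B)$ (the latter because $\delta \subseteq \delta'$, taking $a = b$). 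Conversely, for $L(\cl\B) \subseteq \cl{L(\B)}$: an accepting run of $\cl\B$ on $v$ uses transitions $(q_i, v[i], q_{i+1}) \in \delta'$, each of which comes from some $(q_i, a_i, q_{i+1}) \in \delta$ with $a_i \leq_A v[i]$; then $u := a_0 a_1 \cdots a_{n-1} \in L(\B)$ and $u \leq_A v$, so $v \in \cl{L(\B)}$.

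For the complexity bound: $\cl\B$ has the same states as $\B$, and for each of the $|\B|$ transitions of $\B$ we enumerate the at most $|A|$ letters $b$ with $a \leq_A b$ (assuming the order $\leq_A$ is given, one comparison per candidate letter), adding at most $|A|$ new transitions. This yields the total time $O(|\B| \cdot |A|)$, as claimed. I do not expect any real obstacle here; the only mild subtlety worth a sentence in the writeup is that the argument relies on $\leq_A$ being available as part of the input and that the componentwise order forces equal lengths, so no transitions across words of different lengths are ever created.
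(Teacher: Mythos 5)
Your construction is exactly the one the paper uses (replace each transition $p\trans{a}q$ by $p\trans{b}q$ for every $b\geq_A a$), and your two-inclusion correctness argument and complexity count match the paper's (briefer) justification. The proposal is correct and takes essentially the same approach.
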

\begin{proof}
We build an NFA $\cl\B$ from $\B$, by replacing every transition $p\trans{a} q$ of $\B$ by $p\trans{\cl a}q$. We use here the standard convention where a transition $p\trans{X}q$ with $X\subseteq A$ stands for a set of transitions $\{p\trans{b}q\mid b\in X\}$. It is straightforward to verify that $\cl\B$ is an NFA for $\cl L$: any run of $\cl\B$ on some word $v$ can be mapped to a run of $\B$ on some $u\leq_A v$. 
\end{proof}

\begin{thm}
Given a regular language $L\subseteq A^*$, it is decidable whether $L$ is monotone. The problem is in \textsc{P} if $L$ is given by a DFA and \textsc{Pspace}-complete if $L$ is given by an NFA, on any alphabet with non-trivial order.
\end{thm}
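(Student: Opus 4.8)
The plan is to recast monotonicity as a language‑inclusion test and then read off the complexity. Since $\cl L$ is the smallest monotone language containing $L$, we have $L\subseteq\cl L$, so $L$ is monotone if and only if $\cl L\subseteq L$, equivalently $L(\cl\B)\subseteq L(\B)$ for any automaton $\B$ recognising $L$, where $\cl\B$ is the linear‑time construction of \Cref{lem:clA}. This already gives decidability, and the two stated bounds come from the two presentations. If $L$ is given by a DFA $\A$, I would complement $\A$ in place, form the product of the NFA $\cl\A$ with the DFA $\overline\A$ — a machine of size $O(|\A|^2\cdot|A|)$ by \Cref{lem:clA} — and test it for emptiness; every step is polynomial, so the problem is in \textsc{P}. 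If $L$ is given by an NFA $\B$, then $L(\cl\B)\subseteq L(\B)$ is an instance of NFA language inclusion, which is in \textsc{Pspace}: guess a separating word letter by letter while maintaining one state of $\cl\B$ and the subset of currently‑reachable states of $\B$; as $\mathrm{NPspace}=\mathrm{Pspace}$ is closed under complement this yields a \textsc{Pspace} algorithm for monotonicity.

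The substance is the \textsc{Pspace}‑hardness for NFA input, which I would obtain by reduction from NFA universality (already \textsc{Pspace}‑complete over the alphabet $\{0,1\}$). Since the order is non‑trivial, $A$ contains a covering pair $a\lessdot b$. The delicate point is that candidate languages live in $A^*$, so monotonicity involves all letters of $A$, not just $a$ and $b$; the device that tames this is to keep the constructed language inside $(\cl a)^*$ — the set of words all of whose letters are $\geq_A a$ — which is itself monotone and closed under $\leq_A$, so no letter outside $\cl a$ can ever be reached by going up from a word over $\cl a$, and inside which $a$ is the least letter.

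There are exactly two cases, according to whether $\cl a=\{a,b\}$ or $|\cl a|\geq 3$ (exhaustive, as $\{a,b\}\subseteq\cl a$). If $|\cl a|\geq 3$, fix distinct $d_0,d_1\in\cl a\setminus\{a\}$, encode an input NFA $\mathcal N$ over $\{0,1\}$ as $\widehat L\subseteq\{d_0,d_1\}^*$ via $0\mapsto d_0,\ 1\mapsto d_1$, and set $L_{\mathcal M}=\widehat L\cup\{w\in(\cl a)^*\mid w\text{ uses some letter of }\cl a\setminus\{d_0,d_1\}\}$; this is recognised by an NFA built in linear time from $\mathcal N$. From $a^n\in L_{\mathcal M}$ one can climb to any word of $(\cl a)^n$, so if $L_{\mathcal M}$ is monotone then every word of $\{d_0,d_1\}^n$ is already in $\widehat L$, i.e. $L(\mathcal N)=\{0,1\}^*$; conversely universality of $\mathcal N$ makes $L_{\mathcal M}=(\cl a)^*$, which is monotone. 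If instead $\cl a=\{a,b\}$, encode $\mathcal N$ over $\{0,1\}$ as $\widehat L\subseteq\{a,b\}^*$ via $0\mapsto a,\ 1\mapsto b$, and set $L_{\mathcal M}=\widehat L\cdot b\ \cup\ \{a,b\}^*\cdot a$. For every $x\in\{a,b\}^*$ we have $xa\in L_{\mathcal M}$ and $xa\leq_A xb$, so monotonicity forces $xb\in L_{\mathcal M}$ and hence $x\in\widehat L$, giving $L(\mathcal N)=\{0,1\}^*$; conversely universality makes $L_{\mathcal M}=\{a,b\}^+$, which is monotone precisely because $\cl a=\{a,b\}$ (hence $\cl b=\{b\}$) prevents escaping $\{a,b\}^*$ when climbing. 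In both cases the reduction is computed in polynomial (indeed linear) time.

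The step I expect to be the main obstacle is making this hardness argument go through uniformly for \emph{every} non‑trivial ordered alphabet: one must stop the surrounding poset structure on $A$ from interfering with the encoding, which is exactly what the confinement to the monotone set $(\cl a)^*$ together with the case split on $|\cl a|$ is designed to achieve. The in‑\textsc{P} and decidability parts, by contrast, need no assumption on $A$ (the trivially‑ordered case being trivial, as then every language is monotone).
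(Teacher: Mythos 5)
Your upper bounds are exactly the paper's: monotonicity of $L(\B)$ is the inclusion $L(\cl\B)\subseteq L(\B)$ with $\cl\B$ from \Cref{lem:clA}, giving \textsc{P} for DFAs (product with the complement, emptiness) and \textsc{Pspace} for NFAs, and the hardness is by reduction from NFA universality, again as in the paper. Where you genuinely diverge is in how seriously you treat ``any alphabet with non-trivial order'': the paper runs the reduction on the two-letter alphabet $\{a,b\}$ with the language $a\Ast+bL(\B)$ and then simply asserts hardness whenever two comparable letters exist, whereas you confine the constructed language to the upward-closed set $(\cl a)^*$ for a covering pair $a\lessdot b$ and split on $\cl a=\{a,b\}$ versus $|\cl a|\geq 3$. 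This is a real improvement in rigor: the paper's language $\{a,b\}^+$ (the universal case) is \emph{not} monotone over a larger alphabet if some letter lies strictly above $a$ or $b$, so some device like your confinement to $\cl a$ is indeed needed to make the stated generality literally true.

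There is, however, one small genuine flaw in your case $|\cl a|\geq 3$: monotonicity only relates words of equal length, so it says nothing about $\varepsilon$. From monotonicity of $L_{\mathcal M}$ you can climb from $a^n$ only for $n\geq 1$, so you get $L(\mathcal N)\supseteq\{0,1\}^+$, not universality; conversely, if $L(\mathcal N)=\{0,1\}^+$ (so $\mathcal N$ is not universal) your $L_{\mathcal M}$ equals $(\cl a)^+$, which \emph{is} monotone. Hence the stated equivalence ``$L_{\mathcal M}$ monotone iff $\mathcal N$ universal'' fails on the empty word, and the map is a reduction from ``$L(\mathcal N)\supseteq\{0,1\}^+$'' rather than from universality. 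This is easy to repair: either check $\varepsilon\in L(\mathcal N)$ separately in polynomial time and output a fixed non-monotone language when it fails, or use a marker letter so that the membership of $\varepsilon$ in $L(\mathcal N)$ is reflected by a nonempty word --- this is precisely what the paper's prefix construction $a\Ast+bL(\B)$ does, and what your own case $\cl a=\{a,b\}$ achieves via the appended final letter (which is why that case is correct as written, empty word included). With that one-line patch, your argument is complete and in fact proves the theorem in the generality the paper claims.
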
 
 
\begin{proof}
Notice that if $\B$ is an NFA, $L(\B)$ is monotone if and only if $L(\cl \B)\subseteq L(\B)$. This shows that the problem is in \textsc{Pspace} in general, and that it is in \textsc{P} when $\B$ is a DFA, since it reduces to checking emptiness of the intersection between $\cl \B$ and the complement of $\B$.
We show that the general problem is \textsc{Pspace}-hard by reducing from NFA universality. Let $\B$ be an NFA on a two-letter alphabet $A=\{a,b\}$. 
We build an NFA $\C$ of size polynomial in the size of $\B$ and recognizing $a\Ast+bL(\B)$, using standard NFA constructions. We now consider the monotonicity of $\C$ according to the alphabet order $a\leq_A b$. If $L(\C)$ is monotone, since for all $u\in\Ast$ we have $au\in L(\C)$, we obtain $bu\in L(\C)$ as well, so $u\in L(\B)$. So $L(\C)$ monotone implies $\B$ universal. Conversely, if $\B$ is universal, then $L(\C)=\Ast$ hence $L(\C)$ is monotone.
We have that $L(\B)=\Ast$ if and only if $L(\C)$ is monotone, thereby completing the \textsc{Pspace}-hardness reduction. This means that the monotonicity problem is \textsc{Pspace}-complete as soon as there are two comparable letters $a\leq_A b$ in the alphabet. Otherwise the problem is trivial, as any language is monotone on a trivially ordered alphabet.
\end{proof} 
 
\section{Positive first-order logic}\label{sec:FO}

\subsection{Syntax and semantics}\label{subsec:FO}

The main idea of positive FO, that we will note $\FOp$, is to guarantee via a syntactic restriction that it only defines monotone languages.

Notice that since monotone languages are not closed under complement (see \Cref{ex:mon}), we cannot allow negation in the syntax of $\FOp$. 
This means we have to add dual versions of classical operators of first-order logic.

This naturally yields the following syntax for $\FOp$:
$$\varphi,\psi:= \cl a(x)\mid x\leq y\mid x<y\mid \varphi\vee \psi\mid \varphi\wedge\psi \mid \exists x.\varphi\mid\forall x.\varphi$$

As usual, variables $x,y,\dots$ range over the positions of the input word. The semantics is the same as classical FO on words, with the notable exception that $\cl a(x)$ is true if and only if $x$ is labelled by some $b\in \cl a$. Unlike classical FO, it is not possible to require that a position is labelled by a specific letter $a$, except when $\cl a=\{a\}$. This is necessary to guarantee that only monotone languages can be defined.  
\medskip

\noindent\textbf{Formal semantics of $\FOp$} 

If $\varphi$ is a formula with free variables $\FV(\varphi)$, its semantics is a set $\semphi$ of pairs of the form $(u,\alpha)$, where $u\in \Ast$ and $\alpha:\FV(\varphi)\to\dom(u)$ a valuation for the free variables. We write indistinctively $u,\alpha\models\varphi$ or $(u,\alpha)\in\semphi$, to signify that $(u,\alpha)$ satisfies $\varphi$.
If $\FV(\varphi)=\emptyset$, we can simply write $u\models\varphi$ instead of $(u,\emptyset)\models\varphi$. In this case, the language recognized by $\varphi$ is $\semphi=\{u\in \Ast\mid u\models\varphi\}$.

We define $\semphi$ by induction on $\varphi$. 

\begin{itemize}
\item $u,\alpha\models {\cl a(x)}$ if $a\leq_A u[\alpha(x)]$.
\item $u,\alpha\models {x\leq y}$ if $\alpha(x)\leq \alpha(y)$.
\item $u,\alpha\models {x< y}$ if $\alpha(x)< \alpha(y)$.
\item $u,\alpha\models{\varphi\vee\psi}$ if $u,\alpha\models\varphi$ or $u,\alpha\models\psi$.
\item $u,\alpha\models{\varphi\wedge\psi}$ if $u,\alpha\models\varphi$ and $u,\alpha\models\psi$.
\item $u,\alpha\models {\exists x.\varphi}$ if there exists $i\in \dom(u)$ such that $u,\alpha[x\mapsto i]\models\varphi$.
\item $u,\alpha\models {\forall x.\varphi}$ if for all $i\in \dom(u)$, we have $u,\alpha[x\mapsto i]\models\varphi$.
\end{itemize}

Here the valuation $\alpha[x\mapsto i]$ maps $y$ to $\left\{\begin{array}{ll} i & \text{if } y=x\\\alpha(y) & \text{if } y\neq x \end{array}\right.$.

\begin{exa}On alphabet $A=\{a,b,c\}$ with $a\leq_A b$.
\begin{itemize}
\item $\forall x.\cl a(x)$ recognizes $\{a,b\}^*$.
\item $\exists x.\cl b(x)$ recognizes $\Ast b\Ast$.
\end{itemize}
\end{exa}

\begin{rem}
In the powerset alphabet framework where $A=\mathcal P(P)$, we can naturally view $\FOp$ as the negation-free fragment of first-order logic, by having atomic predicates $\cl a(x)$ range directly over $P$ instead of $A=\mathcal P(P)$. We can then drop the $\cl a$ notation, as predicates from $P$ are considered independently of each other. This way, $p(x)$ will be true if and only if the letter $S\in A$ labelling $x$ contains $p$. A letter predicate $\cl S(x)$ in the former syntax can then be expressed by $\bigwedge_{p\in S} p(x)$, so $\FOp$ based on predicates from $P$ is indeed equivalent to $\FOp$ based on $A$. We will take this convention when working on powerset alphabets.
\end{rem}

\begin{exa}
Let $A=\mathcal P(P)$ with $P=\{a,b\}$.
The formula $\exists x,y.~x\leq y\wedge a(x) \wedge b(y)$ recognizes $\Ast\{a,b\}\Ast+\Ast\{a\}\Ast\{b\}\Ast$.
\end{exa}

\subsection{Properties of $\FOp$}\label{subsec:prop}

\begin{lem}\label{lem:noorder}
Assume the order on $A$ is trivial, i.e. no two distinct letters are comparable. Then all languages are monotone, and any FO-definable language is $\FOp$-definable.
\end{lem}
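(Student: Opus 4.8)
The statement to prove is Lemma~\ref{lem:noorder}: if the order on $A$ is trivial, then every language is monotone, and every FO-definable language is $\FOp$-definable. The first claim is immediate: if no two distinct letters are comparable, then $u\leq_A v$ forces $u=v$, so the monotonicity condition ``$u\in L\Rightarrow v\in L$'' is vacuous. The content is the second claim, and the natural approach is to show that in the trivial-order case, the atomic predicate $\cl a(x)$ of $\FOp$ says exactly ``$x$ is labelled $a$'', since $\cl a=\{b\in A\mid a\leq_A b\}=\{a\}$. So the only expressive gap between $\FOp$ and FO is negation, and I would argue that negation can be pushed to the atoms and then eliminated.

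Concretely, I would take an FO formula $\varphi$ over the word signature (with atoms $a(x)$ for $a\in A$, plus $x\leq y$, $x<y$, and equality if allowed), put it in negation normal form by driving negations inward through the usual De Morgan / quantifier dualities, so that negation appears only in front of atomic formulas. Then I would rewrite each negated atom positively: $\neg(x\leq y)$ becomes $y<x$; $\neg(x<y)$ becomes $y\leq x$; $\neg(x=y)$ becomes $x<y\vee y<x$; and crucially $\neg a(x)$ becomes $\bigvee_{b\in A\setminus\{a\}} b(x)$ — a finite disjunction, valid precisely because the alphabet is finite and because every position carries exactly one letter. (In the powerset convention one would instead note that $\cl a(x)$ for a singleton suffices; but since the lemma is stated for general trivially-ordered $A$, the cleaner route is the letter-predicate syntax where each $\cl a(x)=a(x)$.) After these substitutions every atom is one of $\cl a(x)$, $x\leq y$, $x<y$, joined by $\wedge,\vee,\exists,\forall$, which is exactly the $\FOp$ grammar; and each rewriting step preserves the semantics, so the resulting $\FOp$ formula defines the same language as $\varphi$.

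The only mild subtlety — and the place to be a little careful — is making sure the rewriting of $\neg a(x)$ is sound, i.e. that ``not labelled $a$'' genuinely equals ``labelled by some $b\neq a$'': this uses that the models are words over $A$, where the labelling function is total and single-valued, so it is really just the observation that $A$ is finite and partitioned by the unary predicates. One should also confirm that the standard FO signature on words is taken to include only these symbols (order, successor-or-strict-order, the monadic letter predicates, and possibly $=$), so that no other negated atom can appear; equality is handled as above, and if successor is taken as primitive rather than $<$ it can be expressed positively as well. I do not expect any real obstacle here — it is a routine normal-form argument — so the proof is short; the point of the lemma is to isolate that all the difficulty in separating $\FOp$ from FO comes entirely from the order on $A$ being non-trivial.
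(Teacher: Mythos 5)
Your proof is correct and follows essentially the same route as the paper: both establish the first claim by noting that $u\leq_A v$ forces $u=v$, and both handle the second claim by pushing negations to the atoms and rewriting $\neg a(x)$ as $\bigvee_{b\neq a} b(x)$ and negated order atoms as their positive duals. The extra care you take with equality and with the observation that $\cl a=\{a\}$ under a trivial order is sound but does not change the argument.
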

\begin{proof}
The fact that all languages are monotone in this case follows from the fact that for two words $u,v$ we have $u\leq_ A v$ if and only if $u=v$.

If $L$ is definable by an FO formula $\varphi$, we can build an $\FOp$ formula $\psi$ from $\varphi$ by pushing negations to the leaves using the usual rewritings such as $\neg(\varphi\wedge\psi)=\neg\varphi\vee\neg\psi$ and $\neg(\exists x.\varphi)=\forall x.\neg\varphi$.  For all letter $a\in A$ and variable $x$, we then replace all occurrences of $\neg a(x)$ by $\bigvee_{b\neq a} b(x)$. Finally, the negation of $x\leq y$ (resp. $x<y$) can be written $y<x$ (resp. $y\leq x$).
\end{proof}

\begin{lem}\label{lem:FOclosed}
The logic $\FOp$ can only define monotone languages.
\end{lem}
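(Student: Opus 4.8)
The plan is to proceed by structural induction on the formula $\varphi$, proving a slightly more general statement that accommodates free variables. Specifically, I would show: for every $\FOp$ formula $\varphi$, if $u \leq_A v$ (so in particular $\dom(u) = \dom(v)$) and $\alpha : \FV(\varphi) \to \dom(u) = \dom(v)$ is a valuation, then $u, \alpha \models \varphi$ implies $v, \alpha \models \varphi$. Applying this with $\FV(\varphi) = \emptyset$ immediately gives that $\semphi$ is monotone, which is the lemma. The key observation making the induction work is that $u \leq_A v$ means $u$ and $v$ have the same length, hence literally the same set of positions, so valuations transfer verbatim between the two words — only the labelling changes.

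The base cases are where the positivity restriction does the real work. For an atom $\cl a(x)$: if $u, \alpha \models \cl a(x)$ then $a \leq_A u[\alpha(x)]$; since $u \leq_A v$ gives $u[\alpha(x)] \leq_A v[\alpha(x)]$, transitivity of $\leq_A$ yields $a \leq_A v[\alpha(x)]$, i.e. $v, \alpha \models \cl a(x)$. The order atoms $x \leq y$ and $x < y$ are trivial, since their truth depends only on $\alpha$, which is unchanged. This is precisely the point where a naive version of the lemma for full FO would break: a negated atom $\neg a(x)$ is not preserved when the label at $x$ grows, and $\FOp$ simply has no such formula.

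The inductive cases are routine monotonicity-of-connectives arguments: $\vee$ and $\wedge$ follow immediately from the induction hypotheses applied to the subformulas; for $\exists x.\varphi$, a witness $i \in \dom(u) = \dom(v)$ for $u$ is, by the induction hypothesis applied with the valuation $\alpha[x \mapsto i]$, also a witness for $v$; and for $\forall x.\varphi$, every $i \in \dom(v) = \dom(u)$ is handled by the induction hypothesis on $\varphi$ with valuation $\alpha[x \mapsto i]$. There is essentially no obstacle here — the only thing to be careful about is stating the invariant with free variables from the outset, since restricting attention to closed formulas would make the induction fail to go through at the quantifier steps. I would present the argument compactly, spelling out the atomic case and the $\exists/\forall$ cases and leaving $\vee, \wedge$ to the reader.
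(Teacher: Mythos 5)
Your proposal is correct and follows essentially the same route as the paper: a structural induction with the invariant strengthened to include free variables, using transitivity of $\leq_A$ for the atomic case $\cl a(x)$ and routine arguments for the connectives and quantifiers. The observation that $u \leq_A v$ forces $\dom(u) = \dom(v)$, so valuations transfer unchanged, is exactly the point the paper relies on as well.
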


\begin{proof}
This is done by induction on the $\FOp$ formula $\varphi$, where the induction property is strengthened to include possible free variables: for all $(u,\alpha)\in \semphi$ and $v\geq_A u$, we have $(v,\alpha)\in\semphi$.

\noindent\textbf{Base cases}:

Let $(u,\alpha)\in \sem{\cl a(x)}$ and $v\geq_A u$, we have $v[\alpha(x)]\geq_A u[\alpha(x)]\geq_A a$, so $(v,\alpha)\in \sem{\cl a(x)}$.

Let $(u,\alpha)\in \sem{x\leq y}$ and $v\geq_A u$.
We have $\alpha(x)\leq \alpha(y)$ so $(v,\alpha)\in \sem{x\leq y}$. The argument for $<$ instead of $\leq$ is identical.
\medskip

\noindent\textbf{Induction cases}:

Let $(u,\alpha)\in \sem{\varphi\vee\psi}$ and $v\geq_A u$. We have $(u,\alpha)\in\semphi$ or $(u,\alpha)\in\sempsi$. Therefore, by induction hypothesis, $(v,\alpha)\in\semphi$ or $(v,\alpha)\in\sempsi$, hence $(v,\alpha)\in \sem{\varphi\vee\psi}$. The argument for $\varphi\vee\psi$ is identical.

Let $(u,\alpha)\in \sem{\exists x.\varphi}$ and $v\geq_A u$. There exists $i\in\dom(u)$ such that $(u,\alpha[x\mapsto i])\in \semphi$. By induction hypothesis, $(v,\alpha[x\mapsto i])\in \semphi$. Hence, $(v,\alpha)\in \sem{\exists x.\varphi}$. The argument for $\forall$ is identical. \qedhere

\end{proof}

It is natural to ask whether the converse of \Cref{lem:FOclosed} holds: if a language is FO-definable and monotone, then is it necessarily $\FOp$-definable? This will be the purpose of \Cref{sec:K}.

\subsection{Ordered Ehrenfeucht-Fra\"issé games}\label{subsec:EF}

We will explain here how $\FOp$-definability can be captured by an ordered variant of Ehrenfeucht-Fra\"issé games, that we will call $\EF$-games.

This notion was defined in \cite{Stol95} for general structures, we will instantiate it here on finite words.

We define the $n$-round $\EF$-game on two words $u,v\in \Ast$, noted $\EFn(u,v)$.
This game is played between two players, Spoiler and Duplicator.

If $k\in\N$, a \emph{$k$-position} of the game is of the form $(u,\alpha,v,\beta)$, where $\alpha:[1,k]\to\dom(u)$ and $\beta:[1,k]\to \dom(v)$ are valuations for $k$ variables in $u$ and $v$ respectively.  We can think of $\alpha$ and $\beta$ as giving the position of $k$ previously placed tokens in $u$ and $v$.

A $k$-position $(u,\alpha,v,\beta)$ is \emph{valid} if for all $i\in[1,k]$, we have $u[\alpha(i)]\leq_A v[\beta(i)]$, and for all $i,j\in[1,k]$, $\alpha(i)\leq\alpha(j)$ if and only if $\beta(i)\leq\beta(j)$.

Notice the difference with usual EF-games: here we do not ask that tokens placed in the same round have same label, but that the label in $u$ is $\leq_A$-smaller than the label in $v$. This feature is intended to capture $\FOp$ instead of FO.

The game starts from the $0$-position $(u,\emptyset,v,\emptyset)$.

At each round, starting from a $k$-position $(u,\alpha,v,\beta)$, the game is played as follows.
If $k=n$, then Duplicator wins.
Otherwise, Spoiler chooses a position in one of the two words, and places token number $k+1$ on it.
Duplicator answers by placing token number $k+1$ on a position of the other word. Let us call $\alpha'$ and $\beta'$ the extensions of $\alpha$ and $\beta$ with these new tokens. If $(u,\alpha',v,\beta')$ is not a valid $(k+1)$-position, then Spoiler immediately wins the game, otherwise, the game moves to the next round with $(k+1)$-position $(u,\alpha',v,\beta')$.

We will note $u\preceq_n v$ when Duplicator has a winning strategy in $\EFn(u,v)$.

\begin{defi}
The quantifier rank of a formula $\varphi$, noted $\qr(\varphi)$ is its maximal number of nested quantifiers. It can be defined by induction in the following way: if $\varphi$ is atomic then $\qr(\varphi)=0$, otherwise, $\qr(\varphi\wedge\psi)=\qr(\varphi\vee\psi)=\max(\qr(\varphi),\qr(\psi))$ and $\qr(\exists x.\varphi)=\qr(\forall x.\varphi)=\qr(\varphi)+1$.
\end{defi}
The following Theorem shows the link between the $n$-round $\EF$ game and formulas of rank at most $n$.

\begin{thmC}[{\cite[Thm 2.4]{Stol95}}]\label{thm:EF}
We have $u\preceq_n v$ if and only if for all formulas $\varphi$ of $\FOp$ with $\qr(\varphi)\leq n$, we have $(u\models\varphi)\Rightarrow (v\models\varphi)$.
\end{thmC}

Since the proof of \Cref{thm:EF} does not appear in \cite{Stol95}, we will prove it in a general setting in Section \ref{sec:EFgen}.

Let us now see how we can use $\EF$ games to characterize $\FOp$-definability.

\begin{cor}\label{cor:EF} A language $L$ is not $\FOp$-definable if and only if for all $n\in\N$, there exists $(u,v)\in L\times \overline{L}$ such that $u\preceq_n v$.
\end{cor}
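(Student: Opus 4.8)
The plan is to derive \Cref{cor:EF} directly from \Cref{thm:EF} by unpacking the definition of $\FOp$-definability and using the fact that, over finite words, there are only finitely many $\FOp$-formulas of quantifier rank at most $n$ up to semantic equivalence. First I would recall this finiteness fact: since the only atomic formulas are $\cl a(x)$, $x\leq y$ and $x<y$, and there are finitely many variables that can appear free at nesting depth $n$, a straightforward induction on $n$ shows that there are finitely many formulas of quantifier rank $\leq n$ modulo logical equivalence. Consequently, for each $n$ one can form a single formula $\varphi_n^L$ of quantifier rank $\leq n$ that is the disjunction of (representatives of) all rank-$\leq n$ formulas satisfied by every word of $L$; more useful, though, is to work directly with the game relation $\preceq_n$.

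The core of the argument is the following chain of equivalences. For the easy direction, suppose $L$ is $\FOp$-definable, say by $\varphi$ with $\qr(\varphi)=n$. If $u\in L$ and $u\preceq_n v$, then since $u\models\varphi$ and $\qr(\varphi)\leq n$, \Cref{thm:EF} gives $v\models\varphi$, so $v\in L$; hence $v\notin\overline L$, and there is no pair $(u,v)\in L\times\overline L$ with $u\preceq_n v$. This shows: if for every $n$ such a pair exists, then $L$ is not $\FOp$-definable. For the converse, suppose that for some $n$ there is \emph{no} pair $(u,v)\in L\times\overline L$ with $u\preceq_n v$; I claim $L$ is then $\FOp$-definable by a rank-$\leq n$ formula. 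Indeed, $\preceq_n$ being defined via a finite game on the finitely many rank-$\leq n$ types, one associates to each word $u$ its rank-$n$ $\EF$-type, namely the set $T_n(u)$ of rank-$\leq n$ $\FOp$-formulas it satisfies; by the finiteness remark each such type is the semantics of a single rank-$\leq n$ formula $\chi_u$ (take the conjunction of a finite set of representatives, which by \Cref{thm:EF} characterises the $\preceq_n$-downward closure of $u$). The assumption says precisely that no $u\in L$ satisfies $u\preceq_n v$ for any $v\notin L$, i.e.\ the $\preceq_n$-upward closure of $L$ is contained in $L$; therefore $\varphi:=\bigvee_{u\in L}\chi_u$ — a finite disjunction, since there are finitely many types — defines exactly $L$: it is satisfied by every word of $L$, and if $w\models\varphi$ then $w\models\chi_u$ for some $u\in L$, which by \Cref{thm:EF} means $u\preceq_n w$, whence $w\in L$.

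I expect the main obstacle to be making the finiteness-of-types step fully rigorous: one must check that ``satisfies all rank-$\leq n$ formulas that $u$ satisfies'' is equivalent to ``$u\preceq_n v$'', which is exactly the content of \Cref{thm:EF} read in both directions, and that there are only finitely many such types so that $\chi_u$ and the disjunction $\bigvee_{u\in L}\chi_u$ are genuine (finite) $\FOp$-formulas. Everything else is bookkeeping: negating the statement ``$L$ is $\FOp$-definable'' turns ``$\exists$ defining formula of some rank $n$'' into ``$\forall n$, no rank-$\leq n$ formula defines $L$'', and the argument above shows the latter is equivalent to ``$\forall n\,\exists (u,v)\in L\times\overline L$ with $u\preceq_n v$''. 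I would present the two directions as a short contrapositive-style proof, citing \Cref{thm:EF} for both implications and invoking the finiteness of $\FOp$-types of bounded quantifier rank for the construction of the defining formula.
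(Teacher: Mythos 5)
Your proposal is correct and follows essentially the same route as the paper: both directions rest on \Cref{thm:EF} together with the finiteness, up to logical equivalence, of $\FOp$-formulas of quantifier rank at most $n$. The only (cosmetic) difference is in the defining formula for the converse direction — you take $\bigvee_{u\in L}\chi_u$ where $\chi_u$ is the full rank-$n$ type of $u$, while the paper takes $\bigvee_{u\in L}\bigwedge_{v\notin L}\varphi_{u,v}$ with $\varphi_{u,v}$ a separating formula; both arguments are equivalent and equally valid.
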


\begin{proof}

$\Leftarrow$ : Let $n\in\N$, there exists  $(u,v)\in L\times \overline{L}$ such that $u\preceq_n v$. By \Cref{thm:EF}, any formula of quantifier rank $n$ accepting $u$ must accept $v$, so no formula of quantifier rank $n$ recognizes $L$. This is true for all $n\in\N$, so $L$ is not $\FOp$-definable.

$\Rightarrow$ (contrapositive): Assume there exists $n\in\N$ such that for all $(u,v)\in L\times \overline{L}$, $u\not\preceq_n v$. By \Cref{thm:EF}, this means that for all $(u,v)\in L\times \overline{L}$, there exists a formula $\varphi_{u,v}$ of quantifier rank $n$ accepting $u$ but not $v$. Since there are finitely many $\FOp$ formulas of rank $n$ up to logical equivalence \cite[Lem 3.13]{Libkin04}, the set of formulas $F=\{\varphi_{u,v}\mid (u,v)\in L\times \overline{L}\}$ can be chosen finite. We define $\psi=\bigvee_{u\in L} \bigwedge_{v\notin L} \varphi_{u,v}$, where the conjunctions and disjunction are finite since $F$ is finite. For all $u\in L$, $u\models\bigwedge_{v\notin L} \varphi_{u,v}$ hence $u\models\psi$, and conversely, a word satisfying $\psi$ must satisfy some $\bigwedge_{v\notin L} \varphi_{u,v}$, so it cannot be in $\overline{L}$.
\end{proof}


\section{A counter-example language}\label{sec:K}

\subsection{The language $K$}

We will now answer the natural question posed in \Cref{subsec:prop}: is any FO-definable monotone language (on any ordered alphabet) also $\FOp$-definable?

This section is dedicated to the proof of the following Theorem:

\begin{thm}\label{thm:ce}
There is an FO-definable monotone language $K$ on a powerset alphabet that is not $\FOp$-definable.
\end{thm}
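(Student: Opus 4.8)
The plan is to exhibit an explicit language $K$ on a powerset alphabet, verify that it is FO-definable and monotone by direct inspection, and then use \Cref{cor:EF} to show it is not $\FOp$-definable by constructing, for every $n$, a pair $(u_n, v_n) \in K \times \overline{K}$ with $u_n \preceq_n v_n$. Following the strategy of \cite{Stol95} specialized to words, the language $K$ should be built so that membership in $K$ is governed by a \emph{global parity condition} (e.g. the number of occurrences of some distinguished letter, or the parity of the length of some block structure) that is FO-expressible because the relevant count is bounded, yet the monotone closure structure forces any would-be $\FOp$ formula to be ``blind'' to a discrepancy that Duplicator can hide. Concretely, I would take a powerset alphabet over a small predicate set $P$, use one predicate to mark ``separators'' carving the word into blocks, and another predicate whose presence/absence inside the blocks encodes the bit that must be read negatively; $K$ then asks for something like ``every block of the minimal kind has even length'' or ``an even number of blocks contain the marked letter,'' packaged so that monotonicity holds (adding predicates to letters only ever moves a word \emph{into} $K$).

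The first concrete step is to pin down $K$ precisely and check the two easy properties. FO-definability: write the defining formula explicitly, using the order predicate to quantify over block boundaries and the bounded counting trick (since the count in question can be forced to take only finitely many values by the shape of admissible words, or by making $K$ agree with a star-free expression) so that the parity is FO-expressible. Monotonicity: verify $u \leq_A v$ and $u \in K$ imply $v \in K$ — this is where the powerset/inclusion order is used, and the language must be designed so that every upward step either preserves the global condition or trivially satisfies it (typically: enlarging a letter can only add the marked predicate, which should only ever help membership, and one arranges that once ``enough'' marks are present the word is unconditionally in $K$).

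The heart of the argument — and the main obstacle — is the $\EF$-game analysis showing $u_n \preceq_n v_n$. Here $u_n \in K$ and $v_n \in \overline{K}$ will be two long words that look alike up to the global parity: $v_n$ is obtained from $u_n$ by a local modification (lengthening one block by one, or deleting/adding one marked letter) that flips the parity but is invisible to $n$ rounds of play. I would describe Duplicator's strategy as maintaining the invariant that the $k$ placed tokens partition $u_n$ and $v_n$ into corresponding segments, each pair of segments being ``$n-k$-equivalent'' in the ordered sense: either the two segments have the same relevant structure, or they are both so long (super-exponentially in the number of remaining rounds) that the length difference cannot be detected, \emph{and} the label inequality $u_n[\alpha(i)] \leq_A v_n[\beta(i)]$ is guaranteed because the modification in $v_n$ only ever enlarges letters relative to $u_n$ (this asymmetry is exactly why the game captures $\FOp$ and not FO, and why $v_n \notin K$ rather than $u_n \notin K$ is the right orientation). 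The delicate points will be: (i) choosing the block lengths in $u_n, v_n$ to grow fast enough that Duplicator survives $n$ rounds (a standard tower-of-exponentials bookkeeping), and (ii) ensuring the monotone validity condition on labels is compatible with Duplicator's segment-copying — i.e. whenever Spoiler plays in $v_n$ on a position carrying the ``extra'' predicate, Duplicator can answer in $u_n$ on a position whose (smaller) label still respects $\leq_A$, which forces careful placement of the marked predicate in $u_n$ versus $v_n$. Once this invariant is maintained for $n$ rounds, Duplicator wins $\EFn(u_n,v_n)$, so by \Cref{cor:EF} the language $K$ is not $\FOp$-definable, completing the proof.
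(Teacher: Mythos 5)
Your high-level strategy (exhibit an explicit $K$, check FO-definability and monotonicity directly, then use \Cref{cor:EF} with a Duplicator strategy on a pair $(u_n,v_n)$) is exactly the paper's, but the concrete language you propose does not work, and the gap is not a detail. Every candidate you name is a genuine modular-counting condition --- ``every block of the minimal kind has even length,'' ``an even number of blocks contain the marked letter'' --- and such conditions are the canonical examples of regular languages that are \emph{not} FO-definable on words (FO $=$ star-free $=$ aperiodic). Your hedge that the count ``can be forced to take only finitely many values'' destroys the other half of the argument: if the relevant count is bounded, there are no arbitrarily long ambiguous stretches for Duplicator to hide in, and the $\EFn$ argument collapses. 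The paper escapes this tension with a specific trick you would need to find: take $P=\{a,b,c\}$ and $K=(\cl a\cl b\cl c)^*+\Ast\top\Ast$. The residue mod $3$ of each position is carried by \emph{which singleton predicate} labels it, so the language is counter-free (hence FO-definable), while the monotone closure creates two-way-ambiguous letters such as $\x=\{a,b\}$, strictly below $\top$, that are consistent with two incompatible residues; the $\Ast\top\Ast$ disjunct is needed precisely to keep the language star-free on the full powerset alphabet. Note that a mod-$2$ analogue with two predicates fails, since the only ambiguous letter would then be $\top$ itself.

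Your EF-game sketch also contains an internal inconsistency: you assert that the modification turning $u_n$ into $v_n$ ``only ever enlarges letters relative to $u_n$,'' but if $v_n\geq_A u_n$ componentwise and $K$ is monotone with $u_n\in K$, then $v_n\in K$, contradicting $v_n\in\overline{K}$. The words must be incomparable; in the paper $u=(abc)^{N}$ and $v=[\x\y\z]^{N-1}\x\y$ with $|v|=|u|-1$, and the property that actually makes Duplicator's distance-copying strategy legal is that each letter of $v$ lies above \emph{both} the letter of $u$ at the left-aligned position and the one at the right-aligned (shifted) position. You flag this as ``delicate point (ii)'' but do not resolve it, and resolving it is the heart of the construction. (Minor: the bookkeeping is the standard $2^n$ halving invariant, not a tower of exponentials.)
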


Let $P=\{a,b,c\}$ and $A=\mathcal P(P)$, ordered by inclusion.

We will note $\x,\y,\z$ for the letters $\{a,b\},\{b,c\},\{a,c\}$ respectively, and $\top$ for $\{a,b,c\}$.
If $x\in P$ we will often note $x$ instead of $\{x\}$ to lighten notations.

\begin{defi}\label{def:K}
We now define the desired language by: $$K := (\cl a\cl b\cl c)^* + \Ast\top\Ast.$$
\end{defi}

\noindent We claim that $K$ satisfies the requirements of \Cref{thm:ce}.

Notice that the second disjunct $\Ast\top\Ast$ could be omitted if we were to consider only the alphabet $A\setminus \{\top\}$. When sticking with a powerset alphabet, this disjunct is necessary to obtain an FO-definable language. Indeed, if we just define $K_0=(\cl a\cl b\cl c)^*$ on alphabet $A$, we have $K_0\cap (\top^*)= (\top\top\top)^*$. Since $\top^*$ is FO-definable but $(\top\top\top)^*$ is not (see \cite{DG08}), and FO-definable languages are closed under intersection, we have that $K_0$ is not FO-definable.

\subsection{FO-definability of $K$}

\begin{lem}\label{lem:KFO}
$K$ is monotone and FO-definable.
\end{lem}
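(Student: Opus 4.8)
The plan is to establish the two claims separately: monotonicity is immediate from the definition, and FO-definability requires writing an explicit first-order formula.

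For monotonicity, recall $K = (\cl a\cl b\cl c)^* + \Ast\top\Ast$. Both disjuncts are monotone: $\Ast\top\Ast$ is clearly monotone since $\top$ is the maximal letter, and $(\cl a\cl b\cl c)^*$ is a concatenation of languages of the form $\cl x$, each of which is upward-closed by definition of $\cl{\cdot}$; a concatenation of monotone languages is monotone, and a finite (or Kleene-star) union of monotone languages is monotone. Since $u \leq_A v$ forces $|u| = |v|$, if $u \in (\cl a\cl b\cl c)^*$ then $|u| \equiv 0 \pmod 3$ and $v$ has the same length and each letter of $v$ dominates the corresponding letter of $u$, so $v \in (\cl a\cl b\cl c)^*$. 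Hence $K$ is monotone.

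For FO-definability, the idea is: a word $w$ is in $K$ if and only if either some position carries the letter $\top$ (equivalently, satisfies $a(x)\wedge b(x)\wedge c(x)$), or the length of $w$ is divisible by $3$ and, writing positions as $0,1,2,\dots$, every position $i$ satisfies: $a$ holds at $i$ if $i\equiv 0$, $b$ holds at $i$ if $i\equiv 1$, $c$ holds at $i$ if $i\equiv 2 \pmod 3$. The only nontrivial point is expressing ``position $i$ is congruent to $r$ mod $3$'' in FO, but this is a standard fact: modular counting of the distance from the first position is FO-definable because one can say ``the prefix up to $i$ can be $3$-colored cyclically starting with the residue class of the first position''—more concretely, there is an FO formula $\theta_r(x)$ (for $r\in\{0,1,2\}$) asserting that the number of positions strictly before $x$ is $\equiv r \pmod 3$, obtained by the usual technique of quantifying over the (unique) partition witnessed by the order relation, or by noting $(\{0,1,2,\dots\}, <)$ restricted to length-divisible-by-$3$ conditions is star-free is false in general—so instead we use: since we only need it on the finite prefix and FO over $(<)$ can count modulo a fixed constant along an interval only when the interval is itself FO-delimited. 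Here I would appeal directly to the fact that $(abc)^*$-type languages over a fixed-length-pattern constraint plus a divisibility condition on the total length is exactly what makes $K$ non-star-free over $\{\top\}$; but over the full alphabet $A$ the saving grace is the $\Ast\top\Ast$ disjunct.

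The cleanest correct route, and the one I expect to be the main (minor) obstacle to get exactly right, is: show $K$ is recognized by a finite monoid that is aperiodic. Define the syntactic-style monoid tracking, for a word $u$: (i) whether $u$ contains $\top$; (ii) if not, the pair (first letter's compatibility, last letter's compatibility, and length mod 3) recording whether $u$ can be extended to a word in $(\cl a\cl b\cl c)^*$. The subtlety is that the "length mod 3" component looks like it would introduce a nontrivial group; but crucially, inside $(\cl a\cl b\cl c)^*$ the residue is pinned by which of $\cl a,\cl b,\cl c$ the boundary letters can match, so the monoid element of a single letter $\ell \neq \top$ is determined by the set $\{r \in \{0,1,2\} : \ell \in \cl{(\text{$r$-th letter of }abc)}\}$ together with propagation rules, and one checks the resulting finite monoid $M$ satisfies $m^{\omega} = m^{\omega+1}$, i.e. is aperiodic (no genuine cycle survives because a letter that matches residue $r$ forces its successor block-position, so idempotents stabilize). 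Then by Schützenberger's theorem ($K$ recognized by an aperiodic finite monoid $\iff$ $K$ is star-free $\iff$ $K$ is FO-definable, see \cite{DG08,McNaughtonPapert,Schutz}), $K$ is FO-definable. I would present the FO formula explicitly where possible and fall back on the aperiodic monoid argument for the modular bookkeeping, since writing the formula in full detail is routine but tedious; the one thing to double-check carefully is aperiodicity of the small monoid, i.e. that the length-mod-$3$ information does not create a $\mathbb{Z}/3\mathbb{Z}$ subgroup once the $\top$-absorbing element and the boundary constraints are taken into account.
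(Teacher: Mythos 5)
Your monotonicity argument is fine and matches the paper's (which simply notes $K$ is a union of two monotone languages). The gap is in the FO-definability half. Your first route --- an explicit formula saying position $i$ carries $\cl a$, $\cl b$ or $\cl c$ according to $i \bmod 3$ --- cannot work, since ``the number of positions before $x$ is $\equiv r \pmod 3$'' is not FO-definable over $(<)$; you notice this mid-paragraph but leave the broken argument in place. (The paper's explicit formula avoids modular counting: it uses only local constraints around \emph{anchor} positions, whose residue class is forced by their label or neighbour, plus pairwise agreement between successive anchors; its other two arguments are counter-freeness of the minimal DFA and aperiodicity of the syntactic monoid.)

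Your fallback via an aperiodic monoid is the right route and is one of the paper's three arguments, but the invariant you propose --- ``first letter's compatibility, last letter's compatibility, and length mod $3$'' --- is not the right one, and the aperiodicity you defer to a ``double-check'' is exactly where the content lies. Taken literally the invariant fails in both directions. It is too coarse to recognize $K$: the $\top$-free words $w_1=\x\y\z\x$ and $w_2=\x\{b\}\z\x$ agree on first letter, last letter and length mod $3$, yet $a\,w_1\,\y\in K$ while $a\,w_2\,\y\notin K$, because the set of starting residues compatible with the \emph{whole} word is $\{0,1\}$ for $w_1$ but only $\{0\}$ for $w_2$, and first/last-letter data does not determine this set. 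Conversely, if ``length mod $3$'' is kept as an honest component of the monoid, the image of the single letter $\x$ generates a period-$3$ cycle, so that monoid is not aperiodic. The invariant that works (and is what the paper's eggbox diagram displays) assigns to a $\top$-free word $u$ the partial map on $\Z/3\Z$ sending a residue $r$ to $r+|u|$ whenever the \emph{entire} word fits into $(\cl a\cl b\cl c)^{\omega}$ starting at residue $r$. Since only the letter $\top$ is compatible with all three residues, these partial maps have domain of size at most $2$, and such partial injections compose aperiodically: e.g.\ $\x\mapsto\{0\mapsto 1,\,1\mapsto 2\}$, whose cube and fourth power are both the empty map (matching the absorbing element $\emptyset$ of the syntactic monoid). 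Until you identify this invariant and check $m^{\omega}=m^{\omega+1}$ on it, the FO-definability claim is not established.
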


\begin{proof}
The fact that $K$ is monotone is straightforward from its definition, as the union of two monotone languages.

We will show that $K$ is FO-definable in three different ways: using its minimal DFA, its syntactic monoid, and finally describing how an FO formula recognizing it can be defined.
This gives several complementary points of view, which can all be helpful for a deep understanding of this counter-example language.

Let us start with the automaton approach.
We use the classical characterizations of first-order definable languages \cite{DG08} by verifying that the minimal DFA $\A$ of $K$ is counter-free. That is, no word induces a non-trivial cycle in $\A$.

The minimal DFA $\A$ recognizing $K$ is depicted in \Cref{fig:aut}.  We note $\neg a=$\linebreak[4]$\{\emptyset, \{b\},\{c\},\{b,c\}\}$ the sub-alphabet of $A$ of letters not containing $a$, similarly for $\neg b$ and $\neg c$. The edges going to rejecting state $\bot$ are grayed and dashed, and the ones going to accepting sink state $q_\top$ are grayed, for readability. We also note $a'=\cl a\setminus\{\top\}=\{\{a\},\x,\z\}$, and similarly for $b',c'$.

\begin{figure}[h]
\centering
\scalebox{.9}{
\begin{tikzpicture}[shorten >=1pt,node distance=3cm,on grid,auto,initial text=,
every state/.style={inner sep=0pt,minimum size=6mm}]
    \node[state,accepting]	(p) {$q_a$};
        \node[state,below right=1.5cm and 3cm of p]	(q) {$q_b$};
        \node[state,above right=1.5cm and 3cm of q] (r) {$q_c$};
        \node[state,accepting, right=2cm of r] (top) {$q_\top$};
        \node[state, below=2cm of q] (bot) {$\bot$};
      \node[left=1cm of p] (start) {};
   \draw[->](start) -- (p);
   \path[->] 
   		(p) edge node {$a'$} (q)
   		(q) edge node[above left] {$b'$} (r)
   		(r) edge node[above] {$c'$} (p)
   		(p) edge[gray,bend left] node{$\top$} (top)
   		(q) edge[gray,bend right] node[gray,near start=.3]{$\top$} (top)
		(r) edge[gray] node[gray]{$\top$} (top)
		(top) edge[gray,loop above] node[gray]{$A$} ()
		(bot) edge[gray,bend right] node[gray,below right]{$\top$} (top)
		(p) edge[gray,dashed] node[gray,left] {$\neg a$} (bot)
   		(q) edge[gray,dashed] node[gray,left,near start=.3] {$\neg b$} (bot)
   		(r) edge[gray, dashed, bend left] node[gray] {$\neg c$} (bot)
   		(bot) edge[gray,loop below] node[gray]{$A\setminus\{\top\}$} ()
   	;
\end{tikzpicture}
}
\caption{The minimal DFA $\A$ of $K$}\label{fig:aut}
\end{figure}
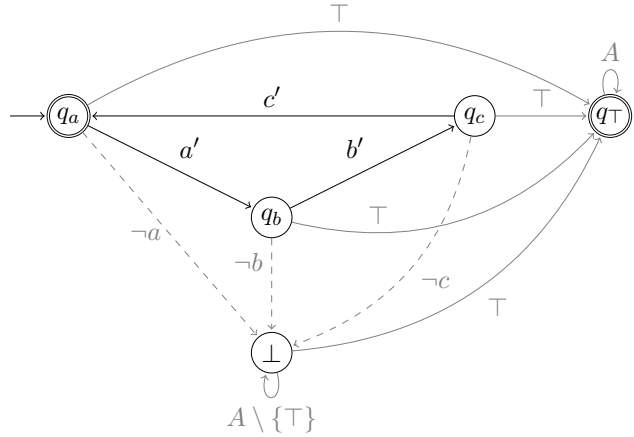

To show that $K$ is FO-definable, it suffices to show that $\A$ is counter-free, i.e. that there is no word $u\in\Ast$ , distinct states $p,q$ of $\A$, and integer $k$, such that $p\trans{u}q$ and $q\trans{u^k}p$. Assume for contradiction that such $u,p,q,k$ exist. Since the only non-trivial strongly connected component in $\A$ is $\{q_a,q_b,q_c\}$, these states are the only candidates for $p,q$. Since $p,q$ are distinct, it means $|u|$ is not a multiple of $3$, and $u$ induces a $3$-cycle, either $q_a\trans{u} q_b\trans{u} q_c\trans{u} q_a$ if $|u|\equiv 1\mod 3$ or in the reverse order if $|u|\equiv 2\mod 3$. Thus, the first letter of $u$ can be read from all states from $\{q_a,q_b,q_c\}$, while staying in this component. Such a letter does not exist, so we reach a contradiction. The DFA $\A$ is counter-free, so $K$ is FO-definable \cite{DG08}.
\end{proof}

\subsection{Syntactic monoid for the language $K$}\label{app:counter}

It is instructive to see what the syntactic monoid of $K$ looks like, in particular to get a first intuition on how an FO formula can be defined for $K$.

We depict this monoid $M$ in \Cref{fig:monoid}, using the eggbox representation based on Green's relations: boxes are $\mathcal J$-classes, lines are $\mathcal R$-classes, columns are $\mathcal L$-classes, and cells are $\mathcal H$-classes. See \cite{Green} for an introduction to Green's relations and eggbox representation.

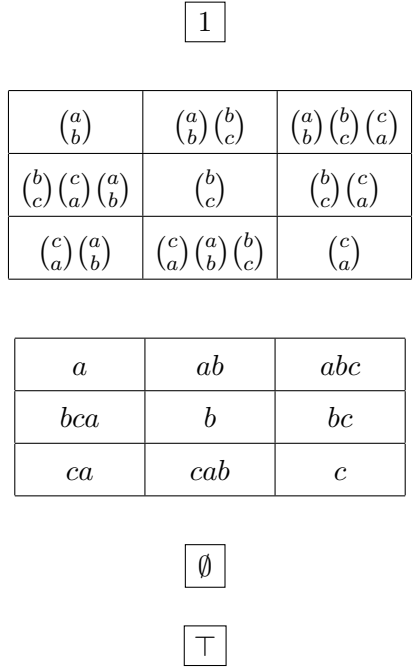
\begin{figure}[h]
\centering
\begin{tikzpicture}
\node[draw,rectangle, minimum size=15pt] (q1) {$1$};
\node[below=.5cm of q1] (q2) {\arraycolsep=4pt\def\arraystretch{1.8}
$\begin{array}{|c|c|c|}
\hline
\x & \x\y & \x\y\z \\
\hline
\y\z\x & \y & \y\z \\
\hline
\z\x & \z\x\y & \z \\
\hline
\end{array}
$};
\node[below=.5cm of q2] (q3) {\arraycolsep=17pt\def\arraystretch{1.5}
$\begin{array}{|c|c|c|}
\hline
a & ab & abc \\
\hline
bca & b & bc \\
\hline
ca & cab & c \\
\hline
\end{array}
$};
\node[draw,rectangle,below=.5cm of q3, minimum size=15pt] (q4) {$\emptyset$};
\node[draw,rectangle,below=.5cm of q4, minimum size=15pt] (q5) {$\top$};
\end{tikzpicture}
\caption{The syntactic monoid $M$ of $K$}\label{fig:monoid}
\end{figure}

The syntactic morphism $h:\Ast\to M$ is easily inferred, as elements of the monoid in $h(A)$ are directly named after the letter mapping to them. The accepting part of $M$ is $F=\{1,\x\y\z,\z\x\y,abc,\top\}$.

To show that $K$ is FO-definable, it suffices to verify that $M$ is aperiodic, which is directly visible on \Cref{fig:monoid}, as all $\mathcal H$-classes are singletons (see \cite{Green}).

\subsection{Defining an FO formula for the language $K$}\label{app:anchors}
We now give some intuition on how an FO-formula can recognize $K$.

Recall that $K=(\cl a\cl b\cl c)^* + \Ast\top\Ast.$
We describe here the behaviour of a formula witnessing that $K$ is FO-definable.

The $\Ast\top\Ast$ part of $K$ is just to rule out words containing $\top$ by accepting them, which can be done by a formula $\exists x.\top(x)$.
So we just need to design a formula $\varphi$ for $K'=(\cl a \cl b \cl c)^*\setminus (\Ast \top \Ast)$, assuming the letter $\top$ does not appear, the final formula will then be $\varphi\vee\exists x.\top(x)$. 

We will call \emph{forbidden pattern} any word that is not an infix of a word in $K'$.
Let us call \emph{anchor} a position $x$ such that either $x$ is labelled by a singleton, or $x$ is labelled by $\x$ (resp. $\y,\z$) with $x+1$ labelled by a letter different from $\y$ (resp. $\z,\x$).
The idea is that if $x$ is an anchor position of $u\in K'$, then there is only one possibility for the value of $x\mod 3$. If the first position is labelled by a letter from $\cl a$, we will consider that it is an anchor labelled $a$, otherwise we will reject the input word. Similarly, the last position is either a $c$ anchor or causes immediate rejection of the word.
If $x,y$ are successive anchor positions (i.e. with no other anchor positions between them), the word $u[x+1..y-1]$ is necessarily an infix of $(\x\y\z)^*$. We say that an anchor $x$ \emph{goes right-up} (resp. \emph{right-down}) if we can replace the letter $\duo\alpha\beta$ by $\alpha$ (resp. $\beta$) at position $x+1$ without having a forbidden pattern in the immediate neighbourhood of $x$. Notice that $x$ can not go both right-up and right-down. We define in the same way the left-up and left-down property by replacing $x+1$ with $x-1$.
For instance consider $u=\x\y\z\x\y c\x\y\z\x\y\y\z\x\y$, then apart from the first and last position there are two anchors: $x=5$ labelled $c$ and $y=10$ labelled $\y$, because it is followed by another $\y$. 


\begin{figure}[h]
\begin{center}
\includegraphics[scale=1.4]{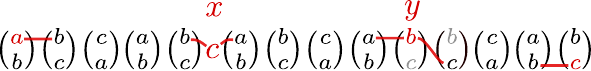}
\caption{A visualization of anchors}
\end{center}
\end{figure}

The anchor $x$ goes left-up and right-up, while the anchor $y$ goes left-up and right-down. If $d\in\{\text{up, down}\}$ is a direction, we say that two successive anchors $x<y$ \emph{agree on $d$} if $x$ goes right-$d$ and $y$ goes left-$d$. We say that $x$ and $y$ agree if they agree on some $d$.

Now, the formula $\varphi$ will express the following properties:
\begin{itemize}
\item for all $x,x+1$ consecutive anchors, the letters at positions $x,x+1, x+2$ do not form a forbidden pattern (omit $x+2$ if $x+1$ is the last position).
\item all non-consecutive successive anchors agree.
\end{itemize}
For instance the formula will accept the word $u$ above, as the anchors $0,x$ agree on up, $x,y$ agree on up, and $y,\last$ agree on down.

It is routine to verify that these properties can be expressed in FO, and that they indeed characterize the language $K'$. 

\subsection{Undefinability of $K$ in $\FOp$}

To prove that $K$ is the wanted counter-example, it remains to show:
\begin{lem}\label{lem:notFOp}
$K$ is not $\FOp$-definable.
\end{lem}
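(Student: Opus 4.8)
The plan is to use \Cref{cor:EF}: it suffices to exhibit, for every $n\in\N$, a pair of words $(u_n,v_n)$ with $u_n\in K$, $v_n\notin K$, and $u_n\preceq_n v_n$, i.e.\ Duplicator wins the $n$-round ordered $\EF$-game $\EFn(u_n,v_n)$. Since the second disjunct $\Ast\top\Ast$ of $K$ must be avoided by $v_n$, I would choose $v_n$ to contain no $\top$ letter at all, and $u_n$ likewise; this keeps the game inside the sub-alphabet $A\setminus\{\top\}$, where monotonicity is non-trivial but the structure is manageable. The natural candidate is to take $u_n=(\x\y\z)^{N}$ for a length $N$ that is a multiple of $3$ and very large compared to $n$ (say $N\geq 3\cdot 2^n$), so $u_n\in K_0=(\cl a\cl b\cl c)^*\subseteq K$, and take $v_n$ to be a word of length $N$ that is \emph{not} of the form $(\x\y\z)^N$ modulo the cyclic shift structure — for instance a word over $\{\x,\y,\z\}$ of length $N$ with exactly one "defect" (a place where the cyclic pattern $\x\y\z$ breaks), placed far from both ends. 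One must check $v_n\notin K$: it has no $\top$, it has length a multiple of $3$, but the single defect forces it out of $(\cl a\cl b\cl c)^*$ — here one uses the anchor analysis from \Cref{app:anchors}, since a defect creates two successive anchors that disagree.

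\textbf{The game argument.} The heart of the proof is the Duplicator strategy. Both words have the same length $N$, both are over $\{\x,\y,\z\}$, and the key point is that every letter of $u_n$ and $v_n$ is one of $\x,\y,\z$, which are \emph{incomparable} in the powerset order, while each is $\leq_A\top$. The validity condition for the ordered game at a placed pair $(\alpha(i),\beta(i))$ requires $u_n[\alpha(i)]\leq_A v_n[\beta(i)]$; since no $\top$ occurs, this forces $u_n[\alpha(i)]=v_n[\beta(i)]$, so on these words the ordered game degenerates to the usual order-and-label $\EF$-game on $(\{\x,\y,\z\},=)$. Now I would exploit that $u_n$ is the "perfect" periodic word and $v_n$ has a single local defect far from the ends: Duplicator maintains the invariant that for each previously placed token pair $(p_i,q_i)$, the labels agree, the relative order agrees, and moreover the distances between consecutive tokens (and to the endpoints) are either equal on both sides or both "large" (say $\geq 2^{n-r}$ after $r$ rounds), and the defect of $v_n$ lies in a "large" gap not straddled by matching tokens. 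This is exactly the classical $\EF$ back-and-forth for periodic words, with the extra bookkeeping that the defect hides inside a gap that is never pinned down within $n$ rounds. Whenever Spoiler plays in a small gap, Duplicator copies the local neighbourhood exactly (the two words look identical there since the defect is far); whenever Spoiler plays in a large gap, Duplicator plays at the corresponding position in a large gap on the other side, keeping both halves large — and crucially, when Spoiler plays on the $u_n$-side near where the $v_n$-defect would be, Duplicator answers on the defect-free side of the $v_n$-gap, which is possible because the gap is exponentially larger than the remaining number of rounds.

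\textbf{Main obstacle.} The delicate part is making the defect genuinely invisible: I need $v_n$ to break $(\cl a\cl b\cl c)^*$ (and not just $(\x\y\z)^*$) while remaining $\EF_n$-indistinguishable from $u_n$. A defect that is a mere cyclic shift, like $\x\y\z\x\y\z\cdots$ versus $\y\z\x\y\z\x\cdots$, is detected immediately by the first and last letters, so the endpoints must still read as a valid $a$-anchor and $c$-anchor; thus the defect has to be an "internal" anomaly such as a doubled letter ($\cdots\x\y\y\z\cdots$) that both keeps the length a multiple of $3$ and keeps the boundary letters correct, yet whose local pattern $\y\y$ is a forbidden infix of $(\cl a\cl b\cl c)^*$ — one checks via the monoid or the automaton $\A$ of \Cref{fig:aut} that no word with infix $\{b\}\{b\}$ (or any such doubled-singleton defect that also preserves endpoint anchors) lies in $K$. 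Getting the arithmetic of lengths, anchor positions mod $3$, and the forbidden-infix condition all to line up simultaneously, and then verifying that a single such defect is $\EF_n$-hidden for a word of length $\Theta(2^n)$, is the real work; the rest is a routine, if careful, back-and-forth argument in the spirit of the standard proof that $(\x\y\z)^*$ is not star-free-checkable by bounded-rank formulas, adapted to the ordered game via the observation above that incomparability of $\x,\y,\z$ collapses the ordered game to the ordinary one.
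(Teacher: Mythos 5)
There is a fundamental obstruction to your approach. You take both $u_n=(\x\y\z)^N$ and $v_n$ to be words over the sub-alphabet $\{\x,\y,\z\}$, and you correctly observe that since these three letters are pairwise incomparable, the validity condition $u_n[\alpha(i)]\leq_A v_n[\beta(i)]$ forces equality of labels, so the ordered game degenerates to the \emph{standard} EF-game. But the standard EF-game characterizes plain FO-definability, and $K$ \emph{is} FO-definable (\Cref{lem:KFO}), say by a formula of quantifier rank $n_0$. Hence for every $n\geq n_0$ and every pair $(u,v)\in K\times\overline K$, Spoiler wins the standard $n$-round game, and therefore also the ordered game whenever both words lie in a trivially ordered sub-alphabet. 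No choice of "defect", however well hidden, can rescue this: your plan can only ever produce winning strategies for Duplicator up to the fixed bound $n_0$, which is not enough for \Cref{cor:EF}. The missing idea is that the order must be used \emph{non-trivially across the two words}: one should take $u$ over the singletons and $v$ over the doubletons, so that each letter of $v$ strictly dominates two distinct letters of $u$ (e.g.\ both $a$ and $c$ sit below $\z$). This two-to-one ambiguity is what lets Duplicator absorb an off-by-one discrepancy, exactly as in the $(aa)^*$ argument. The paper's choice is $u=(abc)^{N}$ and $v=(\x\y\z)^{N-1}\x\y$ with $|v|=|u|-1$ and $|v|\equiv 2 \bmod 3$: the prefix of $u$ aligns with $v$ via $a\mapsto\x$, $b\mapsto\y$, $c\mapsto\z$, while the suffix aligns via the shifted matching $c\mapsto\y$, $b\mapsto\x$, $a\mapsto\z$, and Duplicator hides the mismatch between these two alignments for $n$ rounds by the usual halving argument.

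A secondary error: $\y\y$ is \emph{not} a forbidden infix of $(\cl a\cl b\cl c)^*$; for instance $\x\y\y\in\cl a\,\cl b\,\cl c$ since $\y\in\cl b$ and $\y\in\cl c$. A doubled letter only pins down the residue mod $3$ of its position (it is an ``anchor'' in the sense of the paper's discussion); whether the word is rejected depends on how that residue interacts with the endpoints, which is precisely the kind of long-range constraint that plain FO can check here but that the length-discrepancy construction above makes invisible to $\FOp$.
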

\begin{proof}
We establish this using \Cref{cor:EF}.
Let $n\in\N$, and $N=2^n$. We define $u=(abc)^{N}$ and $v=[\x\y\z]^{N-1}\x\y$.
Notice that $u\in K$, and $v\notin K$ because $|v|\equiv 2\;\mathrm{mod}\; 3$, and $v$ does not contain $\top$. By \Cref{cor:EF}, it suffices to prove that $u\preceq_n v$ to conclude.
We give a strategy for Duplicator in $\EFn(u,v)$. The strategy is an adaptation from the classical strategy showing that $(aa)^*$ is not FO-definable \cite{Libkin04}.
To simplify the description of the strategy, let us consider that prior to the game, tokens $\first$, $\last$ are placed on the first and last positions on $u$, and $\first',\last'$ on the first and last position of $v$. The strategy of Duplicator during the game is then as follows: every time Spoiler places a token in one of the words, Duplicator answers in the other by replicating the closest distance (and direction) to an existing token.
This strategy is illustrated in \Cref{fig:strat}, where move $i$ of Spoiler (resp. Duplicator) is represented by \Adam{$i$} (resp. \Eve{$i$}).

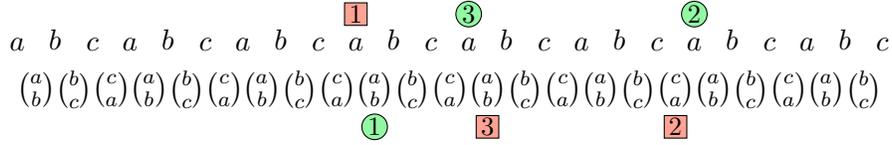
\begin{figure}[H]
\begin{center}
\scalebox{1}{
\begin{tikzpicture}[eve/.style={circle,draw=black, fill=evegreen!60, inner sep=0,minimum size=.34cm}, adam/.style={draw,fill=adamred!60,inner sep=0, minimum width=.3cm,minimum height=.3cm]}]
\def\s{.5}  
\foreach \k in {0,...,7}
{
\node at (\k*3*\s,0) {$a$};
\node at (\k*3*\s+\s,.06) {$b$};
\node at (\k*3*\s+2*\s,0) {$c$};
}
\foreach \k in {0,...,6}
{
\node at (\k*3*\s+.5*\s,-.6) {$\x$};
\node at (\k*3*\s+1.5*\s,-.6) {$\y$};
\node at (\k*3*\s+2.5*\s,-.6) {$\z$};
}
;
\node at (7*3*\s+.5*\s,-.6) {$\x$};
\node at (7*3*\s+1.5*\s,-.6) {$\y$};

\node[adam] at (3*3*\s,.4) {\small $1$};
\node[eve] at (3*3*\s+.5*\s,-1.1) {\small $1$};

\node[adam] at (5*3*\s+2.5*\s,-1.1) {\small $2$};
\node[eve] at (5*3*\s+3*\s,.4) {\small $2$};

\node[adam] at (4*3*\s+.5*\s,-1.1) {\small $3$};
\node[eve] at (4*3*\s,.4) {\small $3$};

\end{tikzpicture}
}
\caption{An example of Duplicator's strategy for $n=3$.}\label{fig:strat}
\end{center}
\end{figure}

Intuitively, the strategy of Duplicator is to match $u$ with the top row of $v$ if Spoiler plays close to the beginning of the words, and with the bottom row of $v$ if Spoiler plays close to the end.

We have to show that this strategy of Duplicator allows him to play $n$ rounds without losing the game. This proof is similar to the classical one for $(aa)^*$, see e.g. \cite{Libkin04}, and actually the strategy is exactly the same if we forget the letter labels. The main intuition is that the length of the non-matching intervals between $u$ and $v$ is at worst divided by $2$ at each round, and it starts with a length of $2^n$, so Duplicator can survive $n$ rounds.

Let us show that this strategy is indeed winning for Duplicator in $\EFn(u,v)$.

We will generally write $p,p'$ for related tokens, $p$ being the position in $u$ and $p'$ the position in $v$.

The proof works by showing that the following invariant holds: after $i$ rounds where Duplicator did not lose, if tokens in positions $p<q$ in $u$ are related to tokens $p'<q'$ in $v$, and $u[p..q]\not\leq_A v[p'..q']$, let us note $d=q-p, d'=q'-p'$; then $d=d'+1$ and $d\geq 2^{n-i}$.
In other words, if we call \emph{wrong interval} a factor $u[p..q]$ or $v[p'..q']$ such that $u[p..q]\not\leq_A v[p'..q']$, the invariant states that after $i$ rounds, the length of the smallest wrong interval in $u$ is at least $2^{n-i}$, and corresponding wrong intervals differ by $1$, the one in $u$ being longer. 
Before the first round, this invariant is true, as the only tokens are at the endpoints of $u$ and $v$, and we have $|u|=|v|+1$ and $|u|\geq 2^n$.
Now, assume the invariant true at round $i$, and consider round $i+1$. When Spoiler plays a token in one of the words, two cases can happen. If it is played between previous tokens $p,q$ (resp. $p',q'$) such that $u[p..q]\leq_A v[p'..q']$, then Duplicator will simply answer the corresponding position in the other word, and the smallest wrong interval is not affected.
If on the contrary, the new token is played in a minimal wrong interval, say $u[p,q]$ on position $r$, then Duplicator will answer by preserving the closest distance between $r-p$ and $q-r$. For instance if $r-p<q-r$, Duplicator will answer $r'=p'+(r-p)$. We can notice that by definition of the words $u$ and $v$, and since $u[p]\leq v[p']$ by the rules of the game, we have $u[p..r]\leq_A v[p'..r']$, and in particular $u[r]\leq_A v[r']$, so the move of Duplicator is legal. Moreover, since $q-r>r-p$, we have $q-r\geq\frac{q-p}2$, so using the induction hypothesis, $q-r\leq 2^{n-(i+1)}$. Moreover, since we had $(q-p)=(q'-p')+1$, we now have $(q-r)=(q-p)-(r-p)=(q'-p')+1-(r'-p')=(q'-r')+1$, so the invariant is preserved. The case where $r-p\geq q-r$ is symmetrical. If on the other hand Spoiler plays in $v$ a position $r'$ in a wrong interval $v[p'..q']$, then $\min(r'-p',q'-r')$ will be strictly smaller than $2^{n-(i+1)}$, and will be replicated by the answer $r$ of Duplicator in $u[p..q]$. This means that the new smallest wrong interval created in $u$ will have length at least $2^{n-(i+1)}$, thereby guaranteeing that the invariant is also preserved in this case. \qedhere

\end{proof}

\section{Lyndon's Theorem}
In this section we will see what the existence of this counter-example language $K$ means for Lyndon's theorem on other structures. We start by showing in Section \ref{subsec:struct} that it can be adapted to show the failure of Lyndon's theorem on finite structures. We then show in Section \ref{subsec:graphs} that the counter-example can also be encoded in finite directed graphs, and finally on undirected graphs.

\subsection{General structures}\label{subsec:struct}
We will consider here first-order logic on arbitrary signatures and unconstrained structures. All definitions of Section \ref{sec:FO} can be naturally extended to this general setting, and all results from Section \ref{subsec:prop} extend straightforwardly. We will later extend the $\EF$-game result as well.

Our goal is to see how \Cref{thm:ce} can be lifted to this general framework.

\begin{defi}
A formula $\varphi$ is \emph{monotone} in a predicate $P$ if whenever a structure $S$ is a model of $\varphi$, any structure $S'$ obtained from $S$ by adding tuples to $P$ is also a model of $\varphi$.
\end{defi}

\begin{exa}
On graphs, where the only predicate is the edge predicate, the formula asking for the existence of a triangle is monotone, but the formula stating that the graph is not a clique is not monotone.
\end{exa}

\begin{defi}
A formula $\varphi$ is \emph{positive} in $P$ if it never uses $P$ under a negation.
\end{defi}

Let us recall the statement of Lyndon's Theorem, which holds on general (possibly infinite) structures: 
\begin{thmC}[{\cite[Cor 2.1]{Lyndon59}}]\label{thm:Lyndon}
If $\psi$ is an FO formula monotone in predicates $P_1,\dots, P_n$, then it is equivalent to a formula positive in predicates $P_1,\dots,P_n$.
\end{thmC}

We will now see explicitly how the language $K$ from Section \ref{sec:K} can be used to show that Lyndon's Theorem fails on finite structures.

The failure of this theorem on finite structures was first shown in \cite{AjtaiGurevich87} with a very difficult proof, then reproved in \cite{Stol95} with a simpler one, using the Ehrenfeucht-Fra\"iss\'e technique. Still, the proof from \cite{Stol95} is quite involved compared to the one we present here.
\medskip

Since Lyndon's theorem can be found in the literature under different formulations, and since it is not clear at first sight that they are equivalent, we make it clear here how the construction of this paper applies to all of them. This also serves the purpose of making explicit the exact signature needed in each formulation to show the failure on finite structures with our method.

We will describe a signature by its sequence of arities, and add a symbol ${\uparrow}$ to specify monotone predicates. For instance $(2,1{\uparrow})$ describes a signature consisting of a binary predicate and a monotone unary predicate. Notice that the order is not important here, we are only interested in the multiset of pairs arity/monotonicity.

\medskip

\noindent(i) \textbf{Arbitrarily many monotone predicates}

This is the most general formulation of Lyndon's theorem, as made explicit in \Cref{thm:Lyndon}. Let us show that our language $K$ shows its failure on finite structures.

We will use here the fact that if $P=\{a,b,c\}$ is a set of monadic predicates, then a finite model over the signature $(\leq,a,b,c)$ where the order $\leq$ is total is simply a finite word on the powerset alphabet $A=\mathcal P(P)$.
Therefore, in order to view our words as general finite structures, it suffices to axiomatize the fact that $\leq$ a total order. This can be done with a formula $\psitot=(\forall x,y. ~x\leq y\vee y\leq x)\wedge (\forall x,y,z.~ x\leq y\wedge y\leq z\Rightarrow x\leq z)\wedge (\forall x,y.~x\leq y\wedge y\leq x\Rightarrow x=y)\wedge(\forall x.~x\leq x)$.
Notice that $\psitot$ is not monotone in the predicate $\leq$.

Let $\varphi$ be the FO-formula defining $K$, obtained in \Cref{lem:KFO}, and let $\psi=\varphi\wedge\psitot$.
Then, $\psi$ is monotone in predicates $a,b,c$, and finite structures on signature $(\leq,a,b,c)$ satisfying $\psi$ are exactly words of $K$. However, as we proved in \Cref{thm:ce}, no first-order formula that is positive in predicates $a,b,c$ can define the same class of structures, since the same formula interpreted on words would be an $\FOp$-formula for $K$.

This gives a counter-example on a signature $(2,1{\uparrow},1{\uparrow},1{\uparrow})$.

\medskip
\noindent(ii) \textbf{Single monotone predicate}

Other formulations of Lyndon's Theorem use a single monotone predicate, as in \cite{AjtaiGurevich87} and \cite{Stol95}. More precisely, \cite{AjtaiGurevich87} uses a signature $(1,2,3,3,3,3,3,4,1{\uparrow})$, and \cite{Stol95} a signature $(2,2{\uparrow})$.

We can encode the language $K$ in this framework, by using one binary predicate $A$ to represent all letter predicates. Let $K_3$ be $K$ restricted to words of length at least $3$. By \Cref{thm:ce} it is clear that $K_3$ is FO-definable but not $\FOp$-definable.

Let $\psi_3$ be an FO-formula stating that there are at least $3$ elements $0,1,2$, and that for all $y\notin\{0,1,2\}$ and for all $x$, $A(x,y)$ holds. We build the FO formula $\varphi'$ from the FO formula $\varphi$ recognizing the language $K_3$ by replacing every occurrence of $a(x)$ (resp. $b(x), c(x)$) by $A(x,0)$ (resp. $A(x,1), A(x,2)$).

\noindent Finally, we define the FO formula $\psi'=\psitot\wedge\psi_3\wedge\varphi'$. Finite structures on signature $(\leq,A)$ accepted by $\psi'$ are exactly those which encode words of $K_3$. No formula positive in $A$ can recognize this class of structures, otherwise we could obtain from it an $\FOp$-formula for $K_3$, by replacing every occurrence of $A(x,y)$ by $(a(x)\wedge y=0) \vee (b(x)\wedge y=1) \vee(c(x)\wedge y=2) \vee y\geq 3$.

We thus give a counter-example on a signature $(2,2{\uparrow})$, as was done in \cite{Stol95}.
\medskip

\noindent(iii) \textbf{Closure under surjective homomorphisms}

Lyndon's theorem is also often stated in the following way: if an FO formula defines a class of structures closed under surjective homomorphisms, then it is equivalent to a positive formula.
This formulation is equivalent to saying that the formula is monotone in all predicates. This case has been treated in \cite{RosenPHD}, using a slight modification of the construction from \cite{Stol95}, and building a counter-example on signature $(0,0,1{\uparrow},1{\uparrow},2{\uparrow},2{\uparrow},2{\uparrow})$. Notice that arities $0$ correspond to constants, which are always trivially monotone.

We can deal with this framework as well, by incorporating a predicate $\not\leq$ to the signature. Let $\psi_{\not\leq}$ be the formula obtained from $\psi$ defined in case (i) by pushing negations to the leaves and replacing all subformulas of the shape $\neg(x\leq y)$ with $x\not\leq y$. Let $$
\begin{array}{ll}
\psim &= \forall x,y.~(x\leq y\vee x\not\leq y)\\
\psip &=\exists x,y.~(x\leq y \wedge x\not\leq y)\\
\psi''&=(\psim\wedge\psi_{\not\leq} )\vee\psip.
\end{array}$$

Finite structures on the signature $(\leq,\not\leq,a,b,c)$ can be classified into three categories:
\begin{enumerate}
\item if there are $x,y$ such that $x\leq y \wedge x\not\leq y$, then the structure satisfies $\psip$ hence $\psi''$
\item otherwise, if there are $x,y$ such that $\neg(x\leq y\vee x\not\leq y)$, then the structure does not satisfy $\psim$ so it does not satisfy $\psi''$ either.
\item otherwise, $\not\leq$ is the complement of $\leq$, and the structure satisfies $\psi''$ if and only if it satisfies $\psi_{\not\leq}$.
\end{enumerate} 
Therefore, in $\psi_{\not\leq}$ we can use $\leq$ and $\not\leq$ freely, assuming that $\not\leq$ is actually the complement of $\leq$. In particular the $\psitot$ subformula of $\psi_{\not\leq}$ axiomatizes the fact that $\leq$ is a total order, provided that $\not\leq$ is its complement.
So the structures of item (3) are exactly the words of $K$, with an additional predicate $\not\leq$ which is the complement of $\leq$. Items (1),(2) guarantee that the class of finite structures accepted by $\psi''$ is monotone with respect to $\leq$ and $\not\leq$ as well.
As before, it is impossible to have a formula positive in all predicates accepting the same class of finite structures as $\psi''$, since replacing $x\not\leq y$ with $y<x$ in this formula would directly yield an $\FOp$-formula for $K$.

We thus give a counter-example on signature $(2{\uparrow},2{\uparrow},1{\uparrow},1{\uparrow},1{\uparrow})$, which can also be adapted to a signature $(2{\uparrow},2{\uparrow},2{\uparrow})$ by applying the trick of case (ii). Both results constitute an improvement upon the signature from \cite{RosenPHD}.

\subsection{$\EF$-games on arbitrary signature}\label{sec:EFgen}

\newcommand{\lmon}{{l}}
\newcommand{\larb}{{l'}}

We consider here $\FOp$ on an arbitrary signature consisting of $\lmon$ monotone predicates $P_1,\dots, P_\lmon$ of arity $r_1,\dots,r_\lmon$ respectively, and $\larb$ non-monotone predicates $R_1,\dots,R_\larb$ of arity $r'_1,\dots, r'_\larb$ respectively. 

Thus the syntax of $\FOp$ in this setting is:
$$\varphi,\psi:= P_i(x)\mid R_i(x)\mid \neg R_i(x)\mid \varphi\vee \psi\mid \varphi\wedge\psi \mid \exists x.\varphi\mid\forall x.\varphi$$

Notice that we allow the negation of predicates from $R_i$.

We can additionally assume that these formulas will only be evaluated on structures verifying certain axioms, for instance on structures where a predicate $\leq$ evaluates to a linear order, or where the predicate $=$ corresponds to equality. These will be called $\sigma$-structures in the following, where the signature $\sigma$ can be enriched by such axioms.


The $\EF$-game on two $\sigma$-structures $u,v$ is defined as before, with the following generalization:
at a given stage $(u,\alpha,v,\beta)$ of the game, where $\alpha$ (resp. $\beta$) is a valuation in $u$ (resp. $v$) for already played tokens, Duplicator must ensure:
\begin{itemize}
\item For any monotone predicate $P_i$ and tuple $\vec x$ of $r_i$ played tokens, we must have $u,\alpha\models P_i(\vec x)\Rightarrow v,\beta\models P_i(\vec x)$.
\item For any non-monotone predicate $R_i$ and tuple $\vec x$ of $r'_i$ played tokens, we must have $u,\alpha\models P_i(\vec x)\Leftrightarrow v,\beta\models P_i(\vec x)$.
\end{itemize}

As before, we note $u\preceq_n v$ if Duplicator has a strategy to win the $n$-round $\EF$-game between $s$ and $t$.
We want to show the following theorem, generalizing \Cref{thm:EF}, and formulated in \cite{Stol95}:

\begin{thmC}[{\cite[Thm 2.4]{Stol95}}]\label{thm:EFgen}
In a general setting, for any $\sigma$-structures $u,v$, we have $u\preceq_n v$ if and only if for all formulas $\varphi$ of $\FOp$ with $\qr(\varphi)\leq n$, we have $(s\models\varphi)\Rightarrow (t\models\varphi)$.
\end{thmC}

The proof is an adaptation of the classical proof for correctness of EF-games, see e.g. \cite{Libkin04}.

Since $\FOp$ is a fragment of FO, we can directly use the following Lemma:
\begin{lemC}[{\cite[Lem 3.13]{Libkin04}\label{lem:finrank}}]
Let $n,k\in \N$. Up to logical equivalence, there are finitely many formulas of quantifier rank at most $n$ using $k$ free variables.
\end{lemC}

We will now show a strengthening of \Cref{thm:EFgen}, where free variables are incorporated in the statement:

\begin{thm}
Let $n,k\in\N$, $u,v$ $\sigma$-structures, $\alpha:[1,k]\to\dom(u)$ and $\beta:[1,k]\to\dom(v)$ be valuations for $k$ variables $x_1,\dots,x_k$ in $u,v$ respectively.
Then Duplicator wins $\EFn(u,\alpha,v,\beta)$ if and only if for any $\FOp$ formula $\varphi$ with $\qr(\varphi)\leq n$ using $k$ free variables $x_1\dots x_k$, we have $u,\alpha\models\varphi~\Rightarrow~v,\beta\models\varphi$.
\end{thm}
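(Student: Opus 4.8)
The plan is to prove both directions by a simultaneous induction on the number of rounds $n$, keeping the free-variable count $k$ (and the valuations $\alpha,\beta$) as part of the induction hypothesis so that the quantifier steps reduce cleanly. Before starting the induction I would record the base case $n=0$: a $0$-round game from position $(u,\alpha,v,\beta)$ is won by Duplicator exactly when the position is ``valid'', i.e. when $u,\alpha$ and $v,\beta$ agree on all non-monotone atomic predicates over the played tokens and $u,\alpha\models P_i(\vec x)$ implies $v,\beta\models P_i(\vec x)$ for every monotone $P_i$; and this is precisely the condition that every quantifier-free $\FOp$-formula true in $(u,\alpha)$ is true in $(v,\beta)$, since quantifier-free $\FOp$-formulas are positive Boolean combinations of atoms $P_i(\vec x)$, $R_i(\vec x)$, $\neg R_i(\vec x)$, and positive Boolean operations preserve the implication while the sign conditions on atoms match the validity conditions exactly.

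For the inductive step, assume the equivalence holds for $n$ rounds and any number of free variables, and consider an $(n+1)$-round game from $(u,\alpha,v,\beta)$ with $\alpha,\beta$ valuations for $x_1,\dots,x_k$. For the direction ``Duplicator wins $\Rightarrow$ formulas transfer'', take $\varphi$ with $\qr(\varphi)\le n+1$ and $k$ free variables, with $u,\alpha\models\varphi$; I would put $\varphi$ in a normal form as a positive Boolean combination of atoms, negated $R_i$-atoms, and formulas of the shape $\exists x_{k+1}.\theta$ or $\forall x_{k+1}.\theta$ with $\qr(\theta)\le n$ (this is where \Cref{lem:finrank} is implicitly used, and one should note explicitly that pushing to this form stays within $\FOp$ — negations never migrate onto monotone predicates). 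Atoms and negated-$R_i$ atoms transfer because the first round's validity requirement already forces it on the current tokens; positive Boolean combinations transfer because the property is closed under $\vee,\wedge$; for $\exists x_{k+1}.\theta$, if it holds in $(u,\alpha)$ witnessed by $i\in\dom(u)$, let Spoiler play $i$ in $u$, let Duplicator answer $j\in\dom(v)$ by his winning strategy, obtaining a won $n$-round game from $(u,\alpha[x_{k+1}\mapsto i],v,\beta[x_{k+1}\mapsto j])$, and apply the induction hypothesis (with $k+1$ free variables) to $\theta$; for $\forall x_{k+1}.\theta$, given any $j\in\dom(v)$ let Spoiler play $j$ in $v$, Duplicator answers some $i\in\dom(u)$, again a won $n$-round game, induction hypothesis gives $\theta$ transfers from $i$ to $j$, and since $\theta$ held at every $i$ it holds at every $j$.

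For the converse, I prove the contrapositive: if Duplicator does not win the $(n+1)$-round game, I build an $\FOp$-formula of quantifier rank $\le n+1$ true in $(u,\alpha)$ but false in $(v,\beta)$. Spoiler has a first move, say a token at position $i$ in $u$ (the symmetric case where Spoiler plays in $v$ is handled dually, using a universal quantifier and the fact that Duplicator loses against \emph{every} response). For every possible Duplicator response $j\in\dom(v)$, Duplicator then loses the remaining $n$-round game from $(u,\alpha[x_{k+1}\mapsto i],v,\beta[x_{k+1}\mapsto j])$, so by the induction hypothesis there is an $\FOp$-formula $\theta_j$ with $\qr(\theta_j)\le n$, $k+1$ free variables, true at $(u,\alpha[x_{k+1}\mapsto i])$ and false at $(v,\beta[x_{k+1}\mapsto j])$; set $\theta=\bigwedge_{j\in\dom(v)}\theta_j$ (finite conjunction, using \Cref{lem:finrank} to keep it finite up to equivalence) and take $\varphi=\exists x_{k+1}.\theta$, which is in $\FOp$, has quantifier rank $\le n+1$, holds in $(u,\alpha)$ via the witness $i$, and fails in $(v,\beta)$ since no $j$ satisfies all the $\theta_j$. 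In the symmetric case Spoiler plays $i$ in $v$; for each response $j\in\dom(u)$ there is $\theta_j$ of rank $\le n$ true at $(u,\alpha[x_{k+1}\mapsto j])$, false at $(v,\beta[x_{k+1}\mapsto i])$, and $\varphi=\forall x_{k+1}.\bigvee_j\theta_j$ does the job. The main obstacle, and the point deserving the most care, is checking that all the normal-form manipulations and the formulas built in the converse genuinely stay inside the \emph{positive} fragment $\FOp$ — in particular that no rewriting ever places a monotone predicate $P_i$ under a negation, and that the ``dual'' clauses (using $\forall$, and the $\neg R_i$ atoms) are handled with the correct sign — since the whole content beyond the classical EF argument is this positivity bookkeeping.
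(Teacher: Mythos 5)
Your proof is correct and follows essentially the same route as the paper's: induction on the number of rounds with the free variables carried in the hypothesis, the base case matching position validity against quantifier-free formulas, and the finiteness of rank-$\le n$ formulas (\Cref{lem:finrank}) used to form the $\exists x.\bigwedge$ and $\forall x.\bigvee$ distinguishing formulas. The only difference is that you phrase each direction contrapositively relative to the paper (building a distinguishing formula from Spoiler's winning move rather than Duplicator's strategy from formula transfer, and vice versa), which is immaterial since unfolding ``has a winning strategy'' in a finite-depth game is a purely classical quantifier exchange.
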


\begin{proof}
We prove this by induction on $n$.

\noindent\textbf{Base case $n=0$}:

Notice that quantifier-free formulas of $\FOp$ are just positive boolean combinations of atomic formulas of the form $P_i(\vec x)$, $R_i(\vec x)$ or $\neg R_i(\vec x)$. We will note $Q(\vec x)$ for such an arbitrary atomic formula.
Let $\varphi$ be such a formula with $k$ free variables accepting $u,\alpha$ but rejecting $v,\beta$. This happens if and only if there is an atomic formula $Q(\vec x)$ such that $u,\alpha\models Q(\vec x)$ and $v,\beta\not\models Q(\vec x)$. This is equivalent to saying that $(u,\alpha,v,\beta)$ is not a valid $k$-position, i.e. Spoiler wins the $0$-round game $\EF_0(u,\alpha,v,\beta)$.
\medskip

\noindent\textbf{Induction case}:
Assume there is an $\FOp$ formula $\varphi$ with $\qr(\varphi)\leq n$, accepting $u,\alpha$ but not $v,\beta$.
The formula $\varphi$ is a positive combination of atomic formulas, formulas of the form $\exists x.\psi$, and formulas of the form $\forall x.\psi$. Therefore, one of these formulas accepts $u,\alpha$ but not $v,\beta$. If it is an atomic formula, then Spoiler immediately wins $\EFn(u,\alpha,v,\beta)$ as in the base case.

If it is a formula of the form $\exists x.\psi$, then Spoiler can use the following strategy: pick a position $p$ witnessing that the formula is true for $u,\alpha$, and play the position $p$ in $u$. Duplicator will answer a position $p'$ in $v$, and the game will move to $(u,\alpha',v,\beta')$, where $\alpha'=\alpha[x\mapsto p]$ and $\beta'=\beta[x\mapsto p']$. Since the formula $\psi$ has quantifier rank at most $n-1$, and accepts $u,\alpha'$ but not $v,\beta'$, by induction hypothesis Spoiler can win in the remaining $n-1$ rounds of the game.

Now if it is a formula of the form $\forall x.\psi$, then Spoiler can do the following: pick a position $p'$ witnessing that the formula is false for $v,\beta$, and play the position $p'$ in $v$. Duplicator will answer a position $p$ in $u$, and the game will move to $(u,\alpha',v,\beta')$, where $\alpha'=\alpha[x\mapsto p]$ and $\beta'=\beta[x\mapsto p']$. Since the formula $\psi$ has quantifier rank at most $n-1$, and accepts $u,\alpha'$ but not $v,\beta'$, by induction hypothesis Spoiler can win in the remaining $n-1$ rounds of the game.
\medskip

Let us now show the converse implication. We assume any formula of quantifier rank at most $n$ accepting $u,\alpha$ must accept $v,\beta$, and we give a strategy for Duplicator in $\EFn(u,\alpha,v,\beta)$.

Suppose Spoiler places token $x$ at position $p$ in $u$. Let $\alpha'=\alpha[x\mapsto p]$.
By \Cref{lem:finrank}, up to logical equivalence, there is only a finite set $F$ of $\FOp$ formulas of rank at most $n-1$ with $k+1$ free variables accepting $u, \alpha'$. Let $\psi=\bigwedge_{\varphi\in F} \varphi$.
Then $u,\alpha$ satisfies the formula $\exists x.\psi$ of rank $n$ (as witnessed by $p$), so by assumption we also have $v,\beta\models\exists x.\psi$. This means there is a position $p'$ of $v$ such that $v,\beta'\models \psi$, where $\beta'=\beta[x\mapsto p']$. Duplicator can answer position $p'$ in $v$, and by induction hypothesis he will win the remaining of the game, since every formula of $F$ accepts $v,\beta'$.
\smallskip

Suppose now that Spoiler places token $x$ at position $p'$ in $v$. Let $\beta'=\beta[x\mapsto p']$.
Let $F$ be the finite set of formulas (up to equivalence) of quantifier rank at most $n-1$ and with $k+1$ free variables, that reject $v,\beta'$. Let $\psi=\bigvee_{\varphi\in F}\varphi$, and $\psi'=\forall x.\psi$. By construction, $x=p'$ witnesses that $\psi'$ does not accept $v,\beta$. Our assumption implies that it does not accept $u,\alpha$ either. So there is $p\in\dom(u)$ such that $u,\alpha'\not\models \forall x.\psi$, where $\alpha'=\alpha[x\mapsto p]$.
Duplicator can answer position $p$ in $u$.  If a formula $\varphi$ of rank at most $n-1$ is true in $u,\alpha'$, then by construction it cannot appear in $F$, therefore it is also true in $v,\beta'$. By induction hypothesis, Duplicator wins the remaining $(n-1)$-round game starting from $(u,\alpha',v,\beta')$.
\end{proof}

This achieves the proof of \Cref{thm:EFgen}, and its instantiation \Cref{thm:EF} on finite words. The proof of \Cref{cor:EF} is exactly identical in this general setting, so $\EF$ game can be used to prove that some properties of general structures are not expressible in $\FOp$:

\begin{cor}\label{cor:EFgen}
A class of $\sigma$-structures $C$ is not $\FOp$-definable if and only if for all $n\in\N$, there exists $(u,v)\in C\times \overline{C}$ such that $u\preceq_n v$.
\end{cor}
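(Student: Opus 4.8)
The plan is to replay essentially verbatim the argument used for \Cref{cor:EF}, now invoking \Cref{thm:EFgen} in place of \Cref{thm:EF}, and \Cref{lem:finrank} for the finiteness of formula classes up to logical equivalence. All the real content has already been isolated into those two results, so what remains is a purely propositional manipulation.

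For the ($\Leftarrow$) direction I would fix $n\in\N$ and take $(u,v)\in C\times\overline C$ with $u\preceq_n v$. By \Cref{thm:EFgen}, every $\FOp$ sentence $\varphi$ with $\qr(\varphi)\le n$ satisfies $u\models\varphi\Rightarrow v\models\varphi$; since $u\in C$ and $v\notin C$, no such $\varphi$ can define $C$. As $n$ was arbitrary, $C$ is not $\FOp$-definable.

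For the ($\Rightarrow$) direction I would argue by contraposition: assume there is an $n$ such that $u\not\preceq_n v$ for all $(u,v)\in C\times\overline C$. By \Cref{thm:EFgen}, for each such pair there is an $\FOp$ sentence $\varphi_{u,v}$ with $\qr(\varphi_{u,v})\le n$ accepting $u$ but rejecting $v$. The one point that needs a word of care is that $C$ and $\overline C$ may both be infinite, so a priori the family $\{\varphi_{u,v}\}$ is infinite; but by \Cref{lem:finrank} there are only finitely many $\FOp$ sentences of quantifier rank at most $n$ up to logical equivalence, so each $\varphi_{u,v}$ may be chosen from a fixed finite set $F$. Then $\psi:=\bigvee_{u\in C}\bigwedge_{v\notin C}\varphi_{u,v}$ is, modulo logical equivalence, a finite disjunction of finite conjunctions, hence a genuine $\FOp$ sentence. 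Every $u\in C$ satisfies $\bigwedge_{v\notin C}\varphi_{u,v}$ and hence $\psi$; conversely, any model of $\psi$ satisfies some $\bigwedge_{v\notin C}\varphi_{u,v}$ and so cannot lie in $\overline C$. Thus $\psi$ defines $C$, contradicting the hypothesis.

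I expect no genuine obstacle here: the entire technical weight has been pushed into \Cref{thm:EFgen} and \Cref{lem:finrank}, and the remaining reasoning is identical to that of \Cref{cor:EF}. The only thing to remain vigilant about is precisely the finiteness bookkeeping just described, namely that the possibly-infinite index sets collapse to finite ones once we quotient by logical equivalence, which is exactly what \Cref{lem:finrank} supplies.
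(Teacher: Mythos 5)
Your proposal is correct and matches the paper exactly: the paper itself states that the proof of \Cref{cor:EF} carries over verbatim to the general setting, which is precisely what you do, substituting \Cref{thm:EFgen} for \Cref{thm:EF} and using \Cref{lem:finrank} to make the disjunction and conjunctions finite. No further comment is needed.
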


\subsection{Finite directed graphs}\label{subsec:graphs}

Our goal is now to show that Lyndon's theorem fails on finite directed graphs, i.e. on finite structures where the signature consists in one (monotone) binary predicate, in addition to (non-monotone) equality. To our knowledge this is a new result.

The positive FO formulas on graphs, that we will call again $\FOp$, is defined via the following syntax:
$$\varphi,\psi:= E(x,y)\mid x=y\mid x\neq y \mid \varphi\vee \psi\mid \varphi\wedge\psi \mid \exists x.\varphi\mid\forall x.\varphi$$

while general FO formulas can additionally use predicates of the form $\neg E(x,y)$. 
A class $\C$ of graphs is \emph{monotone} if whenever $G\in \C$, and $G'$ is obtained from $G$ by adding edges, then $G'\in\C$. It is straightforward to adapt the proof of \Cref{lem:FOclosed} to show that $\FOp$ can only define monotone classes of graphs.

Notice that equality/inequality predicates were not needed in the case of words since this was expressible with the order predicates $\leq$ and $<$.

The goal of this section is to prove the following result:
\begin{thm}[Failure of Lyndon's Theorem on finite directed graphs]\label{thm:lyndongraphs}
There exists an FO-definable monotone class of directed graphs, which is not $\FOp$-definable.
\end{thm}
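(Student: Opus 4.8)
The strategy is to encode a word $w \in A^*$ over the powerset alphabet $A = \mathcal{P}(\{a,b,c\})$ as a finite directed graph, in such a way that (i) the class of graphs coding words of $K$ (extended by some ``garbage'' graphs to restore monotonicity) is FO-definable, (ii) this class is monotone under edge addition, and (iii) $\FOp$-definability of the class would yield an $\FOp$-formula for $K$ on words, contradicting \Cref{lem:notFOp}. The main point is that in the word setting the three monadic predicates $a,b,c$ are already monotone, so we only need a gadget that turns ``a position carries predicate $p$'' into ``an edge is present'', while the linear order on positions must be coded by the edge predicate as well — but crucially the order must \emph{not} be made monotone, so we will encode it via a fixed backbone that is pinned down by FO constraints and then argue monotonicity only over the edges that actually do the coding.

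Concretely, I would fix a gadget: given a word $w$ of length $m$, build a graph $G_w$ on vertex set consisting of $m$ ``position'' vertices $q_0,\dots,q_{m-1}$ together with three distinguished ``flag'' vertices $t_a,t_b,t_c$ and possibly a few auxiliary vertices used to make the flags and the order FO-identifiable (for instance by their in/out-degree profile or by a unique local configuration, exploiting that equality is available). The order $i \le j$ on positions is recorded by edges, and $w[i] \ni p$ is recorded by the edge $q_i \to t_p$. Then I would write an FO sentence $\Phi$ that says: ``the graph decomposes as such a gadget, the induced word lies in $K$, \emph{or} the edge set is so large that the gadget structure is destroyed''. The second disjunct is exactly the analogue of the $\Ast\top\Ast$ trick and of the cases (1)--(3) in Section~\ref{subsec:struct}: once enough edges are added that the FO constraints pinning the backbone/flags fail, we simply accept, which makes the whole class monotone. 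Since $K$ is FO-definable (\Cref{lem:KFO}) and all the structural constraints are FO, $\Phi$ is an FO sentence, giving an FO-definable monotone class.

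For the negative direction I would argue by contraposition using \Cref{cor:EFgen}: from two words $u \preceq_n v$ witnessing non-$\FOp$-definability of $K$ (the pair $(abc)^N$ and $[\x\y\z]^{N-1}\x\y$ from \Cref{lem:notFOp}), I produce the coded graphs $G_u, G_v$ with $G_u$ in the class and $G_v$ not, and I lift Duplicator's winning strategy from the word game to the graph game: Duplicator plays identically on position vertices, answers each flag vertex by the corresponding flag vertex, and answers auxiliary/backbone vertices canonically. The verification that this respects the $\EF$-game conditions reduces, for edges of the form $q_i \to t_p$, exactly to the condition $u[\alpha(i)] \le_A v[\beta(i)]$ that the word strategy already guarantees, and for backbone/order edges to the order-preservation condition of the word strategy; equality edges are handled trivially since Duplicator's map is injective. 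Hence $G_u \preceq_n G_v$ for all $n$, so by \Cref{cor:EFgen} the class is not $\FOp$-definable. Alternatively, and more simply, one can reduce directly: any $\FOp$-sentence over graphs defining the class can be mechanically translated back to an $\FOp$-sentence over words defining $K$ by replacing $E(x,y)$ with the word formula expressing the relevant backbone/flag relation and relativizing quantifiers to position vertices — this is the route taken for general structures in Section~\ref{subsec:struct} and I expect it to be cleaner here.

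The main obstacle is designing the encoding so that the \emph{order} edges can be pinned down by FO constraints \emph{without} those constraints breaking monotonicity: adding edges can only ever destroy the recognizability of the backbone, never create a spurious one that is still accepted as coding a word outside $K$. This is exactly why the ``garbage acceptance'' disjunct is essential, and getting its boundary right — accept precisely when the added edges have ruined the backbone — is the delicate part; everything else (FO-definability of $K$, the $\EF$-game transfer) is routine given the earlier results. The directed case is treated first because the edge predicate is not symmetric, so the backbone can be a directed path-like structure; the undirected case, promised afterwards, will require a further gadget to recover a direction from an undirected edge relation.
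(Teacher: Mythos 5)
Your proposal follows essentially the same route as the paper: distinguished source/flag vertices encoding the monadic predicates via edges, a tournament-like backbone encoding the order, a formula of the shape $\psim\wedge(\phiKG\vee\psip)$ where $\psip$ accepts any graph whose excess edges destroy the gadget, and an $\EF$-game lift of Duplicator's word strategy to the coded graphs $G_u,G_v$. One caution on the detail you left open: the paper notes that the parenthesization must be $\psim\wedge(\phiKG\vee\psip)$ rather than $(\psim\wedge\phiKG)\vee\psip$ (unlike the general-structures construction you cite as a model), because the sources are existentially quantified and the wrong grouping would let a bad choice of sources validate $\psip$ on essentially any graph with an edge.
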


Let us start by giving an informal proof sketch. Our goal will simply be to encode the language $K$ from Section \ref{sec:K} as a set of graphs. Thus, the graphs of interest will have a very specific shape, allowing to encode words on alphabet $A=\mathcal P(\{a,b,c\})$. In order to ensure monotonicity, instead of forbidding all patterns that break the encoding, we will instead accept any graph having ``too many edges''. This is the same idea as in the ``Closure under surjective homomorphisms'' paragraph of Section \ref{subsec:struct}.
Thus we will have two kinds of constraints:
\begin{itemize}
\item A formula $\psim$ asking for some edges to be present, i.e. rejecting graphs that cannot encode a word because of a lack of edges.
\item A formula $\psip$ that will accept any graph falling outside of the required shape of an encoding, because of an excess of edges.
\end{itemize}

Both $\psim$ and $\psip$ will be $\FOp$ formulas, thus describing monotone classes of graphs. The graphs encoding words of $A^*$ will be the models of $\psim$ that are not models of $\psip$. Let us call $\Gw$ the set such graphs, encoding words of $A^*$. We call $\GK$ the subset of $\Gw$ consisting of graphs encoding words of $K$. Our monotone language of graphs witnessing failure of Lyndon's theorem will be $\GK\cup \sem{\psip}$.
Let $\phiKG$ be an FO formula accepting $\GK$ among graphs from $\Gw$, the behaviour of $\phiKG$ being irrelevant outside of $\Gw$. This formula $\phiKG$ will be obtained from the FO formula $\phiK$ for the language $K$, by interpreting predicates $a(x)$, $b(x)$, $c(x)$, and $x\leq y$ in our encoding. For technical reasons, our encoding will also use some distinguished vertices $\vec x$, shared by all formulas, and existentially quantified. Thus our final formula will be of the form $\phi:=\exists \vec x.\psim\wedge(\phiKG\vee\psip)$. It accepts a monotone language of graphs: $\GK\cup \sem{\psip}$.
In order to show that there is no positive formula equivalent to $\phi$, we will replicate the $\EF$ game of \Cref{lem:notFOp} using graphs from $\Gw$.
\medskip

Let us now move to the detailed construction. Let $E(x,y)$ be the edge predicate, that we will often note $x\to y$ for simplicity. Similarly, we will note $x\to y\to z$ as a shortcut for $E(x,y)\wedge E(y,z)$. This arrow will not be at risk of being confused with an implication symbol, since we will avoid the use of implication to obtain negation-free formulas.
In the following, we assume three vertices $x_a,x_b,x_c$ are pointed in the graph. We will call them \emph{sources}, represented by circles in figures. We will call ``squares'' the vertices other than $x_a,x_b,x_c$. We will impose that the subgraph induced by the sources is a particular one, the only purpose of this is to be able to uniquely identify these three vertices.
We define $\Gw$ to be the set of graphs satisfying the following properties:
\begin{itemize}[align=left]
\item[(sources)] $x_a,x_b,x_c$ are distinct, and they induce the following subgraph:
\begin{minipage}{5cm}
\scalebox{.8}{
\begin{tikzpicture}[shorten >=1pt,node distance=1.5cm,on grid,auto,
rect/.style={rectangle,draw,minimum size=6mm},
circ/.style={circle,draw,minimum size=4mm},
pent/.style={regular polygon,regular polygon sides=5,draw,minimum size=4mm}]
    \node[circ] (a) {$x_a$};
	\node[circ,right=of a] (b) {$x_b$};
	\node[circ,right=of b] (c) {$x_c$};

	\path[->] 
	(a) edge (b)
	(b) edge[bend left] (c)
	(c) edge[bend left] (b)
	;	 
\end{tikzpicture}
}
\end{minipage}
\item[(in-edge)] $x_a$ is the only vertex with no in-edge
\item[(cycle)] There is no cycle of length at most $3$ other than the $2$-cycle on $x_b,x_c$.
\item[(order)]  Any two squares are related by an edge.
\item[(direction)] There is no edge from a square to a source.
\end{itemize}
The rule (direction) is actually optional for the correctness of the construction, but it simplifies the exposition.
The next lemma justifies the choice of name for the rule (order):
\begin{lem} If $G$ is a graph in $\Gw$, the edge relation defines a strict total order on squares.
\end{lem}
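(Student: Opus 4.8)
The plan is to verify the three defining properties of a strict total order—irreflexivity, transitivity, and totality—for the edge relation $E$ restricted to the set of squares of a graph $G\in\Gw$. Totality is immediate: it is exactly what rule (order) asserts, namely that any two squares are joined by an edge (in at least one direction). Irreflexivity follows from rule (cycle): a self-loop $x\to x$ would be a cycle of length $1$, which is $\le 3$ and is certainly not the $2$-cycle on $x_b,x_c$ (squares are by definition distinct from the sources), so it is forbidden. The only real content is antisymmetry together with transitivity, which together say that $E$ has no directed cycles among squares and is "consistent" as an order.

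First I would establish antisymmetry: if $x,y$ are distinct squares with $x\to y$ and $y\to x$, this is a $2$-cycle of length $2\le 3$; by rule (cycle) the only permitted such short cycle is the one on $\{x_b,x_c\}$, but $x,y$ are squares, hence not sources, contradiction. So between two distinct squares exactly one of $x\to y$, $y\to x$ holds (existence from (order), uniqueness just shown). Next, transitivity: suppose $x\to y$ and $y\to z$ with $x,y,z$ squares. If $x=z$ we would have a $2$-cycle $x\to y\to x$ on squares, already excluded; so $x\ne z$. By (order) there is an edge between $x$ and $z$ in some direction. If the edge went $z\to x$, then $x\to y\to z\to x$ would be a cycle of length $3$, again not the $2$-cycle on $\{x_b,x_c\}$ and hence forbidden by (cycle). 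Therefore the edge must be $x\to z$, which is exactly transitivity.

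Putting these together, $E$ restricted to squares is irreflexive, antisymmetric, total, and transitive, i.e.\ a strict total order, which is the claim. I do not anticipate a serious obstacle here: the statement is essentially a direct unpacking of rules (order) and (cycle), and the only point requiring a moment's care is making sure that the "short cycles" appearing in the arguments (the self-loop, the $2$-cycles, and the $3$-cycle $x\to y\to z\to x$) genuinely fall under the scope of rule (cycle) and are not the exempted $2$-cycle on $\{x_b,x_c\}$ — which is clear because all vertices involved are squares and hence distinct from the sources $x_b,x_c$.
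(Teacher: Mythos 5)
Your proof is correct and follows essentially the same route as the paper's: totality from rule (order), irreflexivity and antisymmetry from the exclusion of self-loops and $2$-cycles among squares, and transitivity by noting that a reversed edge $z\to x$ would create a forbidden $3$-cycle. The paper states this more tersely; your version just spells out the same case analysis.
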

\begin{proof}Since $3$-cycles are forbidden, and any two squares are related, the edge relation is transitive on squares: for any $x\to y \to z$, there is an edge $x\to z$. Since self-loops and $2$-cycles are forbidden, the relation is irreflexive and antisymmetric. Therefore, it defines a strict total order on squares.
\end{proof}
\Cref{fig:Gw} shows the shape of such a graph with four squares. Notice that the edges from sources to squares can be arbitrary, except that there must be an edge from a source to the first square in the order, because of rule (in-edge).
\begin{figure}[H]
\centering
\scalebox{.8}{
\begin{tikzpicture}[shorten >=1pt,node distance=2cm,on grid,auto,
rect/.style={rectangle,draw,minimum size=6mm},
circ/.style={circle,draw,minimum size=4mm},
pent/.style={regular polygon,regular polygon sides=5,draw,minimum size=4mm}]
    \node[circ] (a) {$x_a$};
	\node[circ,right=of a] (b) {$x_b$};
	\node[circ,right=of b] (c) {$x_c$};

	\node[rect, below left=1.7cm and 1cm of a] (s1) {};
	\node[rect, right=2cm of s1] (s2) {};
	\node[rect, right=2cm of s2] (s3) {};
	\node[rect, right=2cm of s3] (s4) {};
	\path[->] 
	(a) edge (b)
	(b) edge[bend left] (c)
	(c) edge[bend left] (b)
	(s1) edge (s2)
		 edge[bend right] (s3)
		 edge[bend right] (s4)
	(s2) edge (s3)
		edge[bend right] (s4)
	(s3) edge (s4)
;	 
	\path[->]
	(a) edge (s1)
		edge (s3)
	(b) edge (s2)
		edge (s3)
	(c) edge (s4);
\end{tikzpicture}
}
\caption{A graph of $\Gw$}\label{fig:Gw}
\end{figure}
Let us give at this stage more details about sources: their role will be to encode the unary predicates $a,b,c$ on words, and the constraints of $\Gw$ allow to identify them without ambiguity in an unlabeled graph: $x_a$ is the only vertex with no in-edge, and $x_b,x_c$ form the only $2$-cycle, $x_b$ being the vertex connected to $x_a$.

We now need to express all these constraints via $\FOp$ formulas $\psim$ and $\psip$. Recall that formulas in $\psip$ are meant to express when a graph is not in $\Gw$ because of extra edges.
Let $\ES=\{(x_a,x_b),(x_b,x_c), (x_c,x_b)\}$ the set of edges in the induced graph of rule (source), and $\overline{\ES}=\{x_a,x_b,x_c\}^2\setminus \ES$ its complement.
We will also use as auxiliary formulas:
\begin{itemize}
\item $\Ci(x):=(x=x_a)\vee (x=x_b)\vee (x=x_c)$, stating that $x$ is a source,
\item $\Sq(x):=(x\neq x_a)\wedge (x\neq x_b)\wedge (x\neq x_c)$ stating that $x$ is a square, 
\item $\Sq_a(x):=(x\neq x_b)\wedge (x\neq x_c)$ for squares or $x_a$.
\end{itemize}

$$\arraycolsep=3pt\def\$arraystretch{1.3}
\begin{array}{|c|c|c|}
\hline
\textbf{Constraint} & \bm{\psim} & \bm{\psip} \\
\hline
\text{(sources)} & \bigwedge_{(x,y)\in \ES} (x\to y\wedge x\neq y) & ~~\bigvee_{(x,y)\in \overline{\ES}} x\to y \\
\hline
\text{(in-edge)} & \forall y. (y=x_a\vee\exists x. x\to y) & ~~\exists x. x\to x_a~~\\
\hline
\text{(cycle)} & & \begin{array}{ll}&\exists x.x\to x\\
	\vee&\exists x,y.(\Sq_a(x)\vee \Sq_a(y))\wedge (x\to y\to x)\\
		 \vee &\exists x,y,z.( x\to y\to z\to x)~~\\
		 \end{array}\\
\hline
\text{(order)} & \begin{array}{ll}\forall x,y.& \Ci(x)\vee \Ci(y)\vee(x=y)\\& \vee (x\to y)\vee (y\to x)\end{array} & \\
\hline
\text{(direction)} & & \exists x,y. \Sq(x)\wedge \Ci(y)\wedge x\to y \\
\hline
\end{array}$$
\smallskip

We finally take for the $\psim$ (resp. $\psip$) the conjunction (resp. disjunction) of the formulas in its column.
We obtain that by definition, a graph with marked vertices $x_a,x_b,x_c$ is in $\Gw$ if and only if it satisfies $\psim$ but not $\psip$.

It remains to describe how words on alphabet $A=\mathcal P(\{a,b,c\})$ can be encoded into these graphs.
Let $G\in\Gw$, we will associate to it a word $u$ using the following rules:
\begin{itemize}
\item The positions $\dom(u)$ of $u$ correspond to the square vertices of $G$.
\item The order $<$ on $\dom(u)$ corresponds to edges between square vertices.
\item If $x\in\dom(u)$, we will say that $a(x)$ (resp. $b(x),c(x)$) is true if there is in $G$ an edge $x_a\to x$ (resp. with $x_b,x_c$).
\end{itemize}

With this, we can see that the graph of \Cref{fig:Gw} encodes the word $ab\x c$.

Notice that this encoding is actually a bijection between graphs of $\Gw$ and words in $A^*$ that do not start with letter $\emptyset$.

We note $\GK$ the graphs of $\Gw$ that encode a word $u\in K$, where $K$ is the language from Section \ref{sec:K}. Recall that letter $\emptyset$ is never allowed in a word of $K$.
Let $\phiK$ be the FO-formula for $K$ (from \Cref{lem:KFO}), on signature $(\leq,a,b,c)$. We want to build a formula $\phiKG$, recognizing the graphs of $\GK$ among those of $\Gw$. The idea is to restrict quantification to square vertices, and replace atomic predicates by their graph interpretations. This is done by induction on formulas:

\begin{itemize}
\item $(x\leq y)^G:=E(x,y)\vee x=y$
\item $(x<y)^G:=E(x,y)$
\item $(\alpha(x))^G:= E(x_\alpha,x)$, for $\alpha\in\{a,b,c\}$.
\item $(\exists x.\varphi)^G:=\exists x.\Sq(x)\wedge\varphi^G$
\item $(\forall x.\varphi)^G:=\forall x.\Ci(x)\vee\varphi^G$
\item $(\varphi\wedge\psi)^G:=\varphi^G\wedge\psi^G$
\item $(\varphi\vee\psi)^G:=\varphi^G\vee\psi^G$
\item $(\neg \varphi)^G:=\neg(\varphi^G)$
\end{itemize}

Notice that since $\phiK$ is not syntactically positive, the formula $\phiKG$ will contain negations as well.

Finally, let us define $$\phi:=\exists x_a,x_b,x_c.(\psim\wedge(\phiKG\vee\psip)).$$

\begin{rem}
Contrarily to the last construction of Section \ref{subsec:struct}, how to parenthesize is important here, because of the existential quantification on sources $x_a,x_b,x_c$.
Indeed, we must avoid being too permissive by allowing a choice of sources that would validate $\psip$ without validating $\psim$, thereby accepting some unwanted graphs because of a bad choice of sources. Actually, with the other choice of parentheses $(\psim\wedge\phiKG)\vee\psip$, we would have $\phi$ accepting any graph containing an edge, because of rule (in-edge) in $\psip$. 
\end{rem}

\begin{lem}
The formula $\phi$ is monotone, and $\sem{\phi}\cap\Gw=\GK$.
\end{lem}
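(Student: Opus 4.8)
The statement has two parts. For monotonicity: since $\phi=\exists x_a,x_b,x_c.(\psim\wedge(\phiKG\vee\psip))$ is built from the matrix $\psim\wedge(\phiKG\vee\psip)$ by existential quantification, and the encoding-interpreted formula $\phiKG$ contains negations, I cannot invoke \Cref{lem:FOclosed} directly. Instead I would argue semantically: let $G\models\phi$ via a choice of sources $x_a,x_b,x_c$, and let $G'$ be obtained from $G$ by adding edges. I keep the same sources. Since $\psim$ and $\psip$ are syntactically $\FOp$ (negation-free in $E$), each is preserved under adding edges, so $\psim$ still holds in $G'$. If $G$ already satisfied $\psip$, then so does $G'$ and we are done. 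Otherwise $G$ satisfies $\psim\wedge\neg\psip$, so (by the equivalence established just above in the text) $G\in\Gw$ and the witnessed choice of sources encodes some word $u$; and $G\models\phiKG$ means $u\in K$. Now in $G'$ either we have left $\Gw$, which (since $\psim$ still holds) can only happen by an excess of edges, hence $G'\models\psip$ and thus $G'\models\phi$; or $G'\in\Gw$, encoding a word $u'$, and by construction of the encoding adding edges among squares extends the order, adding edges $x_\alpha\to x$ only adds letters, so $u\leq_A u'$ (componentwise), whence $u'\in K$ by monotonicity of $K$ (\Cref{lem:KFO}), i.e. $G'\models\phiKG$ and so $G'\models\phi$. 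This gives monotonicity.

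For the identity $\sem\phi\cap\Gw=\GK$: the inclusion $\GK\subseteq\sem\phi$ is immediate, since a graph in $\GK$ is in $\Gw$, hence (with its canonical sources) satisfies $\psim\wedge\neg\psip$, and encodes a word of $K$, so satisfies $\phiKG$; thus it satisfies $\phi$. For the reverse, take $G\in\Gw\cap\sem\phi$. Being in $\Gw$, $G$ comes with a \emph{unique} admissible choice of sources (by the (sources), (in-edge), (cycle) constraints: $x_a$ is the only vertex with no in-edge, $\{x_b,x_c\}$ is the only $2$-cycle, $x_b$ the one adjacent to $x_a$), and $G$ does not satisfy $\psip$ under \emph{any} choice of sources — this is exactly the content of ``$G\in\Gw$ iff it satisfies $\psim$ but not $\psip$''. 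Since $G\models\phi$, there is some choice of sources making $\psim\wedge(\phiKG\vee\psip)$ true; because $\psip$ fails for every choice, this choice must make $\psim\wedge\phiKG$ true, and a choice making $\psim$ true without $\psip$ is precisely the canonical source choice. So $G\models\phiKG$ with the canonical sources, which encodes the word $u$ associated to $G$; the interpretation lemma-style argument then gives $u\models\phiK$, i.e. $u\in K$, so $G\in\GK$.

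The one point needing care — and the main obstacle — is the interplay between the existential quantifier on sources and the parenthesization, which is flagged in the preceding remark. Specifically I must verify: (a) a graph \emph{not} in $\Gw$ that satisfies $\psip$ under some source choice is genuinely intended to be accepted (it lies in $\sem{\psip}$, part of our target monotone class), so no harm; and (b) the key claim ``$G\in\Gw$ $\iff$ $\exists$ sources with $\psim$ and $\forall$ sources $\neg\psip$'' is used correctly — in $\Gw$ the source choice is unique and the one for which $\psim$ holds is that unique one, and $\psip$ fails for it; conversely if some source choice gives $\psim\wedge\neg\psip$ then all the constraints hold, including uniqueness, so $G\in\Gw$. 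I would also spell out the routine correctness of the relativization $(\cdot)^G$: by induction on $\phiK$, for $G\in\Gw$ with its canonical sources encoding $u$, and for any valuation $\alpha$ of the free variables into squares of $G$ (= positions of $u$), $G,\alpha\models\varphi^G \iff u,\alpha\models\varphi$; the atomic cases are by definition of the encoding, $\exists$/$\forall$ relativize to squares which are exactly the positions, and the boolean and negation cases are transparent. Applying this with $\varphi=\phiK$ (a closed formula) yields $G\models\phiKG\iff u\models\phiK\iff u\in K$, closing the argument.
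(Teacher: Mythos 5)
Your proof is correct and follows essentially the same route as the paper: the same case split for monotonicity (whether $G\models\psip$, then whether $G'$ stays in $\Gw$), and the same appeal to the uniqueness of the source triple validating $\psim$ plus the routine correctness of the $(\cdot)^G$ relativization for $\sem{\phi}\cap\Gw=\GK$. One small caution: your claim that a graph of $\Gw$ falsifies $\psip$ under \emph{every} choice of sources is neither needed nor clearly true (the equivalence $\Gw=\sem{\psim}\setminus\sem{\psip}$ is relative to the marked sources); the argument you also give — the $\phi$-witness must satisfy $\psim$, hence is the canonical triple, for which $\psip$ fails — is the right one and suffices.
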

\begin{proof}
We first show that $\sem{\phi}\cap\Gw=\GK$. Remark that given a graph in $\Gw$, there is only one possible choice of $x_a,x_b,x_c$ that validates $\psim$: they are the only vertices forming the subgraph given in rule (sources). Thus, after $x_a,x_b,x_c$ have been fixed, the fact that the graphs of $\Gw$ accepted by $\phi$ are exactly those from $\GK$ follows from this: the formula $\phiKG$ interprets correctly a graph of $\Gw$ as a word of $\Ast$, and accepts it if and only if this word is in $K$. This fact is simply the correctness of the $(\cdot)^G$ transformation, proven by straightforward induction.

We now move to the monotonicity property: let $G\in\sem{\phi}$, and $G'$ be obtained from $G$ by adding some edges.

First, if $G\in\sem{\psip}$, then so does $G'$, because $\psip$ is positive.
If on the contrary $G\notin\sem{\psip}$, this means $G\in\GK$, so $G$ encodes a word $u\in K$. Then two cases can occur for $G'$:
\begin{itemize}
\item If $G'\in\Gw$, then it encodes a word $u'\geq_A u$, because adding edges to a graph of $\Gw$ while staying inside $\Gw$ translates into adding letter predicates in the corresponding word. Since $K$ is monotone, we have $u'\in K$, so $G'\in\GK$.
\item If $G'\notin \Gw$, recall that $\Gw=\sem{\psim}\setminus\sem{\psip}$. Since $G\in\sem{\psim}$ and $\psim$ is positive, we have $G'\in\sem{\psim}$. So we can conclude $G'\in\sem{\psip}$.
\end{itemize}
In all cases, we obtain $G'\in\sem{\phi}$, so $\phi$ is monotone.
\end{proof}

\begin{lem}
There is no $\FOp$ formula recognizing $\sem{\phi}$.
\end{lem}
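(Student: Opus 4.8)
The plan is to apply the characterization of $\FOp$-non-definability from \Cref{cor:EFgen}: it suffices to produce, for every $n\in\N$, a pair of graphs $(G,H)$ with $G\in\sem{\phi}$ and $H\notin\sem{\phi}$ such that Duplicator wins $\EFn(G,H)$. The natural candidates are the graph encodings of the words $u=(abc)^N$ and $v=[\x\y\z]^{N-1}\x\y$ with $N=2^n$ used in \Cref{lem:notFOp}: let $G=G_u\in\GK$ be the graph of $\Gw$ encoding $u$, and $H=G_v$ the graph of $\Gw$ encoding $v$. Since $u\in K$ we have $G\in\GK\subseteq\sem{\phi}$; since $v\notin K$ and $G_v\in\Gw$, the previous lemma gives $G_v\notin\GK=\sem{\phi}\cap\Gw$, hence $H\notin\sem{\phi}$. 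So everything reduces to showing $G_u\preceq_n G_v$.

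First I would describe Duplicator's strategy on the graphs, which lifts the word strategy from \Cref{lem:notFOp}. The three sources $x_a,x_b,x_c$ are distinguished and play identical roles in both graphs: Duplicator answers a source by the corresponding source (this is forced, since they are the only vertices with the prescribed in-/out-edge pattern, so any other answer loses immediately). For moves on square vertices, Duplicator uses exactly the word strategy of \Cref{lem:notFOp}: pre-place virtual tokens on the first and last squares of each graph, and whenever Spoiler plays a square, answer by replicating the closest distance and direction along the linear order on squares. The key point — carried over verbatim from the word proof — is the invariant that after $i$ rounds any "wrong interval" (a pair of square-segments $[p..q]$ in $G_u$ and $[p'..q']$ in $G_v$ whose encoded factors are not $\leq_A$-related) has length $\geq 2^{n-i}$ in $G_u$ and differs by exactly $1$ from its partner, so Duplicator survives $n$ rounds.

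Second I would verify that surviving in the word game implies surviving in the graph $\EF$-game, i.e. that the position reached is always valid for the graph signature $(E,=)$. This requires checking, for each played pair of tokens, the three kinds of atomic constraints: equality ($=$, non-monotone: handled because distinct squares map to distinct squares and sources map bijectively to sources, so equality is preserved in both directions); source-to-square edges $x_\alpha\to x$ (monotone: these encode $a(x),b(x),c(x)$, and the word invariant $u[\text{segment}]\leq_A v[\text{segment}]$ together with the rules of the game at segment endpoints is precisely what guarantees $\alpha(p)\Rightarrow\alpha(p')$ for each pointwise-matched pair — this is the same computation as in \Cref{lem:notFOp} where we note $u[p..r]\leq_A v[p'..r']$); square-to-square edges (the monotone order predicate: the word strategy preserves the relative order of matched tokens, so $E(p,q)\Leftrightarrow E(p',q')$ on squares); and source-to-source edges (fixed and identical in both graphs). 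Finally edges between a square and a source in the "wrong" direction, or between the sources in non-prescribed ways, never arise in $\Gw$ graphs and Duplicator never creates such a demand.

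The main obstacle is the bookkeeping of the monotone edge predicate: because $E$ must only be preserved in the $\Rightarrow$ direction, one must be careful that Duplicator's answer for a square token never produces a source-to-square edge present in $G_v$ but absent in $G_u$ in a way that violates monotonicity — but the direction is exactly right here ($u$ is the "smaller" word, $G_u$ the "smaller" graph, and we need $G_u\models P(\vec x)\Rightarrow G_v\models P(\vec x)$), so this works out, and it is the same asymmetry already exploited in \Cref{lem:notFOp}. Once the strategy and the invariant are transcribed, applying \Cref{cor:EFgen} finishes the proof; most of the argument is a routine translation of \Cref{lem:notFOp} through the encoding $(\cdot)^G$, the only genuinely new content being the observation that sources are rigidly matched and that the graph validity conditions decompose into exactly the word-level conditions already maintained.
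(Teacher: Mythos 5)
Your proposal is correct and follows essentially the same route as the paper: take the graph encodings $G_u, G_v$ of the words $u=(abc)^N$ and $v=[\x\y\z]^{N-1}\x\y$ from \Cref{lem:notFOp}, have Duplicator match sources to sources and replay the word strategy on squares, and invoke \Cref{cor:EFgen}. The paper dispatches the validity check of the resulting positions as ``straightforward,'' whereas you spell it out predicate by predicate (equality, source-to-square edges, square-to-square order, source subgraph); that extra detail is accurate and consistent with the intended argument.
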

\begin{proof}
We use $\EF$-games on graphs of $\Gw$. By \Cref{cor:EFgen}, we can use $\EF$-games on signature $(E,=)$, with $E$ monotone and $=$ non-monotone and correspond to equality, to show that there is no positive formula defining $\sem{\phi}$. 
We will replicate the game from \Cref{lem:notFOp}, using graphs instead of words. More precisely, let $n\in\N$ and $N=2^{n}$. Let $u=(abc)^N$ and $v=[\x\y\z]^{N-1}\x\y$, recall that $u\in K$ and $v\notin K$.
We associate to $u$ and $v$ graphs $G_u$ and $G_v$ in $\Gw$ according to the above encoding.
We have $G_u\in\GK$ and $G_v\in\Gw\setminus\GK$, so $G_u\in\sem{\phi}$ and $G_v\notin\sem{\phi}$. It now suffices to show that Duplicator can win $\EF_n(G_u,G_v)$. His strategy can actually mimic exactly the strategy $\sigma_D$ for Duplicator in $\EF_{n}(u,v)$. It suffices to play as $\sigma_D$ on squares, and if Spoiler plays a source in one of the graphs, Duplicator answers with the corresponding source in the other. It is straightforward to verify that this is a winning strategy for Duplicator in $\EF_n(G_u,G_v)$, as both order and letter predicates on words directly translate to edge predicates on graphs of $\Gw$.
This shows that there is no $\FOp$ formula equivalent to $\phi$.
\end{proof}

This concludes the proof of \Cref{thm:lyndongraphs}.

\subsection{Finite undirected graphs}\label{subsec:undirgraphs}

We describe in this section how to lift the counter-example from directed graphs to undirected ones.

As the proof scheme follows the same pattern as in the directed case, we will go in less details and just describe here the modifications to be made to lift the previous proof to undirected graphs. 

We will again use distinguished ``source vertices'' that will encode letter predicates, except that now there will be more than one source per letter.
As we can no longer use the orientation of edges to isolate these sources without ambiguity, we will instead use cycles, in the same spirit as the $x_b-x_c$ cycle in the directed case: the sources will form the only cycles of length at most $5$. More precisely, the $a$-sources will form a $3$-cycle, the $b$-sources a $4$-cycle, and the $c$-sources a $5$-cycle, for a total of $12$ sources. Moreover, there are no other edges between sources than the ones forming these cycles. This means that once again we completely impose the graph induced by the $12$ sources: it has to consist in three disjoint cycles of size $3,4,5$. We can therefore assume that we have a formula $\Ci(x)$ stating that $x$ is a source, and formulas $\Ci_a(x),\Ci_b(x),\Ci_c(x)$ specifying the letter of this source. Since sources will again be explicitly quantified in a formula, it is still possible to state in $\FOp$ that a vertex $x$ is not a source, via a formula $\neg\Ci(x)$ simply asserting that $x$ is different from all sources.

As before, we will use vertices called ``squares'' to encode positions of the word. This time, squares will not be all non-source vertices, but will be defined as follows: a square is any non-source vertex that is connected to a source by an edge. This can be expressed by a formula $\Sq(x)=\neg\Ci(x)\wedge\exists y.\Ci(y)\wedge E(x,y)$. We still need to encode the total order on squares, but since edges are not oriented anymore, we will make use of oriented ``meta-edges'' as described by the following picture:

\begin{figure}[H]
\centering
\scalebox{.8}{
\begin{tikzpicture}[shorten >=1pt,node distance=1.5cm,on grid,auto,
rect/.style={rectangle,draw,minimum size=6mm},
diam/.style={diamond,draw,minimum size=4mm}]
    \node[rect] (a) {$x$};
	\node[diam,right=of a] (b) {};
	\node[diam,right=of b] (c) {};
	\node[diam,right=of c] (d) {};
	\node[rect,right=of d] (e) {$y$};
	\node[diam,above=of b] (f) {};
	\path[-] 
	(a) edge (b)
	(b) edge (c)
		edge (f)
	(c) edge (d)
	(d) edge (e)
;	 
\end{tikzpicture}
}
\caption{A meta-edge from square $x$ to square $y$}\label{fig:MetaEdge}
\end{figure}
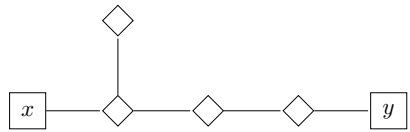

The fact that there is a meta-edge from square $x$ to square $y$ can be described by a formula $\ME(x,y)$ asserting the existence of the above pattern. Notice that such a meta-edge can create a cycle of length $6$, if $x$ and $y$ are connected to the same source, but will not introduce a cycle of length at most $5$.

We call ``diamonds'' the auxiliary vertices that are not sources, and that are connected to a square by a path of length $1$ or $2$. This can be defined by a positive formula $\diam(x)$. 

For a graph to be in $\Gw$, we require the following additional constraints:
\begin{itemize}[align=left]
\item[(partition)] Any vertex is either a source, a square or a diamond, with no overlap.
\item[(cycle)] The only cycles of length at most $5$ are those composed of sources.
\item[(order)]  The meta-edge relation form a strict total order on squares.
\item[(diamonds)] A diamond can be part of at most one meta-edge.
\end{itemize}

As before, we can express all these constraints with positive formulas $\psim$ and $\psip$.
Let us explicit these formulas for rules (partition) and (order). We can use $\neg\Sq(x):=\Ci(x)\vee\diam(x)$ to assert that $x$ is not a square, thanks to rule (partition). This allows us to quantify on squares only, that we will abbreviate $\exists^\Sq$ and $\forall^\Sq$.

$$\arraycolsep=3pt\def\$arraystretch{1.3}
\begin{array}{|c|c|c|}
\hline
\textbf{Constraint} & \bm{\psim} & \bm{\psip} \\
\hline
\text{(partition)} & \forall x.\Ci(x)\vee\Sq(x)\vee\diam(x) & \exists x. \Sq(x)\wedge\diam(x)\\
\hline
\text{(order)} & \forall^\Sq x,y.\ME(x,y)\vee\ME(y,x)& 
\begin{array}{ll}
&\exists^\Sq x. \ME(x,x)\\
\vee & \exists^\Sq x,y.\ME(x,y)\wedge\ME(y,x)\\
\vee & \exists^\Sq x,y,z. \ME(x,y)\wedge\ME(y,z)\wedge\ME(z,x)
\end{array}\\
\hline
\end{array}$$
\smallskip

The formulas for (cycle) and (diamonds) pose no additional difficulty: the one for (cycle) is similar to the directed case, and the one for (diamonds) only has a $\psip$ component, stating the existence of two meta-edges sharing a diamond.
The formula $\psim$ (resp. $\psip$) will again be the conjunction (resp. disjunction) of its components coming from various rules.
\medskip

A graph $G\in\Gw$ will be an encoding of a word $u\in (A\setminus\{\emptyset\})^*$, by interpreting the squares with the meta-edge order as $(\dom(u),<)$, and a predicate $a(x)$ true if the square $x$ is connected to an $a$-source (resp. with $b,c$).

For instance the following graph encodes the word $ab\x c$, where meta-edges are represented by dashed arrows:
\newdimen\R
\R=.8cm
\begin{figure}[H]
\centering
\scalebox{.8}{
\begin{tikzpicture}[shorten >=1pt,node distance=2cm,on grid,auto,
rect/.style={rectangle,draw,minimum size=6mm},
circ/.style={circle,draw,minimum size=4mm},
source/.style={circle,draw,fill=white,inner sep=1mm}]

    \draw (0:\R) node[source] (a1) {}  
            -- (120:\R) node[source] (a2) {}  -- (240:\R) node[source] (a3) {}
        -- cycle (90:\R) node[above=.1cm] {$a$-sources} ;
    \draw[xshift=2.6\R] (0:\R) node[source] (b0) {} \foreach \x in {90,180,...,359} {
            -- (\x:\R) node[source] (b\x) {}
        } -- cycle (90:\R) node[above=.1cm] {$b$-sources} ;
    \draw[xshift=5.1\R] (0:\R) node[source] (c0) {} \foreach \x in {72,144,...,359} {
            -- (\x:\R) node[source] (c\x) {}
        } -- cycle (90:\R) node[above=.1cm] {$c$-sources} ;
    
        %

            	\node[rect, below left=2.5cm and 1.5cm of a1] (s1) {};
	\node[rect, right= of s1] (s2) {};
	\node[rect, right= of s2] (s3) {};
	\node[rect, right= of s3] (s4) {};
	
	\path[->,dashed]
	(s1) edge (s2)
		 edge[bend right] (s3)
		 edge[bend right] (s4)
	(s2) edge (s3)
		edge[bend right] (s4)
	(s3) edge (s4);
	\path[-]
	(a1) edge (s1)
	(a3) edge (s3)
	(b180) edge (s2)
	(b0)	edge (s3)
	(c0) edge (s4);
\end{tikzpicture}
}
\caption{A graph of $\Gw$ encoding $ab\x c$}\label{fig:Gwundir}
\end{figure}

Similarly to the directed case, we will transform the formula $\phiK$ into a formula $\phiKG$ using the following translation:
\begin{itemize}
\item $(x\leq y)^G:=\ME(x,y)\vee x=y$
\item $(x<y)^G:=\ME(x,y)$
\item $(\alpha(x))^G:= \exists x_\alpha.\Ci_\alpha(x_\alpha)\wedge E(x_\alpha,x)$, for $\alpha\in\{a,b,c\}$.
\item $(\exists x.\varphi)^G:=\exists^\Sq x.\varphi^G$
\item $(\forall x.\varphi)^G:=\forall^\Sq x.\varphi^G$
\item $(\varphi\wedge\psi)^G:=\varphi^G\wedge\psi^G$
\item $(\varphi\vee\psi)^G:=\varphi^G\vee\psi^G$
\item $(\neg \varphi)^G:=\neg(\varphi^G)$
\end{itemize}

As before, we define $\phi=\exists \vec x.(\psim\wedge(\phiKG\vee\psip))$, where $\vec x$ is the list of the $12$ source vertices.
The fact that $\phi$ is monotone and $\sem{\phi}\cap \Gw=\GK$, the graphs encoding words in $K$, is proved the same way.

We again show that there is no positive formula for $\phi$ using the $\EF$-game technique according to \Cref{cor:EFgen}. Given words $u$ and $v$ as before, we choose canonical encodings $G_u$ and $G_v$ in $\Gw$, where all diamonds are used in some meta-edge. We choose $G_u$ and $G_v$ of the simplest possible form, for instance without parallel meta-edges. We will show that Duplicator can use his winning strategy from $\EF_n(u,v)$ to win in the $(G_u,G_v)$ arena as well.
As before, sources and squares pose no problem, and the strategy can be directly copied from words to graphs. However, there is a new subtlety to take care of: Spoiler can now play on diamond vertices in $G_u$ or $G_v$. Since by rule (diamonds), any diamond is part of exactly one meta-edge from a square $x$ to a square $y$. In order to answer this move in the graph game, Duplicator will look at what happens in the word game if Spoiler plays positions $x$ and $y$. Duplicator's winning strategy gives answers $x'$ and $y'$ to these moves. He can now answer the corresponding diamond in the other graph, in the meta-edge between squares $x'$ and $y'$, at the same relative position as the diamond originally played by Spoiler. Since playing $2$ moves on words can be necessary to answer one move on graphs, Duplicator will only be able to play $n/2$ rounds in the game on $G_u,G_v$. This is enough to show that for any $n$ there are $G_u\in\sem{\phi}$ and $G_v\notin\sem{\phi}$ such that Duplicator wins $\EF_n(G_u,G_v)$, thereby proving that $\sem{\phi}$ is not definable in $\FOp$.

This concludes the proof scheme of the following theorem:

\begin{thm}
Lyndon's Theorem fails on finite undirected graphs: there is an FO-definable monotone class of undirected graphs that is not definable with a negation-free formula.
\end{thm}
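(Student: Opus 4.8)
The plan is to reproduce, in the undirected setting, the graph encoding of the counter-example language $K$ from \Cref{sec:K} used in the directed case, replacing the two devices that relied on orientation: the identification of the source vertices, and the total order on word positions. For the first, instead of a single $2$-cycle I would use twelve distinguished \emph{source} vertices forming three vertex-disjoint cycles of lengths $3$, $4$ and $5$ (the $a$-, $b$- and $c$-sources), with no other edges among sources; since no other cycle of length at most $5$ will be tolerated in the graphs of interest, this induced subgraph is rigid and each source is identifiable, giving positive formulas $\Ci(x)$, $\Ci_a(x)$, $\Ci_b(x)$, $\Ci_c(x)$ and, once the sources are explicitly quantified, a negation-free $\neg\Ci(x)$ asserting $x$ differs from all of them. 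A \emph{square} is a non-source vertex adjacent to a source, $\Sq(x):=\neg\Ci(x)\wedge\exists y.\Ci(y)\wedge E(x,y)$, and squares encode word positions. For the order I would use the oriented \emph{meta-edge} gadget of \Cref{fig:MetaEdge}: $\ME(x,y)$ asserts a length-$4$ path from square $x$ to square $y$ through three \emph{diamonds}, one carrying an extra pendant diamond that breaks the symmetry and fixes the direction. All shape constraints — (partition): every vertex is a source, a square or a diamond; (cycle): the only cycles of length $\leq 5$ are the source cycles; (order): $\ME$ is a strict total order on squares; (diamonds): each diamond lies on at most one meta-edge — are then packaged into a positive $\psim$ (missing edges) and a positive $\psip$ (excess edges), so that a graph with the twelve marked sources lies in $\Gw$ exactly when it satisfies $\psim$ but not $\psip$; by construction $\Gw$ is in bijection with the words of $(A\setminus\{\emptyset\})^*$.

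Next I would translate the FO formula $\phiK$ for $K$ from \Cref{lem:KFO} into a formula $\phiKG$ on the signature $(E,=)$ by relativizing quantifiers to squares ($\exists^\Sq$, $\forall^\Sq$) and interpreting atoms by $(x\leq y)^G:=\ME(x,y)\vee x=y$, $(x<y)^G:=\ME(x,y)$ and $(\alpha(x))^G:=\exists x_\alpha.\Ci_\alpha(x_\alpha)\wedge E(x_\alpha,x)$ for $\alpha\in\{a,b,c\}$; this keeps negations exactly where $\phiK$ has them. Setting $\phi:=\exists\vec x.\,(\psim\wedge(\phiKG\vee\psip))$ with $\vec x$ the list of the twelve sources, a straightforward induction shows $\phiKG$ reads a graph of $\Gw$ as its encoded word and accepts it iff that word is in $K$, so $\sem{\phi}\cap\Gw=\GK$. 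Monotonicity of $\sem{\phi}=\GK\cup\sem{\psip}$ follows as in the directed case: $\psip$ is positive, so a model of $\psip$ stays one when edges are added; and if $G\in\GK$ and $G'\supseteq G$ is still in $\Gw$ then $G'$ encodes a word $u'\geq_A u$, hence $u'\in K$ since $K$ is monotone, while if $G'\notin\Gw$ then, since $\psim$ is positive and $G\models\psim$, also $G'\models\psim$, so $G'\models\psip$.

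Finally, that no $\FOp$ formula is equivalent to $\phi$ I would prove with the $\EF$-game of \Cref{cor:EFgen} on $(E,=)$, $E$ monotone. Fix $n$, take $u=(abc)^N$ and $v=[\x\y\z]^{N-1}\x\y$ as in \Cref{lem:notFOp} with $N$ a power of $2$ large enough (say $N=2^{2n}$), and fix canonical encodings $G_u,G_v\in\Gw$ of the simplest form (no parallel meta-edges, every diamond used in a meta-edge); then $G_u\in\GK\subseteq\sem{\phi}$ and $G_v\in\Gw\setminus\GK$, so $G_v\notin\sem{\phi}$. Duplicator replays the word strategy $\sigma_D$ of \Cref{lem:notFOp}: source moves are answered by the matching source, square moves by $\sigma_D$ on positions, and — since the order and letter predicates on words translate verbatim into edge predicates on $\Gw$ — validity is preserved just as there. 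The one genuinely new case, and the step I expect to be the main obstacle, is a Spoiler move on a \emph{diamond} vertex: such a vertex lies on a unique meta-edge between squares $x$ and $y$, so Duplicator simulates \emph{two} word moves (Spoiler playing $x$ then $y$), obtains answers $x',y'$ from $\sigma_D$, and plays the diamond of the meta-edge from $x'$ to $y'$ at the same relative position; one must check this answer always exists and keeps the position valid — in particular that $\ME(x',y')$ holds, which it does because $\sigma_D$ preserves the order, and that the pendant/orientation diamonds are matched consistently. Since one graph move may cost two word moves, the choice $N=2^{2n}$ still lets Duplicator survive $n$ graph rounds, so by \Cref{cor:EFgen} $\sem{\phi}$ is not $\FOp$-definable, completing the proof.
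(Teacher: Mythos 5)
Your proposal is correct and follows essentially the same route as the paper: the same twelve sources arranged in $3$-, $4$- and $5$-cycles, the same meta-edge gadget for the order on squares, the same $\psim/\psip$ packaging and translation $\phiKG$, and the same $\EF$-game argument in which a Spoiler move on a diamond is answered by simulating two word moves (your explicit choice of $N=2^{2n}$ to absorb the factor-two loss matches the paper's observation that Duplicator survives $n/2$ graph rounds). No gaps to report.
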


\section{Undecidability of $\FOp$-definability for regular languages}\label{sec:undec}

In this section, we are back to considering only finite words. Our goal will be to prove the following Theorem:

\begin{thm}\label{thm:undec}
The following problem is undecidable: given $L$ a regular language on a powerset alphabet, is $L$ $\FOp$-definable?
\end{thm}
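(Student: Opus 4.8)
The plan is to reduce from the Turing Machine Mortality problem \cite{Hooper66}: given a deterministic Turing machine $M$, it is undecidable whether $M$ halts from every configuration (\emph{$M$ is mortal}). From $M$ I will compute a regular language $L_M$ over a powerset alphabet such that $L_M$ is $\FOp$-definable if and only if $M$ is mortal; since mortality is undecidable, this proves the theorem.

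The construction imitates the counter-example $K$. The alphabet carries a \emph{clock} sub-signature over $\{a,b,c\}$ reproducing the rigid period-$3$ skeleton used in \Cref{lem:notFOp}, together with extra monadic predicates encoding, position by position, the local data of a configuration of $M$ (the tape content at that position, and whether the head sits there and in which state). A word in the intended shape spells out a finite run $C_0,C_1,\dots,C_t$ of $M$ laid out along the clock, the layout being chosen so that the transition rule of $M$ --- which links a cell of $C_{i+1}$ to three adjacent cells of $C_i$ --- only ever relates positions at bounded distance, and is therefore a regular (in fact $\FOp$, hence monotone) constraint. Exactly as the disjunct $\Ast\top\Ast$ does in \Cref{def:K}, every word that either breaks the intended shape or uses the top letter $\top$ is thrown into $L_M$: this makes $L_M$ monotone and absorbs the global consistency requirements that are not regular, so that $L_M$ can be taken to be precisely the regular language of all words that contain such a defect or correctly encode a \emph{halting} run of $M$.

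For the two directions: if $M$ is not mortal there is a configuration with an infinite run, and truncating it yields, for every $n$, a long well-formed word $v$ encoding a non-halting run (so $v\notin L_M$) together with a companion $u\in L_M$ obtained by the clock shift of \Cref{lem:notFOp}, i.e.\ the passage from $(abc)^{N}$ to $[\x\y\z]^{N-1}\x\y$ on the clock layer while the computation layer is left untouched. Duplicator then wins $\EFn(u,v)$ by running the strategy of \Cref{lem:notFOp} on the clock layer and copying its answers onto the (matching) computation layer, so by \Cref{cor:EF} the language $L_M$ is not $\FOp$-definable. Conversely, when $M$ is mortal the absence of infinite runs is arranged to preclude arbitrarily long such pairs, and $L_M$ collapses to a language admitting an explicit $\FOp$-formula --- morally ``a shape defect occurs, or the encoded run reaches a halting configuration'' --- which I verify directly, again through \Cref{cor:EF}.

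I expect the main obstacle to be the simultaneous fulfilment of these constraints: the clock layout must be rigid enough to carry the EF argument yet loose enough that local consistency of $M$ is regular, and the escape-hatch mechanism must be tuned so that mortality of $M$ --- which gives no uniform bound on halting times --- nonetheless forces $L_M$ into the $\FOp$-definable fragment. The encoding of runs of $M$ and the bookkeeping of the defect disjuncts, while routine in spirit, also require care, in particular to ensure that the clock shift used in the immortal case does indeed land in $L_M$ while its non-halting sibling does not.
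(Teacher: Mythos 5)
Your proposal correctly identifies the right problem to reduce from (TM Mortality), but the reduction itself has a fatal flaw: the language you describe is not regular. The set of words that ``correctly encode a halting run of $M$'', with each configuration written out in full along the word, requires comparing a cell of $C_{j+1}$ with the corresponding cells of $C_j$, i.e.\ positions whose distance is roughly the tape length, which is unbounded; no choice of clock layout with consecutive full configurations turns this into a bounded-distance (hence regular) constraint, and the defect/escape-hatch disjunct only buys monotonicity --- it cannot make a non-regular consistency requirement regular, since the complement would still have to separate valid from invalid computation histories. The paper's construction sidesteps exactly this point: its $\LM$ is the monotone closure of a plain regular language $\Lbase$ that only checks the local window around the reading head and the $1$-$2$-$3$ type succession, and \emph{never} checks that consecutive configuration words form an actual run. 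The dynamics of $M$ enter solely through the order on the alphabet, via the superposition property (\Cref{lem:merge,lem:succ}): two configuration words admit a common $\leq_A$-upper bound if and only if one is the successor configuration of the other. This is the key idea missing from your proposal, and it is what makes both the regularity of $\LM$ and the EF arguments go through.

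Both directions are also left with real gaps. In the immortal direction, the paper takes $u$ to be a genuine (truncated) run --- which lies in $\LM$ precisely because $\LM$ only constrains types, not run validity --- and builds $v$ by superposing consecutive configurations, which drops one separator and breaks the count of $\#$ modulo $3$; Duplicator's legality condition $u[p]\leq_A v[p']$ is then exactly the superposition property, so the strategy of \Cref{lem:notFOp} transfers. Your pair (clock shift with the computation layer ``left untouched'') does not clearly satisfy $u\in\LM$ and $v\notin\LM$, nor the letterwise $\leq_A$-compatibility that Duplicator needs. In the mortal direction, your stated worry that mortality ``gives no uniform bound on halting times'' is misplaced --- by compactness (\Cref{rem:compactness}) it does give a uniform bound $n$ --- but that bound is then the crux, and you give no argument where one is needed: the paper must introduce the height abstraction of configuration words and the $2n$-round integer game (\Cref{lem:ngame}) to extract a Spoiler strategy of length $f(n)$ uniform over all $(u,v)\in\LM\times\overline{\LM}$, and only then conclude via \Cref{cor:EF}. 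Saying the language ``collapses to'' an explicit $\FOp$ formula to be ``verified directly'' skips precisely this hard step, and with your language the formula would in addition have to express run validity, which is not even regular.
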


Notice that since FO-definability is decidable for regular languages, this theorem could be equivalently stated with an input language given by an FO formula.

We will a start with an informal proof sketch to convey the main ideas of the proof, before going to the technical details.

\subsection{Proof sketch}\label{sec:sketch}

The proof proceeds by reduction from the Turing Machine Mortality problem, known to be undecidable \cite{Hooper66}. A deterministic Turing Machine (TM) is \emph{mortal} if there is a uniform bound $n\in\N$ on the length of its runs, starting from any arbitrary configuration. 

Given a machine $M$, we want to build a regular language $\LM$ such that $\LM$ is $\FOp$-definable if and only if $M$ is mortal.
\medskip

\noindent\textbf{Configuration words}

The intuitive idea is that $\LM$ will mimic the language $(\cl a\cl b\cl c)^*$ from \Cref{def:K}, but the letters will be replaced by words encoding configurations of $M$. We therefore design an ordered alphabet $A$ and a language $C$ of configurations words such that words from $C$ encode configurations of $M$. These words will be of three possible types $1,2,3$, playing the role of the letters $a,b,c$ of the language $K$. This partitions $C$ into $C_1\cup C_2\cup C_3$. We guarantee that the transitions of $M$ will always change the types in the following way: $1\to 2$, $2\to 3$ or $3\to 1$.

Moreover, we design the order $\leq_A$ of the alphabet $A$ so that given two words $u_1,u_2$ from $C$, there is a word $v$ that is bigger (for the order $\leq_A$) than both $u_1$ and $u_2$ if and only if $u_1$ and $u_2$ are consecutive configurations of $M$. Such a word $v$ will be written $\duo{u_1}{u_2}$ in this proof sketch.
\medskip

\noindent\textbf{Language $\LM$}

Finally, the language $\LM$ will be roughly defined as the upward-closure of $(C_1\# C_2\# C_3\#)^*$, where $\#$ is a separator symbol. The only requirement for configuration words appearing in a word of $\LM$ is on their types. Apart from this, the configuration words can be arbitrary, they do not have to form a run of $M$.

We will then use the $\EF$-game technique to show that $\LM$ is $\FOp$-definable if and only if $M$ is mortal.
\medskip

\noindent\textbf{If $M$ not mortal}

The easier direction is proving that if $M$ is not mortal, then $\LM$ is not $\FOp$-definable.
Indeed, if $M$ is not mortal, we can choose an arbitrarily long run $u=u_1\# u_2\#\dots\# u_N$ of $M$, where $u_1\in C_1$ and $N$ is a multiple of $3$ (implying $u_N\in C_3$). We build the word $v=\duo{u_1}{u_2}\# \duo{u_2}{u_3}\#\dots \#\duo{u_{N-1}}{u_N}$, and we verify that $u\in \LM$ and $v\notin \LM$ (because of the number of $\#$ modulo $3$). Then, using the same technique as in the proof of \Cref{lem:notFOp}, with $C_1,C_2,C_3$ playing the role of $a,b,c$, we show that Duplicator wins the $\EF$ game on $u,v$ with $\log(N)$ rounds. Therefore $\LM$ is not $\FOp$-definable, by \Cref{cor:EF}.
\medskip

\noindent\textbf{If $M$ mortal}

The converse direction is more difficult: we have to show that if there is a bound $n$ on the length of runs of $M$ from any configuration, then $\LM$ is $\FOp$-definable.
We will again use the $\EF$-game and \Cref{cor:EF}: we give an integer $m$ (depending only on $n$) such that for any $u\in \LM$ and $v\notin \LM$, Spoiler wins $\EF_m(u,v)$.

Without loss of generality, consider $u=u_0\#u_1\#\dots \# u_N$ a word of $\LM$, where each $u_i$ is in $C$, and $v$ a word not in $\LM$.
To describe the winning strategy of Spoiler, we will first rule out the problems in ``local behaviours'': if $v$ contains a factor containing at most $2$ symbols $\#$ preventing it from belonging to $\LM$, then Spoiler wins easily in a bounded number of rounds by pointing this local inconsistency in $v$, that cannot be mirrored in $u$.
The only remaining problem is the ``long-term inconsistency'' occurring in the previous $\EF$ games: a long factor with two conflicting possible interpretations, each being forced by one of the endpoints. For instance, when dealing with the language $K$, such long-term inconsistencies were exhibited by words of the form $\x\y\z\x\dots\y$, with the first letter being constrained to $a$ and the last one to $c$.
We have to show that contrarily to what happens with the language $K$, or in the case where $M$ is not mortal, Spoiler can now point out such long-term inconsistencies in a bounded number of rounds.

To do that, let us abstract a configuration word $w\in C$ by its \emph{height} $h(w)$: the length of the run of $M$ starting in the configuration $w$. Our mortality hypothesis can be rewritten as: the height of any configuration word is at most $n$. A word $w\in C$ will be abstracted by a single letter $h(w)\in[0,n]$.
We saw that if a word $w'$ is of the form $\duo{w_1}{w_2}$, then $w_1$ and $w_2$ encode consecutive configurations of $M$, so their heights must be consecutive integers $i+1$ and $i$. We will abstract such a word $w'$ by the letter $\duo{i+1}{i}$.
This allows us to design an abstracted version of the EF-game, called the \emph{integer game}, where letters are integers or pairs of integers, and with special rules designed to reflect the constraints of the original $\EF$ game on $(u,v)$. The integer game makes explicit the core combinatorial argument making use of the mortality hypothesis. We show that Spoiler wins this integer game in $2n$ rounds. We finally conclude by lifting this strategy to the original $\EF$-game.
\medskip

This ends the proof sketch, and we now go to the more detailed proof.
\subsection{The Turing Machine Mortality problem}\label{subsec:TMM}

We will start by describing the problem we will reduce from, called Turing Machine (TM) Mortality.

The TM Mortality problem asks, given a deterministic TM $M$, whether there exists a bound $n\in\N$ such that from any finite configuration (state of the machine, position on the tape, and content of the tape), the machine halts in at most $n$ steps. We say that $M$ is \emph{mortal} if such an $n$ exists.

\begin{thmC}[\cite{Hooper66}]
The TM Mortality problem is undecidable.
\end{thmC}

\begin{rem}\label{rem:compactness}
The standard mortality problem as formulated in \cite{Hooper66} does not ask for a uniform bound on the halting time, and allows for infinite configurations, but it is well-known that the two formulations are equivalent using a compactness argument. Indeed, if for all $n\in\N$, the TM has a run of length at least $n$ from some configuration $C_n$, then we can find a configuration $C$ that is a limit of a subsequence of $(C_n)_{n\in\N}$, so that $M$ has an infinite run from $C$.
\end{rem}

Notice that the initial and final states of $M$ play no role here, so we will omit them in the description of $M$. Indeed, we can assume that $M$ halts whenever there is no transition from the current configuration.

Let $M=(\Gamma,Q,\Delta)$ be a deterministic TM, where $\Gamma$ is the alphabet of $M$, $Q$ its set of states, and $\Delta\subseteq Q\times\Gamma\times Q\times\Gamma\times\Dir$ its (deterministic) transition table.

We will also assume without loss of generality that $Q$ is partitioned into $Q_1,Q_2,Q_3$, and that all possible successors of a state in $Q_1$ (resp. $Q_2, Q_3$) are in $Q_2$ (resp. $Q_3,Q_1$). Remark that if $M$ is not of this shape, it suffices to make three copies $Q_1$, $Q_2$, $Q_3$ of its state space, and have each transition change copy according to the 1-2-3 order given above. This transformation does not change the mortality of $M$.

\noindent We will say that $p$ has \emph{type} $i$ if $p\in Q_i$. The \emph{successor type} of $1$ (resp. $2$, $3$) is $2$ (resp. $3$, $1$).
\medskip

Our goal is now to start from an instance $M$ of TM Mortality, and define a regular language $\LM$ such that $\LM$ is $\FOp$-definable if and only if $M$ is mortal.

\subsection{The base language $\Lbase$}
~\\
\noindent\textbf{The base alphabet}

 We define first a \emph{base alphabet} $\Abase$. Words over this alphabet will be used to encode configurations of the TM $M$.
$$\Abase=\Gamma\cup(\Delta\times\Gamma)\cup (\Gamma\times\Delta)\cup(\Delta\times\Gamma\times\Delta)\cup (Q\times\Gamma)\cup \{\#\}.$$ 

We will note  $a_\delta$ (resp. $a^{\delta'}, a_\delta^{\delta'}$) the letters from $\Delta\times\Gamma$ (resp. $\Gamma\times\Delta, \Delta\times\Gamma\times\Delta$), and $[q.a]$ letters of $Q\times\Gamma$.

The letter $[q.a]$ is used to encode the position of the reading head, $q\in Q$ being the current state of the machine, and $a\in \Gamma$ the letter it is reading.

A letter $a_\delta$ will be used to encode a position of the tape that the reading head just left, via a transition $\delta$ writing an $a$ on this position. A letter $a^{\delta'}$ will be used for a position of the tape containing $a$, and that the reading head is about to enter via a transition $\delta'$. We use $\add$  if both are simultaneously true, i.e. the reading head is coming back to the position it just visited.
Finally, the letter $\#$ is used as separator between different configurations.
%
\medskip

\noindent\textbf{Configuration words}

The encoding of a configuration of $M$ is therefore a word of the form (for example): $$a_1a_2\dots (a_{i-1})^{\delta'}[q.a_i](a_{i+1})_\delta\dots a_n.$$ The letter $(a_{i+1})_\delta$ indicates that the reading head came from the right via a transition $\delta=(\_,\_,q,a_{i+1},\leftarrow)$ (where $\_$ is a placeholder for an unknown element). The letter $(a_{i-1})^{\delta'}$ indicates that it will go in the next step to the left via a transition $\delta'=(q,a_i,\_,\_,\leftarrow)$.

A word $u\in (\Abase)^*$ is a \emph{configuration word} if it encodes a configuration of $M$ with no inconsistency. More formally, $u$ is a configuration word if $u$ contains no $\#$, exactly one letter from $Q\times \Gamma$ (the reading head), and either one $a_\delta$ and one $b^{\delta'}$ located on each side of the head, or just one letter $\add$ adjacent to the head. Moreover, the labels $\delta$ and $\delta'$ both have to be coherent with the current content of the tape. 

\begin{rem}\label{rem:Cencoding}
Because we ask these $\delta$ and $\delta'$ labellings to be present, configuration words only encode TM configurations that have a predecessor and a successor configuration.
\end{rem}

The \emph{type} of a configuration word is simply the type in $\{1,2,3\}$ of the unique state it contains.

Let us call $C\subseteq (\Abase)^*$ the language of configuration words. This language $C$ is partitioned into $C_1,C_2,C_3$ according to the type of the configuration word. It is straightforward to verify that each $C_i$ is an FO-definable language.

We can now define the language $\Lbase$. The basic idea is that we want $\Lbase$ to be $(C_1\#C_2\#C_3\#)^*$, but in order to avoid unnecessary bookkeeping later in the proof, we do not want to care about the endpoints being $C_1$ and $C_3$. Let us also drop the last $\#$ which is useless as a separator, and assume that $C_1$ appears at least once, just for the sake of simplifying the final expression. This gives for $\Lbase$ the expression: 
{\small $$(\varepsilon+C_3\#+C_2\#C_3\#)(C_1\#C_2\#C_3\#)^*(C_1+C_1\#C_2+C_1\#C_2\#C_3).$$}
%
Notice that $\Lbase$ cannot verify that the sequence is an actual run of $M$, since it just controls that the immediate neighbourhood of the reading head is valid, and that the types succeed each other according to the 1-2-3 cycle. The tape can be arbitrarily changed from one configuration word to the next.

\subsection{The alphabet $A$}

We now define another alphabet $\Aamb$ (\emph{amb} for ambiguous), consisting of some unordered pairs of distinct letters from $\Abase$. An unordered pair $\{a,b\}$ is in $\Aamb$ if $a$ can be replaced with $b$ in the encodings of two successive configurations of $M$ of the same length (with surrounding letters changing as well). Thus, let $\Aamb$ be the following set of unordered pairs (we note the ``predecessor'' element first to facilitate the reading):
{\small
\begin{itemize}
    \item $\{a_\delta,a\}$, $a\in\Gamma$, $\delta\in\Delta$
    \item $\{a,a^{\delta'}\}$, $a\in\Gamma$, $\delta'\in\Delta$
    \item $\{a^{\delta'},[q.a]\}$, $\delta'=(\_,\_,q,\_,\_)\in\Delta$
    \item $\{\add,[q.a]\}$, $\delta=(\_,\_,p,a,d)\in\Delta$, $\delta'=(p,\_,q,\_,-d)\in\Delta$ 
    \item $\{[p.a],b_\delta\}$, $\delta=(p,a,\_,b,\_)\in\Delta$ 
    \item $\{[p.a],b_\delta^{\delta'}\}$, $\delta=(p,a,q,b,d)\in\Delta$, $\delta'=(q,\_,\_,\_,-d)\in\Delta$
\end{itemize}
}
\noindent where $\_$ stands for an arbitrary element, and $-d$ is the direction opposite to $d$.

Notice that all letters of $\Aamb$ have a clear ``predecessor'' element: even the possible ambiguity regarding letters $\add$ are resolved thanks to the type constraint on transitions of $M$. For readability, we will use the notation $\duo{a}{b}$ instead of $\{a,b\}$, where the upper letter is the predecessor element. 

We can now define the alphabet $A=\Abase\cup \Aamb$, partially ordered by $a<_A b$ if $a\in \Abase, b\in \Aamb, a\in b$. The order $\leq_A$ is the reflexive closure of $<_A$.
For now we use the  general formalism of ordered alphabet for simplicity. We will later describe in \Cref{rem:power_undec} how the construction is easily modified to fit in the powerset alphabet framework.

\subsection{Superposing configuration words}
We will see that thanks to the definition of the alphabet $\Aamb$, two distinct configurations can be ``superposed'', i.e. can be written simultaneously with letters of $A$ including letters of $\Aamb$, if and only if one follows from the other by a valid transition of $M$.

\begin{lem}\label{lem:merge}
If $u_1,u_2\in C$ encode two successive configurations of the same length, then there exists $v\in A^*$ such that $u_1\leq_A v$ and $u_2\leq_A v$.
\end{lem}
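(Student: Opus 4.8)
The plan is to construct the witness word $v$ explicitly, position by position, from the encodings of $u_1$ and $u_2$. Write $u_1 = w_1 \cdots w_n$ and $u_2 = w'_1 \cdots w'_n$ where each $w_i, w'_i \in \Abase$. Since $u_1$ and $u_2$ encode successive configurations of $M$, there is a unique transition $\delta \in \Delta$ applied to $u_1$ to obtain $u_2$. Say $\delta = (p, a, q, b, d)$ is applied when the head is at position $j$, so $w_j = [p.a]$. The plan is to set $v[i] = w_i$ whenever $w_i = w'_i$, and otherwise put in position $i$ the pair $\duo{w_i}{w'_i} \in \Aamb$. The key point to verify is that for every position $i$ where $w_i \neq w'_i$, the unordered pair $\{w_i, w'_i\}$ is actually one of the pairs listed in the definition of $\Aamb$, with $w_i$ (coming from $u_1$) being the predecessor element.

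The core of the argument is a case analysis on the positions affected by the transition $\delta = (p,a,q,b,d)$. I would treat, say, the case $d = \rightarrow$ (the case $d = \leftarrow$ being symmetric). In $u_1$, the relevant window around the head looks like $\cdots\, (w_{j-1})\, [p.a]\, (a_{j+1})^{\delta''}\, \cdots$ or $\cdots (w_{j-1}) [p.a] \add' \cdots$ (the precise shape depending on whether the head just arrived from that side). In $u_2$, after applying $\delta$, position $j$ now holds $b_\delta$ (the head has written $b$ and moved right), position $j+1$ now holds $[q.\,\cdot\,]$, and position $j-1$ (together with positions further out) is unchanged. Going through each changed position: position $j$ changes from $[p.a]$ to $b_\delta$, which is the pair $\{[p.a],b_\delta\}$ in the list (using $\delta = (p,a,\_,b,\_)$); position $j+1$ changes from $a_{j+1}^{\delta''}$ (where $\delta''$ was the transition $M$ was about to take, i.e. $\delta'' = (\_,\_,q,\_,\_)$) to $[q.a_{j+1}]$, giving the pair $\{a^{\delta'}, [q.a]\}$; and any position that was carrying a $\delta$- or $\delta'$-decoration in $u_1$ but is plain in $u_2$ (or vice versa) matches one of the pairs $\{a_\delta, a\}$ or $\{a, a^{\delta'}\}$. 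One must also handle the $\add$ case, where the head bounces back: here position $j$ changes from $\add[\delta']$ to $[q.a]$, matching the pair $\{\add, [q.a]\}$ with the type constraint on $M$'s transitions resolving the ambiguity about which element is the predecessor — this is exactly why the $1$-$2$-$3$ type cycle was imposed.

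With this construction, $v \in A^*$ and $|v| = n$, and by definition of $\leq_A$ (where $a \leq_A \duo{a}{b}$ for $a \in b$) we get $u_1 \leq_A v$: in each position $i$, either $v[i] = w_i = u_1[i]$, or $v[i] = \duo{w_i}{w'_i}$ and $w_i \in \{w_i, w'_i\}$ so $u_1[i] = w_i <_A v[i]$. Symmetrically $u_2 \leq_A v$, since $w'_i \in \{w_i, w'_i\}$ as well — note the order within the pair does not matter for the $\leq_A$ relation, only for the bookkeeping notation. This completes the proof.

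The main obstacle I anticipate is purely organizational rather than conceptual: being careful and exhaustive in the case analysis of which letters change under a transition and checking that \emph{every} such change is covered by one of the six families defining $\Aamb$. The subtle sub-cases are the ones near the head — especially the bounce-back case with $\add$, where one must invoke the type partition of $Q$ to argue that the decomposition of the pair into predecessor/successor is unambiguous — and the boundary behaviour when the head is at the very edge of the configuration word. Everything else (the positions far from the head are simply unchanged, and the $\leq_A$ verification is immediate from the definition) is routine.
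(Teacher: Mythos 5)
Your construction is exactly the paper's: take $v$ to be $u_1$ and $u_2$ superposed letterwise, using the pair $\duo{w_i}{w'_i}\in\Aamb$ wherever they differ, with the observation that $\Aamb$ was defined precisely to contain every pair of letters that can disagree between two successive configuration encodings. The paper dispatches this in two lines with a worked example, whereas you spell out the case analysis around the head (including the bounce-back $\add$ case); both are correct and the same argument.
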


\begin{proof}

It suffices to take the letters in $v$ to be the union of letters in $u_1,u_2$ when these\linebreak[4]letters differ.
For instance if $u_1=aab^{\delta'}[p.a]c_\delta c$ and $u_2=aa^{\delta''}[q.b]d_{\delta'}cc$, then
$v=$\linebreak[4]$a\duo{a}{a^{\delta''}}\duo{b^{\delta'}}{[q.b]}\duo{[p.a]}{d_{\delta'}}\duo{c_\delta}{c}c$.
\end{proof}

\begin{lem}\label{lem:succ}
Let $u_1,u_2\in C$, and assume that there exists $v\in A^*$ satisfying $u_1\leq_A v$ and $u_2\leq_A v$.
Then either $u_1=u_2$, or one is the successor configuration of the other.
\end{lem}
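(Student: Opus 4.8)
The plan is to argue directly about the letters of $v$ and exploit the fact that every letter of $\Aamb$ has a unique ``predecessor'' element and a unique ``successor'' element, as enforced by the $1$-$2$-$3$ type discipline on transitions. First I would observe that since $u_1\leq_A v$ and $u_2\leq_A v$ forces $|u_1|=|u_2|=|v|$, so we may compare the words position by position. For each position $i$, we have $u_1[i]\leq_A v[i]$ and $u_2[i]\leq_A v[i]$; by definition of $\leq_A$, either $v[i]=u_1[i]=u_2[i]\in\Abase$, or $v[i]\in\Aamb$, in which case $u_1[i]$ and $u_2[i]$ are the two elements of the pair $v[i]$ (possibly equal, if both happen to be, say, the predecessor element). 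So at every position the pair $\{u_1[i],u_2[i]\}$ is either a singleton in $\Abase$ or exactly the unordered pair $v[i]\in\Aamb$.

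Next I would split on whether $u_1=u_2$. If they agree everywhere we are done, so assume there is at least one position where they differ; at every such position $v[i]\in\Aamb$ and $\{u_1[i],u_2[i]\}$ is that pair. Now I would use the structure of the six families of pairs in $\Aamb$ to show that, consistently along the word, one of $u_1,u_2$ plays the ``predecessor'' role and the other the ``successor'' role. The key point is that the head letter $[q.a]$ occurs exactly once in each configuration word (by the definition of $C$), and in the $\Aamb$-pairs a head letter is always the successor element (in $\duo{a^{\delta'}}{[q.a]}$, $\duo{\add}{[q.a]}$) or the predecessor element (in $\duo{[p.a]}{b_\delta}$, $\duo{[p.a]}{b_\delta^{\delta'}}$), never paired with another head letter. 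So at the position where the two head letters sit — they need not be at the same position — the local pattern in $v$ determines which of $u_1,u_2$ is ``before'' the other: if $u_1[i]=[q.a]$ and $u_2[i]\neq u_1[i]$, then $v[i]\in\{\duo{a^{\delta'}}{[q.a]},\duo{\add}{[q.a]}\}$, forcing $u_2$ to be the predecessor; and symmetrically. Having fixed (WLOG) that $u_2$ is the successor of $u_1$, I would then read off the transition $\delta$ applied: the letter $[p.a]$ in $u_1$ together with its neighbours $a_\delta^{?}$ or $a^{\delta'}$ pins down $\delta=(p,a,q,b,d)\in\Delta$, and then checking that $u_2$ is exactly $\delta$ applied to $u_1$ is a position-by-position verification: positions far from the head are unchanged (the only pairs there are $\duo{a_\delta}{a}$, $\duo{a}{a^{\delta'}}$ with the $\Abase$-element matching on the unchanged side), the old head position in $u_2$ carries $b_\delta$ or $b_\delta^{\delta'}$ with $b$ the written symbol, and the new head position carries $[q.\cdot]$ with the state consistent with $\delta$. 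The coherence conditions built into the definition of $C$ (the $\delta,\delta'$ labels agreeing with the tape contents) guarantee all these match up, so $u_2$ is indeed the successor configuration of $u_1$.

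The main obstacle I anticipate is the bookkeeping around the head position: the two head letters of $u_1$ and $u_2$ lie at adjacent positions (they differ by one, depending on the direction $d$ of $\delta$), and one must carefully track which of the (at most) two or three positions around the head carry $\Aamb$-letters and verify that no other combination of $\Aamb$-pairs is globally consistent. In particular one must rule out the degenerate possibility that $u_1$ and $u_2$ differ but neither is the successor of the other — this is exactly where the uniqueness of the predecessor decomposition of each $\Aamb$-letter, together with the observation that a single $v$ cannot simultaneously support two ``incompatible'' superpositions, does the work. I would phrase this as: the assignment ``$u_j[i]$ is the predecessor element of $v[i]$'' must be the same index $j$ at the head position and hence (by propagating through the $\duo{a_\delta}{a}$ and $\duo{a}{a^{\delta'}}$ pairs, whose $\Abase$-element sides are forced) at every differing position, yielding a single well-defined direction of the transition.
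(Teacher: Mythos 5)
Your proposal follows essentially the same route as the paper's proof: locate the reading heads of $u_1$ and $u_2$, use the fact that $\Aamb$ contains no pair of two head letters (and no pair of two plain $\Gamma$-letters) to force $u_1=u_2$ when the heads coincide, and otherwise exploit the fixed predecessor/successor orientation of each $\Aamb$-pair, guaranteed by the $1$-$2$-$3$ type discipline, to orient the superposition consistently and verify the successor relation position by position. One small slip: when $u_1[i]=[q.a]$ and $u_2[i]\neq u_1[i]$, the pair $v[i]$ could just as well be $\duo{[q.a]}{b_\delta}$ or $\duo{[q.a]}{b_\delta^{\delta'}}$, making $u_2$ the \emph{successor} rather than the predecessor, but your ``and symmetrically'' covers this case and the argument goes through either way.
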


\begin{proof}

Let $[p.a]$ and $[q.b]$ be the reading heads in $u_1$ and $u_2$, in positions $i$ and $j$ respectively.

If $i=j$, let us consider the letter $v[i]$, we know that $[p.a]\leq_A v[i]$ and $[q.b]\leq_A v[i]$. Since no letter of the form $\{[p.a],[q.b]\}$ exists in $\Aamb$, we must have $[p.a]=[q.b]$. Let $\delta'=(p,a,p',a',d)$ be the transition of $M$ from $[p.a]$. Let $\lambda=-1$ if $d=\leftarrow$ and $\lambda=1$ if $d=\rightarrow$. By definition of $C$, we must have letters $b_1,b_2$ such that $u_1[i+\lambda]=b_1^{\delta'}$ and $u_2[i+\lambda]=b_2^{\delta'}$, with possible additional $\delta$ subscripts. As before, there is no letter $\{b_1^{\delta'},b_2^{\delta'}\}$ in $\Aamb$, even with optional additional $\delta$ subscripts, so $u_1[i+\lambda]=u_2[i+\lambda]$. Finally, either both $u_1[i-\lambda]$ and $u_2[i-\lambda]$ are letters from $\Gamma$ (if the $\delta$ subscript is present $u_1[i+\lambda]=u_2[i+\lambda]$), or are letters with a $\delta$ subscript. In both cases, again by definition of $\Aamb$, we must have $u_1[i-\lambda]=u_2[i-\lambda]$. The rest of the words $u_1,u_2$ outside of these three positions $\{i_1,i,i+1\}$ are forced to be letters of $\Gamma$ by definition of $C$, so again they cannot vary between $u_1$ and $u_2$, since $\Aamb$ does not contain letters $\{a,b\}$ with $a,b\in\Gamma$. We can conclude that $u_1=u_2$.
\medskip

Consider now the case where $i\neq j$. Assume without loss of generality that $p$ is of type $1$ (no type plays a particular role). Let $\delta_1$ be the transition from $[p.a]$, of type $1\to 2$. Let us called \emph{enriched letter} a letter of the form $a_\delta$, $a^{\delta'}$, or $a_\delta^{\delta'}$. By definition of $\Aamb$, both $u_2[i]$ and $u_1[j]$ must be enriched letters. By definition of $C$, it means $u_1[j]$ is just next to $u_1[i]$ where the reading head is, say without loss of generality $j=i+1$. 
So we have $v[i]=\{[p.a],u_2[i]\}\in\Aamb$, and as we saw, $u_2[i]$ is either predecessor or successor to $[p.a]$ in this case.  Let us assume for now that $u_2[i]$ is successor to $[p.a]$. This means it has a $\delta_1$ subscript. Since $u_2\in C$, this forces the state $q$ to be of type $2$, the target type of $\delta_1$. Consider now the enriched letter $u_1[j]$, which is such that $\{[q.b],u_1[j]\}\in\Aamb$. Either $u_1[j]$ has a superscript $\delta'$ with target $q$, or a subscript $\delta_2$ with source $p$. This latter case is not possible, as from the fact that $u_1\in C$, it would force $p$ to be of type $3$, the target type of $\delta_2$.
So we have indeed that both $u_1[i]$ and $u_1[j]$ are the predecessors of $u_2[i]$ and $u_2[j]$ in the letters $v[i], v[j]$ of $\Aamb$ respectively. It is straightforward to verify that this implies that $u_1$ is the predecessor configuration of $u_2$: the only discrepancies allowed between $u_1$ and $u_2$ outside of positions $i,j$ are of the form $\{a_\delta,a\}$ and $\{a,a^{\delta'}\}$ and do not influence the underlying letter of $\Gamma$. Similarly, in the other case, where $u_2[i]$ is predecessor to $[p.a]$ in the letter $v[i]$, we obtain that $u_1$ is the successor configuration to $u_2$. 
\end{proof}

\begin{lem}\label{lem:only2}
It is impossible to have three distinct words $u_1,u_2,u_3\in C$ and $v\in A^*$ such that for all $i\in\{1,2,3\}$, $u_i\leq_A v$.
\end{lem}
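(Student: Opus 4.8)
The plan is to argue by contradiction using \Cref{lem:succ}. Suppose $u_1, u_2, u_3 \in C$ are pairwise distinct with a common upper bound $v \in A^*$. Since each pair $(u_i, u_j)$ also has $v$ as a common upper bound, \Cref{lem:succ} applies to each pair: for every $i \neq j$, either $u_i = u_j$ (excluded by distinctness) or one of $u_i, u_j$ is the successor configuration of the other. So the three configurations $u_1, u_2, u_3$ are pairwise in the ``successor'' relation.

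First I would observe that the successor relation on configurations of a deterministic TM is a partial function: each configuration has at most one successor (determinism) and, since configuration words by \Cref{rem:Cencoding} only encode configurations having a predecessor, within $C$ each also has at most one predecessor in the relevant sense. Among the three pairs, consider the type structure: by construction each transition of $M$ sends type $i$ to its successor type, so if $u_j$ is the successor of $u_i$ then $\type(u_j)$ is the successor type of $\type(u_i)$. Writing an arrow $u_i \to u_j$ when $u_j$ is the successor of $u_i$, the three pairwise relations give an orientation of the complete graph on $\{u_1,u_2,u_3\}$, i.e.\ either a directed $3$-cycle or a transitive tournament.

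Next I would rule out both cases. A directed $3$-cycle $u_1 \to u_2 \to u_3 \to u_1$ would force the type sequence to be $t, t{+}1, t{+}2, t$ cyclically, which is consistent type-wise, but then $u_1$ would be its own successor after three steps; more directly, in a transitive tournament (say $u_1 \to u_2 \to u_3$ together with the edge between $u_1$ and $u_3$) the edge between $u_1$ and $u_3$ must be $u_1 \to u_3$ by type considerations (type of $u_3$ is type of $u_1$ plus $2$, not plus $1$), contradicting that $u_3$ is the \emph{immediate} successor of $u_1$ while also being the immediate successor of $u_2 \neq u_1$: determinism of the predecessor relation within $C$ then gives $u_1 = u_2$. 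For the $3$-cycle case, $u_1 \to u_2 \to u_3 \to u_1$ makes $u_1$ the successor of $u_3$ and also... here I would instead note that $u_1$ is the predecessor of both $u_2$ (from $u_1 \to u_2$) — no wait, I would use: $u_3$ has successor $u_1$ and $u_2$ has successor $u_3$, so the successor of the successor of $u_2$ is $u_1$; iterating, the three-fold successor of $u_1$ is $u_1$, but the type changes by $+3 \equiv 0$, which is consistent, so this is not immediately absurd — the cleaner contradiction is that in the cycle, $u_2$ is the successor of $u_1$ and $u_2$ is also the predecessor of $u_3$ which is the predecessor of $u_1$, so $u_1$ is the successor of $u_3$; then $u_1$ and $u_3$ have the successor relation one way, and by type, $\type(u_1)$ is the successor type of $\type(u_3)$, fine — so I need length: \Cref{lem:succ}'s hypothesis came from \Cref{lem:merge}-style superposition which requires equal lengths, and successive configurations of $M$ have equal tape length, so all three $u_i$ have the same length; this is automatic and not the obstacle.

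The main obstacle is the careful bookkeeping in the $3$-cycle case: I expect the cleanest route is to apply \Cref{lem:succ}'s proof idea directly at the level of positions rather than reasoning abstractly about the successor relation. Concretely, let $i_1, i_2, i_3$ be the head positions in $u_1, u_2, u_3$. Looking at the letter $v[i_1]$, both $[p_1.a_1] \le_A v[i_1]$ and the corresponding letters of $u_2, u_3$ at position $i_1$ must be $\le_A v[i_1]$; since $\Aamb$ contains no pair of two head-letters, $v[i_1] \in \Aamb$ can ``dominate'' at most one head-letter besides being an enriched letter, so at most one of $u_2, u_3$ can have its head at a position distinct from $i_1$ in a way compatible with $v$ — pushing this through for all three head positions forces two of the heads to coincide, and then the argument of \Cref{lem:succ} (equal-head case) shows those two configuration words are equal, the contradiction. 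I would present the proof in this position-based style, as it reuses machinery already established and avoids a long detour through tournament combinatorics.
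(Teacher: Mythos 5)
Your opening move is the same as the paper's: apply \Cref{lem:succ} to each of the three pairs to conclude that any two of $u_1,u_2,u_3$ encode consecutive configurations of $M$. From there the paper finishes in one line, and it is precisely the line you never quite write down: consecutive configurations have head positions differing by \emph{exactly} one cell, so if $i_1,i_2,i_3$ are the three head positions, they would have to be three integers that are pairwise at distance $1$ (equivalently, the head would move $0$ or $2$ cells between $u_1$ and $u_3$ while also moving exactly $1$), which is impossible. That arithmetic observation is the entire content of the remaining step; no tournament combinatorics, type bookkeeping, or letter-level analysis is needed once pairwise consecutiveness is in hand.

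The concrete gap is in your final, position-based argument. The claim that, because $\Aamb$ contains no pair of two head-letters, ``at most one of $u_2,u_3$ can have its head at a position distinct from $i_1$'' does not follow: if $v[i_1]$ is a pair $\{[p_1.a_1],X\}$ with $X$ an enriched non-head letter, then \emph{both} $u_2[i_1]$ and $u_3[i_1]$ may equal $X$, so both heads may sit elsewhere. Hence the letter argument does not ``force two of the heads to coincide''; what it actually yields (via the fact that enriched letters occur only adjacent to the head) is again pairwise adjacency of $i_1,i_2,i_3$, and you still need the arithmetic contradiction above to finish. (Two heads coinciding \emph{is} a valid alternative endpoint -- \Cref{lem:succ}'s equal-head case then gives equality of two of the words -- but it is a consequence of the distance constraint, not of the domination argument you give.) As a secondary point, your type analysis of the transitive tournament is stated backwards: since $\type(u_3)=\type(u_1)+2$, the types \emph{rule out} the edge $u_1\to u_3$ and force the $3$-cycle orientation, which is then killed by the same head-displacement parity. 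I would recommend replacing everything after the appeal to \Cref{lem:succ} by the two-sentence head-position argument.
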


\begin{proof}
By \Cref{lem:succ}, any pair from $\{u_1,u_2,u_3\}$ must encode two consecutive configurations of $M$.
However, since the reading head must move at each step, from $u_1$ to $u_2$ and from $u_2$ to $u_3$, this means the reading head moves either $0$ or $2$ positions between $u_1$ and $u_3$, which yields a contradiction.
\end{proof}

\subsection{The language $\LM$}

We finally define $\LM$ to be the monotone closure of $\Lbase$ with respect to $\leq_A$, so that $\LM$ can contain letters from $\Aamb$.

By Lemma \ref{lem:clA}, since $\LM$ is the monotone closure of a regular language, it is regular (and monotone).

As a side remark, we can observe the following:
\begin{rem}
$\LM$ is FO-definable.
Since it is not crucial to the following, we only give here a rough intuition on why $\LM$ is FO-definable. The language $K$ from Section \ref{sec:K} can be seen as an abstraction of $\LM$, with $a,b,c$ playing the role of $C_1,C_2,C_3$ respectively. In this light, and since $C_1,C_2,C_3$ are all FO-definable, we can use the fact that the language $K$ is FO-definable as well, by \Cref{lem:KFO}, to obtain an FO-formula for $\LM$. We also need \Cref{lem:only2} to guarantee that the equivalent of the letter $\top$ from $K$ never appears.
\end{rem}

We will now prove in the next sections that $\LM$ is $\FOp$-definable if and only if $M$ is mortal, using \Cref{cor:EF}.


\subsection{$M$ not mortal $\implies$ $\LM$ not $\FOp$-definable}\label{sec:impl}

Let $n\in\N$, we aim to build $(u,v)\in \LM\times \overline{\LM}$ such that $u\preceq_n v$.

There is a configuration from which $M$ has a run of length $N+3$, with $N=2^{n+1}+1$.
Let $u=u_0\#u_1\#\dots\#u_N$ be an encoding of this run where each $u_i\in C$, and where we omitted the first and last configurations of the run, which may not be representable in $C$ by Remark \ref{rem:Cencoding}. Here all the $u_i$'s are of the same length, which is the size of the tape needed for this run.

By Lemma \ref{lem:merge}, for each $i\in[0,N-1]$, there exists $v_i\in \Ast$ such that $u_i\leq_A v_i$ and $u_{i+1}\leq_A v_i$.

We build $v=u_0\#v_1\#\dots \#v_{N-2}\#u_N$. Notice that $v\notin \LM$, because the types of  $u_0$ and $u_N$ forces them to be separated by $N-1 \mod 3$ configurations as in $u$, but in $v$ they are separated by $N-2\mod 3$ configurations.
\smallskip

We describe a strategy for Duplicator witnessing $u\preceq_n v$. It is a simple adaptation from the proof of \Cref{lem:notFOp}, so we will just sketch the idea.

Let us consider that initially, there is a pair of initial (resp. final) tokens at the beginning (resp. end) of $u,v$. That is, we choose to start by Spoiler playing his first two moves at the first and last position of $u$, answered correctly by Duplicator at the first and last position of $v$ (otherwise Spoiler easily wins in one move).
We will consider that the initial tokens are ``blue'', and the final ones are ``yellow''.
In the following, a pair of corresponding tokens in $u,v$ will be blue (resp. yellow) if they are at the same distance to the beginning (resp. end) of the word.

When Spoiler plays a token in $u_i$ (resp. $v_i$), Duplicator will look at the color of the closest token in $u_i$, (resp. $v_i$), and answer with a token of the same color, i.e. by playing in $v_i$ (resp. $u_i$) for blue, and in $v_{i-1}$ (resp. $u_{i+1}$) for yellow. Of course, the same strategy applies to tokens played on $\#$ positions, and an arbitrary choice can be made for positions to both colors.

This strategy preserves the following invariant: after $k$ rounds, the number of $\#$ between the last blue token and the first yellow token on the same word ($u$ or $v$) is at least $2^{n-k}$. This invariant guarantees that Duplicator wins the $n$-round game, since this gap will never be empty.

\subsection{$M$ mortal $\implies$ $\LM$ $\FOp$-definable }

Let $M$ be a mortal $TM$, and $n$ be the length of a maximal run of $M$, starting from any configuration.

We will show that $\LM$ is $\FOp$-definable, by giving a strategy for Spoiler in $EF^+_{f(n)}(u,v)$ for any $(u,v)\in \LM\times \overline{\LM}$, where the number of rounds $f(n)$ depends only on $n$, and not on $u,v$.
\medskip

Let $(u,v)\in \LM\times\overline{\LM}$.
Without loss of generality we can assume that $u\in \Lbase$. This is because there exists $u'\in \Lbase$ with $u'\leq_A u$, and we can consider the pair $(u',v)$ instead of $(u,v)$. Indeed, if Spoiler wins on $(u',v)$, then the same strategy is winning on $(u,v)$, where his winning condition only gets easier.

Thus we can write $u=u_0\#u_1\#\dots \# u_N$, where each $u_i$ is in $C$.
Let us also write $v=v_0\#v_1\#\dots\# v_T$, where each $v_i$ does not contain $\#$. Let us emphasize that no assumption is made on $N$ and $T$, they can be any integers.

We will now describe a strategy for Spoiler in $EF^+(u,v)$, that is winning in a number $f(n)$ of rounds only depending on $n$.

\medskip

\noindent\textbf{Ruling out local inconsistencies}

As explained in the proof scheme of Section \ref{sec:sketch}, we will first show that if $v$ presents \emph{local inconsistencies} (that we define here formally via the notion of forbidden local factor), Spoiler can point them out in a bounded number of moves. 

\begin{defi}
Let us call \emph{local factor} a factor containing at most two symbols $\#$. A local factor is \emph{forbidden} if it is not a factor of any word in $\LM$.
\end{defi}
\begin{lem}\label{lem:localfactor}
If $v$ contains a forbidden local factor, Spoiler can win in a constant number of moves (at most $5$).
\end{lem}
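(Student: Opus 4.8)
The plan is to show that a forbidden local factor $w$ in $v$ can be exhibited by Spoiler using a bounded number of tokens, where the bound depends only on the structure of $\Abase$ and the finitely many "shapes" of local inconsistency, not on $|u|$ or $|v|$. First I would classify the possible reasons a local factor can be forbidden. A local factor has at most two $\#$ symbols, so it sits within at most three consecutive blocks. The reasons it can fail to embed in any word of $\LM$ are: (a) it contains a letter that is incompatible with being part of a configuration-word block — e.g. two reading-head letters $[p.a],[q.b]$ in the same block, or a missing/extra $\delta$ or $\delta'$ annotation adjacent to the head, or an enriched letter $a_\delta$ whose $\delta$ is incoherent with the surrounding tape content; (b) the block types do not follow the $1\to 2\to 3\to 1$ cycle across the $\#$ separators present in $w$; (c) a letter from $\Aamb$ appears in a position where, no matter how it is read as one of its two $\Abase$-components, a local contradiction of type (a) or (b) remains. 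In each case the witness is \emph{local}: it involves only a constant number $c$ of positions (at most around $5$, since the head, its two neighbours, and two adjacent $\#$'s suffice), and any word of $\LM$ has the property that every window of that size is consistent.

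The core argument is then the following. Suppose $v$ contains such a forbidden local factor at positions $I = \{i_1 < \dots < i_c\}$ (with $c \le 5$), witnessing the inconsistency. Spoiler plays tokens on these positions in $v$. Duplicator must answer with positions $j_1 < \dots < j_c$ in $u$ such that the $k$-position is valid: the order is preserved, and $u[j_\ell] \le_A v[i_\ell]$ for each $\ell$. Since $u \in \Lbase$, the letters of $u$ at the answering positions are all from $\Abase$ (no $\Aamb$ letters occur in $u$), and the factor of $u$ spanned by $j_1,\dots,j_c$ is a genuine factor of a word in $\LM$, hence it is \emph{not} forbidden. The key step is to check that the validity constraint $u[j_\ell]\le_A v[i_\ell]$ forces $u[j_\ell]$ to "inherit" exactly the offending feature of $v[i_\ell]$: for a letter of $\Abase$, $a \le_A b$ with $b \in \Abase$ forces $a = b$; for $b = \duo{x}{y}\in\Aamb$, $a\le_A b$ forces $a \in \{x,y\}$, and the case analysis of $\Aamb$ (as in \Cref{lem:succ}) shows that the relevant feature — reading-head letter, enrichment annotation, underlying $\Gamma$-symbol, or block type — is pinned down in $u$ as well. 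Therefore the factor of $u$ between $j_1$ and $j_c$ is itself forbidden, contradicting $u \in \Lbase$. So Duplicator cannot produce a valid answer, and Spoiler wins within $c \le 5$ moves.

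The main obstacle is the exhaustiveness of the case analysis in step one: one must be sure that \emph{every} way a local factor can fail to embed into $\LM$ is captured by a constant-size witness, including the subtle cases where an $\Aamb$-letter allows a "local repair" under one reading but not the other. Here one leans on \Cref{lem:succ} and \Cref{lem:only2}: since two superposed configuration words must be $M$-consecutive (so any ambiguous letter has a forced predecessor/successor role determined by the type constraints on $\Delta$), an $\Aamb$-letter in $v$ cannot resolve a genuine type-cycle violation or a head/annotation mismatch — whichever reading is chosen, the constraint propagated back to $u$ is the same up to the harmless discrepancies $\{a_\delta,a\}$ and $\{a,a^{\delta'}\}$, which never change the underlying $\Gamma$-letter or the type. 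Once this is established, the bound $5$ is a matter of counting positions in the largest such witness (head, its two tape-neighbours, plus the two surrounding $\#$ symbols needed to pin down block types), and the lemma follows.
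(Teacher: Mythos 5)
There is a genuine gap in the core step of your argument. You have Spoiler place tokens on the $c\le 5$ positions of the forbidden local factor in $v$, and you claim that order-preservation plus the letter constraint $u[j_\ell]\leq_A v[i_\ell]$ forces the factor of $u$ between $j_1$ and $j_c$ to be forbidden as well. But forbiddenness of a local factor is a property of a \emph{contiguous} window: it depends on adjacency (e.g.\ the enriched letters must sit immediately next to the head) and on what does \emph{not} occur between the chosen positions (no intervening $\#$, no second reading head, all intervening letters in $\Gamma$). The $\EF$ game only requires Duplicator to preserve the order $\leq$ and dominate the letters; it does not preserve adjacency or ``same block''. So Duplicator \emph{can} produce a valid answer to your batch of five tokens by scattering them: two head letters that share a block in $v$ can be answered by two head letters in different blocks of $u$; two blocks separated by one $\#$ in $v$ can be answered by blocks separated by two $\#$'s in $u$ (which repairs a type-cycle violation); a letter adjacent to the head in $v$ can be answered by a letter far from any head in $u$. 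Your concluding sentence ``Duplicator cannot produce a valid answer'' is therefore false as stated. To close the gap Spoiler must spend extra rounds \emph{challenging} Duplicator's answer (playing an intervening $\#$ or non-$\Gamma$ letter of $u$ that Duplicator cannot match inside the tight window of $v$), and the move count must then be redone to check it stays constant; your proposal neither identifies this need nor budgets for it.

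This is also where your route diverges from the paper's. The paper does not argue directly on the game: it exhibits an $\FOp$ formula $\philoc$ of bounded quantifier rank defining ``no forbidden local factor'', notes that every word of $\LM$ (hence $u$) satisfies it while $v$ does not, and invokes \Cref{thm:EF} to conclude that Spoiler wins in a number of rounds bounded by the rank. The universal quantifiers of $\philoc$ and of the auxiliary formula $\phitrans$ (``all other letters between the neighbouring $\#$ symbols are in $\cl\Gamma$'', ``$x-1$ and $x+1$ are the actual neighbours'') are exactly the challenge moves your direct argument is missing, and the rank-to-rounds correspondence does the bookkeeping for you. Your first and third paragraphs (the classification of forbidden local factors and the use of \Cref{lem:succ}/\Cref{lem:only2} to handle $\Aamb$-letters) are sound and would be needed in either approach; the fix is to either convert that classification into a bounded-rank $\FOp$ formula as the paper does, or to explicitly add and count Spoiler's challenge rounds in the direct game argument.
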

\begin{proof} This can be seen by verifying that the language of words containing no forbidden local factors is $\FOp$-definable, with a formula $\philoc$ using at most $5$ nested quantifiers. We sketch here how such a formula $\philoc$ can be built.

Notice that formulas of $\FOp$ can use the letter $\#$ either positively or negatively, since it is not comparable with any other letter.
If $(p,a)\in Q\times\Gamma$, we define $S(p,a):=\{(\sigma,\tau)\in (\Abase)^2\mid \sigma[p.a]\tau\in C\}$ as the possible neighbourhoods of $[p.a]$ in the base alphabet.
Let us define an $\FOp$-formula $\phitrans(x)$ with free variable $x$, to be a formula that verifies that the maximal $\#$-free factor containing position $x$ is in $\cl{C}$, and that this is witnessed by the reading head at position $x$. This formula will verify that $x$ contains some $[p.a]$, that the immediate neighbourhood of $x$ is compatible with $[p.a]$, via a formula $\bigvee_{(\sigma,\tau)\in S(p,a)} \cl \sigma(x-1)\wedge\cl\tau(x+1)$, and that all other letters (not on positions $\{x-1,x,x+1\}$) between the neighbouring $\#$ symbols are in $\cl\Gamma$.

We now give a description of the formula $\philoc$. The formula will state that for all positions $x<y$ of successive $\#$ symbols (i.e. with no $\#$ between them), there must be positions $i_1<x<i_2<y<i_3$, with only two $\#$ symbols in $[i_1,i_3]$, such that $\phitrans(i_1)\wedge\phitrans(i_2)\wedge \phitrans(i_3)$. Additionally, the types of the states in $i_1,i_2,i_3$, must be respectively either $1$-$2$-$3$, $2$-$3$-$1$, or $3$-$1$-$2$.
\end{proof}

From now on, we will therefore assume that $v$ does not contain forbidden local factors.
\medskip

\noindent\textbf{Finding long-term inconsistencies}

We will see how the only remaining cause for $v$ not belonging to $\LM$ is what we called \emph{long-term inconsistencies} in the proof scheme of Section \ref{sec:sketch}. We will formalize this with the notion of non-coherent maximal ambiguous factor.
\medskip

Let us start with an auxiliary definition.
\begin{defi}
A factor $v_i$ of $v$ is \emph{compatible} with type $j\in\{1,2,3\}$ if there exists $u'\in C_j$ with $u'\leq_A v_i$.
The \emph{set-type} of $v_i$ is $\{j\mid v_i\text{ is compatible with }j\}$.
\end{defi}
By \Cref{lem:only2}, each $v_i$ is compatible with at most $2$ distinct types in $\{1,2,3\}$. If $v_i$ is compatible with $2$ types, then one is the predecessor (resp. successor) of the other in the 1-2-3 cycle order, and we call it the \emph{first type} (resp. \emph{second type}) of $v_i$.
We will consider that $v_0$ (resp. $v_T$) is only compatible with $\type(u_0)$ (resp. $\type(u_N)$). Indeed, if Duplicator matches $v_0$ to a word $u_i$ with $i\neq 0$, Spoiler can win the game in the next round, by choosing a $\#$ position before $u_i$ (and same argument for $v_T$).

\begin{defi}
A factor of the form $v_i\# v_{i+1}\#\dots\#v_{j}$ of $v$ is called \emph{ambiguous} if each $v_k$ for $k=i,\dots, j$ is compatible with two types, and the set-types succeed each other in the cycle order $\{1,2\}\to\{2,3\}\to\{3,1\}$. For instance if the set-type of $v_i$ is $\{3,1\}$, then $v_{i+1}$ must have set-type $\{1,2\}$.
An ambiguous factor is \emph{maximal} if it is not contained in a strictly larger ambiguous factor.
\end{defi}

\begin{defi}
A factor $v_i$ of $v$ is called an \emph{anchor} if either $i=0,i=T$ or if $v_{i-1}\#v_i\#v_{i+1}$ is not ambiguous.
\end{defi}

If $v_i$ is an anchor, we can uniquely define its \emph{anchor type}. It is simply its type if $i=0$ or $T$, and otherwise since $v_{i-1}\#v_i\#v_{i+1}$ is not ambiguous, we define the anchor type of $v_i$ to be the only possible type for $v_i$ that does not create an incoherence with its two neighbours. Notice that such a type exists, since we assumed $v$ does not contain forbidden local factors.
\begin{exa}
Assume $v_5$ has set-type $\{2,3\}$, $v_6$ has set-type $\{3,1\}$, and $v_7$ has set-type $\{2,3\}$. Then $v_6$ is an anchor, and its anchor type is $1$. The type $3$ is indeed impossible for $v_6$, since its successor type $1$ is not in the set-type of $v_7$.
\end{exa}

Notice that if Duplicator maps an anchor $v_i$ to a word $u_j$ such that $\type(u_j)$ is not the anchor type of $v_i$, then Spoiler can win in at most $5$ moves, by pointing to a contradiction with the immediate neighbourhood of $v_i$.

\begin{defi}
Let $v_i\# v_{i+1}\#\dots\#v_{j}$ be a maximal ambiguous factor. It is called \emph{coherent} if $v_{i-1}\# v_{i}\#\dots\#v_{j+1}$ is an infix of $\LM$.
\end{defi}

Such a coherent factor will be witnessed by the anchor types of $v_{i-1}$ and $v_{j+1}$: this means that $v_i\# v_{i+1}\#\dots\#v_{j}$ is coherent if the anchor types at $v_{i-1}$, $v_{j+1}$ are either both concatenable with the first type of both $v_i$, $v_j$, or are both concatenable with their second type. Here by ``concatenable'', we mean to respect the $1$-$2$-$3$ order, for instance type $3$ must be followed by type $1$.

\begin{exa}
Let $w=v_i\# v_{i+1}\#\dots\#v_{j}$ be a maximal ambiguous factor, where $v_i$ has set-type $\{1,2\}$ and $v_j$ has set-type $\{2,3\}$. Notice that this implies $j\equiv i+1\mod 3$. Assume $v_{i-1}$ has anchor type $1$, so it is concatenable with the second type of $v_i$. This means that for $w$ to be coherent, we need $v_{j+1}$ to have anchor type $1$, in order to be concatenable with the second type of $v_j$ as well.
\end{exa}

\begin{lem}\label{lem:notcoh}
$v$ contains a maximal ambiguous factor $w$ that is not coherent.
\end{lem}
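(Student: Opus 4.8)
The plan is to argue by contradiction: assuming that every maximal ambiguous factor of $v$ is coherent, I will show $v\in\LM$, which contradicts $v\in\overline{\LM}$. Since $\LM=\cl{\Lbase}$, it is enough to produce a word $w=w_0\#w_1\#\dots\#w_T\in\Lbase$ with $w\leq_A v$, where each $w_k\in C$ satisfies $w_k\leq_A v_k$. Such a $w$ exists as soon as I can pick, for every index $k$, a type $t_k\in\{1,2,3\}$ lying in the set-type of $v_k$ and such that $(t_0,\dots,t_T)$ follows the cycle $1\to 2\to 3\to 1$: one then takes $w_k$ to be any word of $C_{t_k}$ below $v_k$ (which exists by definition of compatibility, using the boundary convention $t_0=\type(u_0)$, $t_T=\type(u_N)$), and any such properly cycling sequence of length at least $3$ is the type sequence of some word of $\Lbase$; the short degenerate cases $T\leq 1$ are checked directly.

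First I would assign the types. Every anchor $v_k$ — in particular $v_0$ and $v_T$ — receives its anchor type, which is well defined precisely because $v$ contains no forbidden local factor. Every non-anchor $v_k$ has a two-element set-type and therefore lies in a unique maximal ambiguous factor $v_a\#\dots\#v_b$; by maximality one has $1\leq a\leq b\leq T-1$ and both $v_{a-1}$ and $v_{b+1}$ are anchors. Coherence of this factor says exactly that the anchor types of $v_{a-1}$ and $v_{b+1}$ are either both concatenable with the first types, or both concatenable with the second types, of $v_a$ and $v_b$; accordingly I set $t_k$ to be, for all $k\in[a,b]$, the first type of $v_k$ in the former case and the second type of $v_k$ in the latter.

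Next I would verify that $(t_k)_{0\leq k\leq T}$ is properly cycling. Inside an ambiguous factor this holds because the set-types run through $\{1,2\}\to\{2,3\}\to\{3,1\}$ in cycle order, so their first types run through $1\to 2\to 3$ and their second types through $2\to 3\to 1$. At the two ends of an ambiguous factor the chosen convention agrees with the neighbouring anchor types by the very definition of coherence. Finally, if $v_i$ and $v_{i+1}$ are adjacent anchors with no ambiguous factor between them, then the absence of a forbidden local factor forces $t_{i+1}$ to be the successor type of $t_i$: otherwise $v_i\#v_{i+1}$ could not be a factor of any word of $\LM$, since every word of $\Lbase$ has strictly cycling types, and $v_i\#v_{i+1}$ would then be forbidden. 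Together with $t_0=\type(u_0)$, $t_T=\type(u_N)$ and $u\in\Lbase$ this makes $(t_k)$ realizable in $\Lbase$, so $w\leq_A v$ with $w\in\Lbase$, whence $v\in\LM$ — the desired contradiction.

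The main obstacle is the gluing step: showing that the locally forced anchor types, together with the binary ``first vs.\ second'' choice available inside each ambiguous factor, assemble into a single globally consistent typing. This is exactly where both hypotheses are used in an essential way. Coherence of every maximal ambiguous factor guarantees that its two ends can be satisfied simultaneously rather than forcing incompatible choices; and the absence of forbidden local factors is what makes every anchor type well defined in the first place and what forces consecutive anchor types to respect the $1$-$2$-$3$ cycle. Some additional care is needed with the pinned endpoints $v_0,v_T$, but the degenerate short cases reduce to elementary checks, and for $T\geq 2$ a properly cycling type sequence is automatically realizable in $\Lbase$.
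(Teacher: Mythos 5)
Your proposal follows essentially the same route as the paper's proof: assume every maximal ambiguous factor is coherent, give each anchor its anchor type and each maximal ambiguous factor a uniform first/second choice fixed by the anchors at its extremities, and read off from this typing a word of $\Lbase$ below $v$, contradicting $v\notin\LM$. The paper's own argument is a short sketch of exactly this.

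One justification you give is, however, false as stated: if $v_i$ and $v_{i+1}$ are adjacent anchors whose anchor types are not successive, it does \emph{not} follow that $v_i\#v_{i+1}$ is a forbidden local factor. For instance, with set-types $\{3\},\{1,2\},\{2,3\},\{1\}$ for $v_{i-1},v_i,v_{i+1},v_{i+2}$, the anchor types of $v_i$ and $v_{i+1}$ are $1$ and $3$, yet every factor containing at most two $\#$ symbols extends to a word of $\LM$. A short analysis shows that the only way two adjacent anchors can have non-successive anchor types is when both have two-element set-types succeeding each other in the cycle order; but then $v_i\#v_{i+1}$ is itself a maximal ambiguous factor (of length two, made of anchors), and it is the coherence of this factor --- available from your standing hypothesis --- not the absence of forbidden local factors, that forces the two anchor types to be successive. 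Since your construction only invokes coherence for maximal ambiguous factors containing a non-anchor, these length-one or length-two all-anchor factors fall outside your case analysis; relatedly, the endpoints $v_a,v_b$ of a larger maximal ambiguous factor are themselves anchors and receive two type assignments in your scheme, which you never reconcile (coherence again shows they agree). Both points are repairable with the tools you already have, and the paper's sketch glosses over the same spot, but as written the ``otherwise $v_i\#v_{i+1}$ would be forbidden'' step is the weak link of your argument.
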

\begin{proof}
Assume that all maximal ambiguous factors of $v$ are coherent. Since $v$ does not contain forbidden local factors, we have that the anchor types of two consecutive anchors follow the $1$-$2$-$3$ order. This means that the anchor types, together with the coherence of maximal ambiguous factors, give us a witness that $v\in \LM$. This witness is a word of $\Lbase$, obtained by choosing the anchor types on all anchors, and either the first type or the second type uniformly in maximal ambiguous factors, as fixed by the anchors at the extremities. Since we know that $v\notin \LM$, this is a contradiction.
\end{proof}
We are now ready to describe Spoiler's strategy.
Spoiler starts by placing two tokens delimiting a maximal ambiguous factor $w$ that is not coherent, as obtained in \Cref{lem:notcoh}. Because $w$ is not coherent, Duplicator is forced to answer with the first type for one of these tokens, and with the second type for the other: otherwise Spoiler immediately wins by exposing a local inconsistency with the anchors delimiting $w$.
\medskip

We will now show how Spoiler can win on such a factor $w$, starting with these two tokens. For this, we will introduce the notion of height of a configuration word, and the integer game that will abstract the $\EF$-game on $w$. This is where we finally make use of the hypothesis that $M$ is mortal.
\medskip

\noindent\textbf{Abstracting words by integers}

Recall that from the mortality assumption, we have a bound $n$ on the length of runs of $M$, starting from any configuration. 

\begin{defi}
If $u\in C$ is a configuration word, its \emph{height} $h(u)$ is the length of the run starting in $u$, and not going outside of the tape specified in $u$.
\end{defi}

If $u\in C$, let us also define its $n$-approximation $\alpha_n(u)$ as the maximal word in $(\Abase)^{\leq n}\cdot (Q\times\Gamma)\cdot (\Abase)^{\leq n}$ that is an infix on $u$. That is, we remove letters whose distance to the reading head is bigger than $n$.

Here are a few properties of the height:
\begin{lem}\label{lem:height}
For all $u\in C$, the following hold:
\begin{itemize} 
    \item  $0\leq h(u)<n$.
    \item For all $x,y\in \Gamma^*$, we have $h(xuy)\geq h(u)$.
    \item $h(u)=h(\alpha_n(u))$.
    \item If $v\in C$ is the successor configuration of $u\in C$, then $h(v)=h(u)-1$.
\end{itemize}
\end{lem}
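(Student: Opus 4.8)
The plan is to read off all four assertions from the definition of the height, the only non-routine ingredient being the mortality bound $n$. For a configuration word $w\in C$ I would distinguish the \emph{confined run} of $w$ --- the run of $M$ from the configuration encoded by $w$, halted as soon as the head would leave the cells written in $w$, whose length is by definition $h(w)$ --- from the \emph{free run} of $w$, the unconstrained run of $M$, which has length at most $n$ by mortality. Two trivial remarks will be used throughout: in a configuration word every letter at distance at least $2$ from the reading head lies in $\Gamma$, so inserting or deleting $\Gamma$-letters far from the head keeps the word in $C$; and the confined run of $w$ is an initial segment of the free run of $w$, so $h(w)$ never exceeds the length of the free run.

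For $0\le h(u)<n$: the lower bound is immediate, and for the upper bound I would use \Cref{rem:Cencoding}, which gives $u$ a predecessor configuration $u^-$. Since $u^-$ has a successor, $M$ does not halt at $u^-$, so the free run of $u^-$ is one step to $u$ followed by the free run of $u$; by mortality its length is at most $n$, whence $h(u)$ is at most the length of the free run of $u$, which is at most $n-1$. For $h(v)=h(u)-1$ when $v\in C$ is the successor of $u\in C$: here $u$ and $v$ share the same tape window (this relation is always witnessed in the construction by a superposing letter of $\Aamb$, forcing equal length), so $M$ does not halt at $u$, its first step leads to $v$ without the head leaving that window, and the confined run of $u$ splits as that one step followed by the confined run of $v$; hence $h(u)=1+h(v)$.

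For $h(xuy)\ge h(u)$ with $x,y\in\Gamma^*$: one checks $xuy\in C$ (the padding letters are in $\Gamma$, so no coherence condition is disturbed), and the confined run of $u$ is still a run of $M$ staying inside the larger tape window of $xuy$, so $h(xuy)\ge h(u)$. For $h(u)=h(\alpha_n(u))$: writing $u=x\,\alpha_n(u)\,y$, the deleted prefix $x$ and suffix $y$ sit at distance $>n\ge 1$ from the head, hence lie in $\Gamma^*$, so $h(u)\ge h(\alpha_n(u))$ by the previous point; conversely, by the first item the confined run of $u$ has length $h(u)\le n-1$, so throughout it the head stays within $n-1$ cells of its start, hence within the cells kept by $\alpha_n(u)$, and the confined run of $u$ is literally the confined run of $\alpha_n(u)$, giving $h(\alpha_n(u))=h(u)$.

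I expect the only step requiring real care, as opposed to pure definition-unfolding, to be this last one: establishing that the head cannot leave the $n$-window $\alpha_n(u)$ during the confined run of $u$. This is exactly where the strict inequality $h(u)<n$ (rather than $\le n$) from the first item is needed, together with the minor bookkeeping that $\alpha_n(u)$ agrees with $u$ on any side where $u$ already has fewer than $n$ cells, so that ``leaving the tape'' occurs at the same place for $u$ and for $\alpha_n(u)$. A second, much smaller point to make explicit is the claim used for $h(v)=h(u)-1$ that the transition from $u$ to $v$ stays inside a common tape window; this follows from taking ``successor configuration'' in the superposable (equal-length) sense that the rest of the construction uses.
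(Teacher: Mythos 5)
Your proof is correct and follows essentially the same route as the paper's (much terser) argument: mortality plus \Cref{rem:Cencoding} for the strict bound, insensitivity of the confined run to $\Gamma$-padding, locality of the run (length $<n$) for the $n$-approximation, and splitting off the first step for the successor item. The extra care you flag — the boundary bookkeeping for $\alpha_n(u)$ and reading ``successor configuration'' in the equal-length, superposable sense — matches how the lemma is used later, so no gap.
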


\begin{proof}
The first item is a consequence of the fact that $M$ is mortal with bound $n$. Notice that the inequalities are strict because of Remark \ref{rem:Cencoding}: words from $C$ must have a predecessor and a successor configuration. The second item comes from the fact that the run of length $h(u)$ starting in $u$ is still possible when adding a context $x,y$, which is not affected by this run.
The third item uses the fact that a run can only visit the $n$-approximation of $u$, so the context outside of $\alpha_n(u)$ does not affect the height $h(u)$.
The fourth item is a basic consequence of the definition of the height. 
\end{proof}

\begin{cor}\label{cor:heightFO}
The height of a configuration word $u$ is an $\FOp$-definable property, i.e for all $k\in\N$ there exists an $\FOp$ formula $h_k$ such that $h_k$ accepts a configuration word $u\in C$ if and only if $h(u)=k$.
\end{cor}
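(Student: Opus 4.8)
The plan is to exploit \Cref{lem:height}: the height of a configuration word depends only on its $n$-approximation, since $h(u)=h(\alpha_n(u))$, and every height lies in $[0,n-1]$. As $\alpha_n(u)$ is an infix of $u$ of length at most $2n+1$ over the finite alphabet $\Abase$, the set $P=\{\alpha_n(u)\mid u\in C\}$ of possible $n$-approximations is finite, and it partitions as $P=P_0\sqcup\dots\sqcup P_{n-1}$ according to height. Hence, on configuration words, ``$h(u)=k$'' is equivalent to ``$\alpha_n(u)\in P_k$'', a condition that constrains only a bounded window around the unique reading head of $u$.

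First I would build, for each fixed $w=w_0w_1\cdots w_{\ell-1}\in P$ (with $w_j\in Q\times\Gamma$ the head letter), an $\FOp$ formula $\mathrm{pat}_w$ such that for all $u\in C$ one has $u\models\mathrm{pat}_w$ if and only if $\alpha_n(u)=w$. One records whether $w$ is left-truncated ($j<n$) and/or right-truncated ($\ell-1-j<n$). The formula existentially quantifies the head position $x$ together with the positions of the other window letters; it pins each label with the atomic predicate $\cl{w_i}(\cdot)$; it asserts that consecutive window positions are consecutive in $u$ via subformulas $\forall z.\,(z\le y_i\vee y_{i+1}\le z)$; on a truncated side it forces the window to reach an endpoint, e.g.\ $\forall z.\,y_0\le z$; and on a non-truncated side it forces every position farther from the head to carry a letter of $\Gamma$, via $\forall z.\,\bigl(z\le y_{j+n}\vee\bigvee_{a\in\Gamma}\cl a(z)\bigr)$. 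Equality and inequality of positions, as well as these ``implications'', are expressed without negation using $\le$ and $<$, so $\mathrm{pat}_w$ is a genuine $\FOp$ formula. Correctness on $C$ is routine: for $u\in C$ the unique head letter pins $x$, and the remaining window letters together with their positions are then exactly those recorded by $\alpha_n(u)$, so $\mathrm{pat}_w$ holds iff $\alpha_n(u)=w$.

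I would then set $h_k:=\bigvee_{w\in P_k}\mathrm{pat}_w$ for $0\le k<n$, and $h_k:=\exists x.\,x<x$ (a sentence satisfied by no word) for $k\ge n$. By construction each $h_k$ uses only the $\FOp$ syntax, so it is an $\FOp$ formula; and for $u\in C$ we obtain $u\models h_k$ iff $\alpha_n(u)\in P_k$ iff $h(\alpha_n(u))=k$ iff $h(u)=k$, the last two equivalences by \Cref{lem:height}. This yields the statement.

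I do not expect a genuine obstacle. The two points to handle with care are: (i) checking that the atomic predicates $\cl a$ suffice to identify $\Abase$-letters on words of $C$ --- this holds because the letters of $\Abase$ are pairwise incomparable and minimal in $\leq_A$, so on a word over $\Abase$ the formula $\cl a(x)$ is true exactly when $x$ carries $a$; and (ii) the bookkeeping of the (at most four) truncation cases, so that $\mathrm{pat}_w$ pins $\alpha_n(u)$ down exactly rather than merely constraining it. Both are mechanical.
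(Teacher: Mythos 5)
Your proposal is correct and follows essentially the same route as the paper: by \Cref{lem:height} the height depends only on the $n$-approximation $\alpha_n(u)$, so $h_k$ is a finite disjunction (a lookup table) over the finitely many possible $n$-approximations of height $k$, each detected by a window-matching $\FOp$ formula. The paper leaves the window formulas implicit; your explicit construction of the patterns (successor encoding via $\leq$, truncation cases, incomparability of $\Abase$-letters) just fills in the mechanical details.
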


\begin{proof}
By \Cref{lem:height}, the formula $h_k$ can simply use a lookup table to verify that $\alpha_n(u)$ is of height $k$, using a finite disjunction listing possibilities for $\alpha_n(u)$ being of height $k$. When evaluated on $A^*$, the formula $h_k$ will accept the monotone closure of configuration words of height $k$.
\end{proof}

\begin{rem}
We use here the fact that computation is done locally around the reading head to obtain \Cref{cor:heightFO}. This seems to make Turing Machines more suited to this reduction than e.g. cellular automata, where computation is done in parallel on the whole tape.
\end{rem}

Thanks to the height abstraction, we will show that we can focus on playing a special kind of abstracted EF-game. 
\medskip

\noindent\textbf{The integer game}

The idea is to abstract a configuration word $u_i\in C$ by its height $h(u_i)$. If $u'$ is the predecessor configuration of $u''$, and $u',u''\leq_A v$, we will abstract the word $v$ by $\duo{h(u')}{h(u'')}$.

Let $\Sigmabase=[0,n]$ and $\Sigmaamb=\{\duo{i}{i-1}\mid 1\leq i\leq n\}$. Let $\Sigma=\Sigmabase\cup\Sigmaamb$, ordered by $i\leq_\Sigma\duo{i}{i-1}$ and $i-1\leq_\Sigma\duo{i}{i-1}$ for all $\duo{i}{i-1}\in\Sigmaamb$.

We define the $n$-integer game as follows:
It is played on an arena $(U,V)$ with $U\in (\Sigmabase)^*$ and $V\in(\Sigmaamb)^*$.
If we note $i$ (resp. $j$) the first (resp. last) letter of $U$, then the first (resp. last) letter of $V$ is $\duo{i}{i-1}$ (resp. $\duo{j+1}{j}$).

The rest of the rules is very close to those of $EF^+(U,V)$: in each round, Spoiler plays a token in $U$ or $V$, Duplicator has to answer with a token in the other word, while maintaining the order between tokens, and the constraint that the label of a token in $U$ is $\leq_\Sigma$-smaller than the label of its counterpart in $V$. We add an additional \emph{neighbouring constraint} for Duplicator: consecutive tokens in one word must be related to consecutive tokens in the other, and in this case, if two tokens of $V$ are in consecutive positions labelled $\duo{i}{i-1}\duo{j}{j-1}$, the corresponding tokens in $U$ must be either labelled $i,j$ or $i-1,j-1$. A mix $i,j-1$ or $i-1,j$ is not allowed.

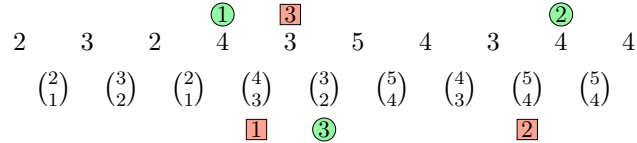
\begin{figure}[H]
\begin{center}
\scalebox{.9}{
\begin{tikzpicture}[eve/.style={circle,draw=black, fill=evegreen!60, inner sep=0,minimum size=.34cm}, adam/.style={draw,fill=adamred!60,inner sep=0, minimum width=.3cm,minimum height=.3cm]}]

\def\s{1}  
\foreach \k/\i in {0/1,1/2,2/1,3/3,4/2,5/4,6/3,7/2,8/3,9/3}
{
\node at (\k*\s,0) {$\pgfmathadd{\i}{1}\pgfmathprintnumber{\pgfmathresult}$};
}
\foreach \k/\i in {0/1,1/2,2/1,3/3,4/2,5/4,6/3,7/4,8/4}
{

\node at (\k*\s+.5*\s,-.7) {{\large $\duo{\pgfmathadd{\i}{1}\pgfmathprintnumber{\pgfmathresult}}{\i}$}};
}
;

\node[adam] at (4*\s,.4) {\small $3$};
\node[eve] at (4*\s+.5*\s,-1.3) {\small $3$};

\node[adam] at (5*\s+2.5*\s,-1.3) {\small $2$};
\node[eve] at (5*\s+3*\s,.4) {\small $2$};

\node[adam] at (3*\s+.5*\s,-1.3) {\small $1$};
\node[eve] at (3*\s,.4) {\small $1$};

\end{tikzpicture}
}
\caption{A position of the integer game.}\label{fig:intgame}
\end{center}
\end{figure}

\begin{lem}\label{lem:ngame}
For all $n\in\N$, Spoiler can win any $n$-integer game in $2n$ rounds.
\end{lem}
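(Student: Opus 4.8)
The plan is to describe an explicit winning strategy for Spoiler; the only place where the bound $n$ on heights (i.e.\ the mortality hypothesis, abstracted) is used is the final, combinatorial step.

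\textbf{Pinning the ends.} Spoiler begins by playing the leftmost letter of $U$ and then the rightmost letter of $V$. As these positions are extremal in their words, order preservation forces Duplicator to reply with the leftmost letter of $V$ and the rightmost letter of $U$ — any other answer is refuted at once by Spoiler playing a position outside the one just answered. By the arena convention the left pair now reads $i$ against $\duo{i}{i-1}$, i.e.\ the $U$-label is the \emph{upper} component of the $V$-label (call such a matched pair \emph{high}), and the right pair reads $j$ against $\duo{j+1}{j}$, i.e.\ the $U$-label is the \emph{lower} component (\emph{low}); since $k\neq k-1$, no pair is both. Degenerate arenas ($|U|\le 2$, $|V|\le 2$, or $n=1$, in which case $\Sigmaamb$ is a singleton, $U$ starts with $1$ and ends with $0$, and Spoiler exploits an adjacent descent in $U$) are handled directly in at most two moves, so assume henceforth the words are long enough. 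After two rounds Spoiler owns a high token $L$ at the far left and a low token $R$ at the far right, with nothing placed between them in either word.

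\textbf{The descent.} Spoiler maintains the invariant that he owns two tokens $L$, $R$, consecutive among the placed tokens, with $L$ high, $R$ low, $L$ strictly left of $R$ in both words, and the zone strictly between them empty in both words; write $\duo{a}{a-1}$, $\duo{b}{b-1}$ for their $V$-labels. In each round Spoiler plays one new token in this zone — picked appropriately, for instance at a $V$-position realising an extremal height inside the zone, or immediately next to $L$ or $R$ — so that \emph{either} Duplicator has no legal reply \emph{or} Duplicator's reply is forced, by the combination of order preservation, the $\leq_\Sigma$-condition on labels and the neighbouring/top--bottom constraint, to be again a high token of height strictly less than $a$, or again a low token of height strictly greater than $b$. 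Using the new token in place of $L$ or $R$ restores the invariant with $a$ strictly decreased or $b$ strictly increased. Since by \Cref{lem:height} every height occurring in $U$ or $V$ is an integer in the finite interval $\{0,\dots,n\}$, the value $a$ can strictly decrease at most $n-1$ times and $b$ can strictly increase at most $n-1$ times, so the descent lasts at most $2(n-1)$ rounds. When it can no longer continue, the zone has been squeezed until $L$ and $R$ occupy adjacent positions; the neighbouring constraint then demands that their $U$-labels be the pair $(a,b)$ or the pair $(a-1,b-1)$, i.e.\ both high or both low, contradicting that $L$ is high and $R$ is low. Hence Duplicator is stuck, and Spoiler has won within $2+2(n-1)=2n$ rounds.

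\textbf{Where the difficulty lies.} The crux is the ``forced'' clause of the descent: that Spoiler can always place his probe so that \emph{every} legal answer of Duplicator is a high or low token whose height lies strictly inside the current window determined by $a$ and $b$. This is precisely where mortality is indispensable — for the language $K$, or for a non-mortal machine, this window is not bounded, the descent need not terminate, and indeed there Duplicator survives logarithmically many rounds (cf.\ the proof of \Cref{lem:notFOp}). Establishing the claim should be a matter of choosing the probe carefully and running a short case analysis of Duplicator's replies under the three constraints; here \Cref{lem:only2}, which ensures that a single letter of $A$ dominates at most two configuration words, is what keeps the window a single interval and prevents a spurious ``third height'' from intruding.
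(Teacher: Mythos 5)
Your overall framing (pin the endpoints to create one ``high'' and one ``low'' pair, then exploit the boundedness of heights) is the right starting point, but the proof has a genuine gap at exactly the place you flag as the crux, and unfortunately the claim you leave unproven is not just unestablished --- it is false as stated. You claim that in each round Spoiler can choose a single probe in the zone so that \emph{every} legal reply of Duplicator either loses outright or is a high pair with value strictly below $a$ or a low pair with value strictly above $b$. Consider the $2$-integer arena $U=2\,2\,2\,0\,0\,0$ and $V=\duo21\duo21\duo21\duo10\duo10\duo10$ (a legal arena: $U$ starts with $2$ and ends with $0$, $V$ starts with $\duo21$ and ends with $\duo10$). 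After your two pinning moves, $a=2$ and $b=1$, and the $U$-zone contains only letters $2$ and $0$. A high pair with value $<2$ or a low pair with upper component $>1$ would both require a $U$-position labelled $1$, which does not exist; so every new pair Duplicator can form in the zone is high with value $2$ or low with upper component $1$, and no single probe makes progress on your potential or strands Duplicator. (Spoiler still wins here, but only by playing the \emph{two} adjacent boundary positions $U[3],U[4]$ labelled $2,0$, which jointly violate the neighbouring constraint; neither move individually advances $(a,b)$.) A second, independent problem is the termination step: exhausting the potential ($a$ minimal, $b$ maximal) is not the same as the zone between $L$ and $R$ becoming empty, so the sentence ``when it can no longer continue, the zone has been squeezed until $L$ and $R$ occupy adjacent positions'' does not follow, and even if the progress claim held, the final losing probe would cost one round beyond your $2(n-1)$ budget.

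The paper's proof avoids a one-move-per-unit potential entirely and instead inducts on $n$, spending \emph{two} moves per level: Spoiler first plays the last occurrence of the maximal value $n+1$ in $U$ (forcing a high match onto some $\duo{n+1}{n}$ in $V$), then the first occurrence of $\duo{n+1}{n}$ in the remaining part of $V$ (which must now be matched low, since $n+1$ no longer occurs in the relevant part of $U$); the neighbouring constraint then shows that the induced prefix/suffix pair is a legitimate $n$-integer arena, and the induction closes with $2(n+1)$ total rounds. The example above is precisely the kind of arena where the two-move-per-level structure is necessary, so if you want to salvage a descent-style argument you would need to allow progress steps costing two probes (one in each word) and re-do the counting accordingly --- at which point you have essentially reconstructed the paper's induction.
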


\begin{proof}
Let $(u,v)$ be an arena for an $n$-integer game.
We proceed by induction on $n$. 

For $n=1$, the constraints on the game forces $u\in 1(0+1)^*0$ and $v\in\duo{1}{0}^*$.

We can have Spoiler play on the last occurrence of $1$ in $u$, and on the successor position labelled $0$. Duplicator cannot respond to these two moves while respecting the neighbouring constraint, so Spoiler wins in $2$ moves.

Assume now that for some $n\geq 1$, Spoiler wins any $n$-integer game in $2n$ moves, and consider an $(n+1)$-integer game arena $(u,v)$.
If the letters $n+1$ and $\duo{n+1}{n}$ do not appear in $u,v$ respectively, then Spoiler can win in $2n$ moves by induction hypothesis.

If the letter $n+1$ does not appear in $u$, then let $y$ be the first position labelled $\duo{n+1}{n}$ in $v$. By definition of the integer game $y$ cannot be the first position of $v$, otherwise $u$ should start with $n+1$. We will choose position $y$ in $v$ for the first move of Spoiler, let $x$ be the position in $u$ answered by Duplicator, we have $u[x]=n$. We can assume that $x$ is not the first position of $u$, otherwise Spoiler can win in the next move.
If Spoiler were to play $x-1$ in $u$, with $u[x-1]=i$, by the neighbouring constraint Duplicator would be forced to answer $y-1$ in $v$, with label $\duo{i+1}{i}$. This shows that the words $u[..x-1]$ and $v[..y-1]$ form a correct $n$-integer arena, as the integer $n+1$ is not present anymore, and all other constraints are respected. Therefore, Spoiler can win by playing $2n$ moves in these prefixes. This gives a total of $2n+1$ moves in the original $(n+1)$-integer game.

Finally, if the letter $n+1$ does appear in $u$, Spoiler starts by playing the position $x$ in $u$ corresponding to the last occurrence of $n+1$ in $u$. Duplicator must answer a position $y$ labelled $\duo{n+1}{n}$. Notice that neither $x$ nor $y$ can be a last position, so $u[x+1]$ and $v[y+1]$ are well-defined. As before, using the neighbouring constraint, we know that if $i=u[x+1]$, then $v[y+1]=\duo{i}{i-1}$.
Therefore, the words $u[x+1..]$ and $v[y+1..]$ form an $(n+1)$-integer game arena, and moreover the letter $n+1$ does not appear in $u[x+1..]$ (by choice of $x$). Using the precedent case, we know that Spoiler can win from there in $2n+1$ moves, playing only on $u[x+1..]$ and $v[y+1..]$. This gives a total of $2n+2$ moves in the original $(n+1)$-integer game, thereby completing the induction proof. 
\end{proof}

\begin{rem}\label{rem:revgame} Lemma \ref{lem:ngame} still holds if the definition of $n$-integer game is generalized to include the symmetric case where, if we note $i,j$ the first and last letters of $U$ respectively, $V$ starts with $\duo{i+1}{i}$ and ends with $\duo{j}{j-1}$. Indeed, it suffices to consider the mirrored images of $U$ and $V$ to show that Spoiler wins in the same amount of rounds.
\end{rem}

We can now see how to use \Cref{lem:ngame} (together with \Cref{rem:revgame}) to conclude the proof.

\medskip

\noindent\textbf{Lifting the strategy to the original $\EF$-game}

Let us go back to the $\EF$-game on $u,v$, where Spoiler has placed two tokens delimiting a non-coherent maximal ambiguous factor $w$ in $v$.

Spoiler can now play only between these existing tokens, and import the strategy from the integer game, by abstracting each word $u_i$ by its height and each word $v_j$ by $\duo{h(u')}{h(u'')}$, where $u',u''\in C$ are such that $u'\leq_A v_j$, $u''\leq_A v_j$, and  $h(u')=1+h(u'')$. Each factor of $u_i,v_j$ delimited by $\#$ corresponds to a single position in the abstracted integer game. Spoiler can for instance mimic a move of the integer game by playing in the first position of the corresponding factor in $u_i$ or $v_j$, i.e. just after a $\#$-labelled position.

\begin{lem}
If Duplicator does not comply with the rules of the integer game, then Spoiler can win in at most $\log n$ rounds.
\end{lem}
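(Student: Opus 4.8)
The plan is to go through the ways in which Duplicator's reply to one of Spoiler's mimicking moves can fail to induce a legal move of the integer game, and to show that each such failure can be turned into a win in at most $\log n$ further rounds. Recall that Spoiler mimics one integer-game move by placing a single token just after a $\#$ inside some factor $u_i$ of $u$ or $v_j$ of $v$; since $\#$ is $\le_A$-incomparable with every other letter and the factors contain no $\#$, Duplicator's answer also lies strictly inside a factor of the other word, and this induces a candidate integer-game move: the height $h(u_i)$, or the abstraction $\duo{a}{a-1}$ of $v_j$ (well defined because, by \Cref{lem:only2} and \Cref{lem:succ}, $v_j$ is compatible with at most two types, which are consecutive, and their two readings in $C$ are successor configurations, hence have heights differing by $1$ by \Cref{lem:height}). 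Thus Duplicator fails to comply exactly when this induced move violates the position order, the $\le_\Sigma$-label constraint, or the neighbouring constraint, and we handle these in turn.

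A position-order violation is already an illegal $\EF$-move, so Spoiler has already won. The $\le_\Sigma$-violation and part of the neighbouring violation both reduce to the following calibration gadget. Suppose Duplicator's pairing claims that a factor $F$ of $u$ — a configuration word of some type $t$ and height $h$ — matches a factor $F'$ of $v$ under one of its (at most two) readings $u' \le_A F'$ with $u' \in C$, but $h \ne h(u')$; this is exactly the $\le_\Sigma$-violation, where $h \notin \{a,a-1\}$. Spoiler first plays the reading head of $F$; by order-consistency and the shape of $\Aamb$ around the head letters $[q.a]$, Duplicator must answer a head-like position of $F'$, so the states agree (otherwise Spoiler wins at once by probing the immediate neighbourhood of the head) and $\type(u') = t$. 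By \Cref{lem:height}, the height of a configuration word is determined by its $n$-approximation $\alpha_n$, so $F$ and $u'$ already differ at some position within distance $n$ of the head; Spoiler now runs the interval-halving strategy of \Cref{lem:notFOp} inside the window $\alpha_n$ (of width at most $2n+1$) to force the matched tokens of $F$ and $F'$ to the same head-distance, where the two letters are $\le_A$-incomparable. As this window has length $O(n)$, Spoiler wins in at most $\log n$ further rounds.

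For the neighbouring constraint, the adjacency part is cheapest: if two of Spoiler's tokens placed in adjacent factors are answered in non-adjacent factors, or conversely, then the number of $\#$ symbols separating the two tokens differs between $u$ and $v$, so by playing one more $\#$ between them Spoiler leaves Duplicator with no legal reply, a win in $O(1)$ rounds. For the no-mixed-reading part, suppose consecutive $v$-tokens with abstractions $\duo{i}{i-1}$ and $\duo{j}{j-1}$ are answered in adjacent $u$-factors of heights $i$ and $j-1$ respectively. By uniqueness of the two readings of a $v$-factor (\Cref{lem:succ}), this means Duplicator reads the first $v$-factor at its first type and the second $v$-factor at its second type, which forces the two adjacent $u$-factors to have types $t$ and $t+2$; but adjacent factors of $u\in\Lbase$ have consecutive types $t$ and $t+1$, a contradiction. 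Spoiler exposes it by applying the calibration gadget to the reading heads of these two $u$-factors, and again wins within $\log n$ rounds.

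I expect the main obstacle to be making the calibration gadget precise: one has to check that the order constraints relative to the already-placed head tokens genuinely confine Duplicator's responses to the $O(n)$-wide window $\alpha_n$ around the head, so that the halving argument of \Cref{lem:notFOp} terminates in $\log n$ rounds rather than in a number of rounds depending on the full length of the factors. This relies on the fact that the factors of $u$ are honest configuration words of $C$, so that their head neighbourhoods and $\alpha_n$-windows are rigidly structured, while the prior discharge of forbidden local factors (\Cref{lem:localfactor}) is what lets us assume the same rigidity on the relevant factors of $v$.
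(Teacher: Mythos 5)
Your proposal is correct and follows essentially the same route as the paper: a case split into the height/label violation (exposed by a dichotomy strategy in the radius-$n$ window around the reading head, using that $h$ is determined by $\alpha_n$, in $\log n$ rounds), the adjacency violation (exposed by $\#$-counting in $O(1)$ moves), and the mixed-reading case dismissed by the clash between consecutive types in $u\in\Lbase$ and the cyclic set-types inside an ambiguous factor of $v$. The only cosmetic difference is in the last case, where the paper observes that the mixing is outright impossible (so no extra rounds are charged), whereas you phrase the same type contradiction contrapositively and charge it to the calibration gadget — this does not affect the $\log n$ bound.
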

\begin{proof}
Assume $u_i$ is matched to $v_j$, but there is no $u'\leq_A v_j$ such that $h(u_i)=h(u')$. It means that $\alpha_n(u)$ is not compatible with $v$, and this can be exploited by Spoiler using $\log n$ rounds (with a dichotomy strategy, or $n$ rounds with a naive strategy). Thus by Lemma \ref{lem:height}, Spoiler can enforce the basic rule of the integer game, stating that if integer $t$ is matched to $\duo{s+1}{s}$, then $t=s+1$ or $t=s$. Using the correspondence between $\EF$-games and $\FOp$-definability, this property can also be seen via \Cref{cor:heightFO}.

If neighbours are matched with non-neighbours, then it suffices for Spoiler to point the two $\#$ positions between the non-neighbours, that cannot be matched in the other word, so he wins in $2$ moves.
We show that the rest of the neighbourhood rule is also enforced. Assume $u_i\#u_{i+1}$ is matched to $v_j\#v_{j+1}$.
Assume $\type(u_i)$ is the first (resp. second) type of $v_j$ while $\type(u_{i+1})$ is the second (resp. first) type of $v_{j+1}$. By definition of $\LM$, $\type(u_{i+1})$ must be the successor type of $\type(u_i)$, for instance without loss of generality, $\type(u_i)=1$ and $\type(u_{i+1})=2$.
Then, the set-type of $v_i$ is $\{1,2\}$ (resp. $\{3,1\}$) and the set-type of $v_{j+1}$ is $\{1,2\}$ (resp. $\{2,3\}$). This contradicts the fact that $v_i\#v_{i+1}$ is part of an ambiguous factor, as set-types should follow each other in the order $\{1,2\}$-$\{2,3\}$-$\{3,1\}$.
\end{proof}

Combining these arguments and by \Cref{lem:ngame} and \Cref{rem:revgame}, we obtain that following this strategy, Spoiler will win in at most $f(n)=2+2n+\log n+5$ rounds, by either winning the related $n$-integer game, or exploiting local inconsistencies if Duplicator fails to comply to the rules of the $n$-integer games. The $+5$ comes from \Cref{lem:localfactor} about forbidden local factors.

Using \Cref{cor:EF}, we obtain that $\LM$ is $\FOp$-definable, with a formula of quantifier rank at most $f(n)$.

\begin{rem}\label{rem:power_undec}
The alphabet $A$ can be turned into a powerset alphabet, by adding all subsets of $\Abase$ absent from $\Aamb$, rejecting any word containing $\emptyset$ but no new non-empty subset, and accepting any word containing a new non-empty subset. This shows that this undecidability result still holds in the special case of powerset alphabets.
\end{rem}

This concludes the proof of \Cref{thm:undec}, up to the proof of \Cref{lem:ngame} which is done in the next section.

\subsection*{Conclusion}
We believe this paper gives an example of fruitful interaction between automata theory and model theory. Indeed, a classical result of model theory, the failure of Lyndon's theorem on finite structures, has been greatly simplified by using the toolbox of regular languages. Moreover, our investigation of this question via regular languages also brings a new result: the failure of Lyndon's theorem on finite graphs. Conversely, this question coming from model theory, when considered on regular languages, yields the first (to our knowledge) natural fragment of regular languages with undecidable membership problem, and opens new techniques for proving undecidability of expressibility in positive logics. We hope that the tools developed in this paper can be further used in both fields, and that this will encourage more interactions of this form in the future.

Some questions remain open: for instance what happens for partially ordered alphabets that do not embed the one used in our counter-example language? Can a simpler counter-example be found, or is this structure with two layers of three elements each necessary?

We made the signatures needed for our constructions explicit, in several formulations of Lyndon's theorem, and compared them to the ones used in \cite{AjtaiGurevich87,Stol95,RosenPHD}. It can be investigated whether further improvements on these signatures are possible.

We can also investigate the expressive power of fragments of $\FOp$ such as the two-variable fragment, or fragments defined by bounding the quantifier rank.

In the short term, we are interested in extending these techniques to the framework of cost functions, see \cite{Kup14,Erratum}, and to other extensions of regular languages.
\bigskip

\noindent\textbf{Acknowledgements.} I am grateful to Thomas Colcombet for bringing this topic to my attention, and in particular for asking the question $\FOp$ $\stackrel{?}{=}$ monotone FO, as well as for many interesting exchanges. Thanks also to Amina Doumane and Sam Van Gool for helpful discussions, and to the anonymous reviewers, as well as Anupam Das, Natacha Portier, and Bruno Pasqualotto Cavalar for their comments on earlier versions of this document.

\bibliography{biblio}
\bibliographystyle{alphaurl}
\end{document}